\numberwithin{equation}{section}
\theoremstyle{theorem}
\newtheorem{theorem}{Theorem}[section]
\newtheorem{lemma}[theorem]{Lemma}
\newtheorem{proposition}[theorem]{Proposition}
\newtheorem{definition}[theorem]{Definition}
\newtheorem{remark}[theorem]{Remark}
\newtheorem{example}[theorem]{Example}
\newtheorem{corollary}[theorem]{Corollary}
\def\sp{\hskip -5pt}
\def\spa{\hskip -3pt}
\def\bearray{\begin{eqnarray}}
\def\earray{\end{eqnarray}}
\def\beq{\begin{equation}}
\def\eeq{\end{equation}}
\def\b0{{\bf 0}}
\def\mpasto{\mapsto}
\def\cA{{\cal A}}
\def\cD{{\cal D}}
\def\bC{{\mathbb C}}           
\def\bN{{\mathbb N}}
\def\bR{{\mathbb R}}
\def\bZ{{\mathbb Z}} 
\def\gA{{\mathfrak A}}       
\def\gB{{\mathfrak B}}
\begin{document} 

\par
\bigskip
\large
\noindent
{\bf The classical limit of  Schr\"{o}dinger operators in the framework of  Berezin  quantization  and spontaneous symmetry breaking as emergent phenomenon}
\bigskip
\par
\rm
\normalsize

\noindent {\bf Valter Moretti$^{a}$$^1$ and  Christiaan J.F.  van de Ven$^{a}$$^{b}$$^2$}\\
\par

\noindent 
 $^a$ Department of  Mathematics, University of Trento, and INFN-TIFPA,
 via Sommarive 14, I-38123  Povo (Trento), Italy\\
$^1$ Email: valter.moretti@unitn.it\\
$^b$ Marie Sk\l odowska-Curie Fellow of Istituto Nazionale di Alta Matematica.  \\ $^2$ Email:
 christiaan.vandeven@unitn.it\\

\par

\rm\small

\hfill 03 Oct 2021

\rm\normalsize

\par
\bigskip

\noindent
\small
{\bf Abstract.}
The algebraic properties of a strict deformation quantization are analysed on the classical phase space $\bR^{2n}$.  The corresponding quantization maps enable us to take the  limit for $\hbar \to 0$ of a suitable  sequence of  algebraic
vector states induced by $\hbar$-dependent eigenvectors of several quantum models, in which the sequence converges 
to a probability measure on $\bR^{2n}$, defining a classical algebraic state. The observables are here represented in terms of  a Berezin quantization map which associates classical observables (functions on the phase space) to quantum observables (elements of $C^*$ algebras) parametrized by $\hbar$.  The existence of this classical limit is in particular proved for ground states of a wide class 
of Schr\"{o}dinger operators, where the classical limiting state is obtained  in terms of a Haar integral. The support of the classical state (a probability measure on the phase space) is included in certain orbits in $\bR^{2n}$ depending on the symmetry of the potential. In addition, since this $C^*$-algebraic approach allows for both quantum and classical theories, it is highly suitable to study the theoretical concept of spontaneous symmetry breaking (SSB) as an emergent phenomenon when passing from the quantum realm to the classical world by switching off $\hbar$. To this end, a detailed mathematical description is outlined and it is shown how this algebraic approach sheds new light on spontaneous symmetry breaking in several physical models.
\newline
\newline
{\em Keywords:} Algebraic quantum theory, deformation quantization, Schr\"{o}dinger operators, classical limit, spontaneous symmetry breaking.
\newline
\newline
{\em MSC2020:} 46L65, 53D55, 81S10. 
\normalsize
\tableofcontents

\section{Introduction}
\subsection{Main results and structure of the work}
The aim of this paper is to provide a mathematical rigorous theory regarding  the existence of the so-called {\em classical limit} of quantum systems and a certain tupe of Schr\"{o}dinger operators with compact resolvent. Simultaneously, this machinery is also exploited  to study  the occurrence of {\em spontaneous symmetry breaking} as an {\em emergent} phenomenon, arising when passing from the quantum to the classical regime.  Our approach is highly based on the theory of $C^*$-algebras rather than the (perhaps more) general approach on microlocal analysis and pseudodifferential calculus. Despite the fact that the latter approaches are very useful for controlling estimates in terms of a semiclassical parameter \cite{GriSjo,MZ}, they are less suitable for understanding emergence, like SSB in the classical limit. Instead, we will see that the concept of a continuous bundle of $C^*$-algebras is the natural way to formalize SSB as emergent phenomenon, since it encodes {\em both} quantum and classical theory by means of relatively simple algebraic relations.

Generally speaking the aforementioned issues are investigated within the framework of {\em (deformation) quantization procedures}. The basic idea is to consider a classical theory, whose observables are described by sufficiently regular functions over typically a space of phases $X$,
as the limit for $\hbar \to 0^+$ of a sequence of theories whose observables are represented by selfadjoint operators on a corresponding sequence of Hilbert spaces. Obviously, only a selection of sequences of observables, parametrized by $\hbar \geq 0$, makes physical  sense. These are sequences with a suitable continuity property reformulated in terms of $*$-algebras  (more precisely, $C^*$ algebras as explained below). Technical problems often arising in the setting of Hilbert spaces are  typically avoided in this way. In this {\em algebraic approach}, the quantum observables are given by selfadjoint elements of abstract $*$-algebras $\gA_\hbar$ of formal operators $a\in \gA_\hbar$ and the algebraic states are  linear complex-valued  functionals on such algebra
$\omega_\hbar : \gA_\hbar \to \bC$
 with the physical meaning of $\omega_\hbar(a)$ as the expectation values of the observable $a=a^*$ in the state $\omega_\hbar$. Another benefit 
arising from the use of the algebraic approach is that, differently from the Hilbert space formulation, the algebraic approach  is suitable even for classical theories. This is because  the set of functions $f$ on the space of phases  $X$ representing observables (also extending the functions to complex valuated maps) has a natural structure of {\em commutative}  $*$-algebra $\gA_0$. The algebraic states  are there  nothing but probability measures over the space of phases: $\omega_0(f) = \int_X f d\mu_\omega$.
To avoid technical problems with topologies, the family of algebras $\{\gA_\hbar\}_{\hbar\geq 0}$ is chosen to be made of more tamed $C^*$-algebras rather than $*$-algebras.  It is important to stress that this  more abstract viewpoint actually encompasses the Hilbert space formulation. It is  because the   celebrated GNS reconstruction theorem (see e.g. \cite{Lan17,Mor}) permits to recast the abstract algebraic perspective to a standard  Hilbert space framework.
The sequence of $C^*$-algebras $\{\gA_\hbar\}_{\hbar\geq 0}$, where $\gA_0$ is the algebra of classical  observables, is formally encapsulated in the structure of  {\em $C^*$-bundle} we shall review in Section
\ref{Quanproc}.

A  machinery of utmost relevance in this framework is the notion of {\em quantization map}. Its  design can be traced back to Dirac's foundational ideas on the quantum theory  and, in the modern view, it  consists of a map $Q_\hbar : \gA_0 \ni f \mapsto Q_\hbar(f) \in \gA_\hbar$ which associates classical  observables to quantum ones. Obviously the quantization map is requested to satisfy a number of conditions of various nature. For instance, one of them regards the interplay of the quantum commutator and the Poisson bracket referred to the symplectic structure of the space of phases $X$. After Dirac,  one expects that in the $\frac{1}{\hbar}[Q_\hbar(f),Q_\hbar(g)]$ tends to 
$Q_\hbar(\{f,g\})$ for $\hbar \to 0^+$. This condition stated within a suitable topological formulation is nowadays known as the  
{\em Dirac-Groenewold-Rieffel condition}. As is known, the set of all naturally expected  requirements on $Q_\hbar$ is contradictory as proved in the various versions of the {\em Groenewold-van Hove theorem} \cite{Lan17}. These no-go theorems gave rise to the birth of a number of different types of quantization maps whose different nature depends on the specific  choice of a  subset of mutually compatible requirements. The most popular quantization map  is the one attributed to Weyl $Q^W_\hbar$, whereas one of the class of the  most effective quantization maps, adopted in this work, is the so-called {\em Berezin quantization map} $Q_\hbar^B$
which is reviewed in Section \ref{WeylBerezin} \cite{Ber}. It is worth mentioning  that in the special case when dealing with a compact K\"{a}hler manifold, the theory of {\em Toeplitz quantization} has also proven its great importance  \cite{BMS94,SCHL}.

The version of the deformation quantization procedure we adopt in this work is called {\em strict deformation quantization} and has been introduced by Rieffel in \cite{Rie89}, who showed how deformation quantization makes sense in a $C^*$-algebraic context providing a precise link between Poisson (and symplectic geometry) and  non-commutative geometry \cite{Con}.
 For the purpose of this paper we focus on the physical systems proper of  quantum mechanics, although quantization theory can be profitably exploited to study  quantum field theory and quantum gravity as well \cite{Lan17}.
 
Using the aforementioned concepts we attempt to bring forward two important topics in this area of  mathematical physics:
\begin{itemize}
\item[(1)] Existence of the classical limit of a sequence of $\hbar$-indexed eigenvectors $\{\psi_\hbar\}_{\hbar>0}$ corresponding to a quantum Hamiltonian.
\item[(2)] Occurrence of Spontaneous Symmetry Breaking (SSB) as a emergent phenomenon arising in the classical limit $\hbar \to 0$.
\end{itemize}

Regarding (1), the issue is to study whether or not  the sequence of quantum algebraic states $\langle \psi_\hbar,  Q_\hbar(f)   \psi_\hbar \rangle$ tends to some
classical algebraic state $\omega_0( f)$ for any classical observable $f\in \gA_0= C_0(\bR^{2n})$, when $\hbar \to 0^+$. This issue is presented from a technical perspective
 in  Section \ref{claslim}. In Section \ref{SemiclassicalpropertiesofBerezinquantizationmaps} we study the semiclassical properties of Berezin quantization maps
$Q_\hbar^B$ which is the main technical tool of the investigation. Our main theorems concern the localization of eigenvectors (viz. Prop. \ref{localization2}), and the classical limit of a sequence of eigenvectors associated to Berezin quantizations (viz. Thm. \ref{MT2} and Thm. \ref{thm:claslimNW2}): we prove that the classical limit exists for every $f\in C_0(\bR^{2n})$ and
for a sequence of eigenvectors of  operators $Q_\hbar^B(e_h)$, where the observable $e_h$ is   constructed out of 
 the classical hamiltonian $h(q,p) = p^2+V(q)$ in $\bR^{2n}$ equipped with the standard symplectic and Poisson structure.        In Section \ref{MainB} a certain class of Schr\"{o}dinger operators $H_\hbar = -\hbar^2 \Delta + V$ is studied and the relation with quantization maps is given in terms of approximation theorems (Prop. \ref{IMPORTANT}, Prop. \ref{IMPORTANT2}, and Thm. \ref{propDIM}).
As  a matter of fact we discuss the interplay of these Sch\"{o}dinger operators and the corresponding obeservables $Q_\hbar^B(e_h)$.
 In Section \ref{MainS} we finally readapt some of the theorems of Section \ref{SemiclassicalpropertiesofBerezinquantizationmaps} and prove the localization of eigenvectors  (Prop. \ref{localizationSCH}), followed by the existence of the classical limit of a sequence of eigenvectors corresponding to Schr\"{o}dinger operators (Thm. \ref{MT12}, Thm. \ref{thm:claslimNEWSchr}) $H_\hbar$ instead of the associated Berezin observable $Q_\hbar^B(e_h)$.

The found results about the classical limit distinguish between the case where a symmetry group $G$ acts on the physical system at classical and quantum level,  or there is not such a group.  This distinction plays a central role in developing the issue (2). In case a symmetry  group $G$ exists and the considered states are {\em ground states}
of a given Hamiltonian (quantum or classical), spontaneous  symmetry breaking occurs if there are no {\em extremal} (roughly speaking {\em pure}) ground states that are symmetric under the action of $G$. It is interesting to study if the phenomenon of spontaneous  symmetry breaking  arises as an emergent phenomenon, i.e., when taking the limit $\hbar \to 0^+$.
In Section \ref{SSBEMERGENCE} we prove that it is the case for a large class of Hamiltonians (that includes, in particular,  the {\em double well system} and the {\em Mexican hat system}) and for a topological compact or finite group $G$. As a matter of fact,
we put our findings in a broader perspective  than the one of the previous sections and relate the found results to the concept of spontaneous symmetry breaking in the form of Prop. \ref{NOSSB}, Prop. \ref{WEAKSSB}, and Remark \ref{remfin}.

We point out to the reader that for quantum spin systems these topics are partially related by the famous work by Lieb \cite{Lieb}, and for Schr\"{o}dinger operators instead some results are obtained by Simon \cite{SJHE,Sim84,Sim85}. However, different approaches were used and no limit of a sequence of eigenvectors was taken. More recent studies \cite{LMV,MV,Ven2020,MV2} have shown a well-established relation between the classical limit of a sequence of eigenvectors associated to mean-field quantum spin system Hamiltonians and spontaneous symmetry breaking. In these relatively simple systems, where the Hamiltonian is always a bounded operator, the symmetry group is typically finite. In the context of Schr\"{o}dinger operators where one, e.g., considers $SO(n)$-symmetries, this seems to be a challenging,  and an open problem up to our knowledge.

The works ends with an appendix.  A number of proofs of various technical propositions and intermediate lemmata stated in the main text are included therein.

\subsection{Notations and conventions}
If $X$ is a Hausdorff locally compact space, $C_0(X)$ indicates the space of functions $f: X \to \bC$ such that, for every $\epsilon>0$ there is a compact $K_\epsilon$ such that  $|f(x)|< \epsilon$
if $x\not \in K_\epsilon$.  $C_c(X)$ and $C_c^\infty(X)$ respectively  denote  the subspace of continuous compactly supported functions $f: X \to \bC$
and the analog for smooth functions when $X$ is a smooth manifold. 

The {\em Lebesgue measure} on $\bR^n$ will be always denoted by $dx$ (or $dp$, $da$, etc). 
In the phase space $\bR^{2n}$ we often use the normalized Liouville measure $\mu_\hbar :=\frac{dpdq}{(2\pi\hbar)^n}$.
The definition of {\em Borel measure} and {\em regular} Borel measure, the statements of corresponding {\em Riesz' theorems}  are  adopted from  \cite{Rudin}.

If $Z$ is a complex  Banach space, then $\gB(Z)$ denotes the unital Banach algebra of the bounded operators $A: Z \to Z$ with respect to the standard operator norm.
If ${\cal H}$ is a complex Hilbert space $\gB_1(\cal H)$, $\gB_2(\cal H)$, $\gB_\infty(\cal H)$ respectively denote the two-sided $*$-ideals of trace class, Hilbert-Schmidt, and compact operators in  $\gB({\cal H})$. $^*$ is used to denote   the adjoint $A^*$ of an operator $A: D(A) \to {\cal H}$ with $D(A)$ dense in ${\cal H}$, and also the abstract adjoint $a^*$ of an element $a$ in a $*$-algebra $\gA$.

If $\gA$ is a $C^*$-algebra without unit, an {\em algebraic state} is a positive $(\omega(a^*a) \geq 0)$, continuous, and norm-$1$ linear functional $\omega : \gA \to \bC$. (As is known, these requirements are equivalent to positivity and normalization $\omega(1\spa1)=1$ if $\gA$ admits unit $1\spa1$ for a linear functional $\omega : \gA \to \bC$).

The {\em Fourier transform} \cite{RS1} and its inverse are respectively  defined as follows,  for $f,g$ in the {\em Schwartz space}  ${\cal S}(\bR^n)$,
$$ {\cal F}_\hbar(f)(p) := \int_{\bR^n} \sp\spa  e^{-ip\cdot x/\hbar} f(x) \frac{dx}{(2\pi \hbar)^{n/2}}\:; \qquad 
 {\cal F}_\hbar^{-1}(g)(x) :=\int_{\bR^n} \sp\spa e^{ip\cdot x/\hbar} \hat{h}(p) \frac{dp}{(2\pi \hbar)^{n/2}}\:,$$
and we also use the notation $\hat{f}_\hbar(p) := {\cal F}_\hbar(f)(p)$, $\check{g}_\hbar(x) := {\cal F}_\hbar^{-1}(g)(x)$.
We shall also take advantage of  the natural continuous linear extensions of ${\cal F}_\hbar$ and its inverse  (1) to the  space ${\cal S}'(\bR^n)$ of {\em Schwartz distributions} and (2)
to $L^2(\bR^n, dx)$. The latter is known as the {\em Fourier-Plancherel transformation} which will be denoted by $F_\hbar$ and is a unitary operator on $L^2(\bR^n, dx)$.
With our conventions, the {\em convolution theorem} in the Schwartz space reads
${\cal F}_\hbar(f*g) = (2\pi \hbar)^{n/2}{\cal F}(f)_\hbar \:,$
where the {\em convolution}  of $f$ and $g$ is defined as 
$(f*g)(x) := \int_{\bR^n} f(x-y)g(y) dy$ as usual.
The analogous  statement is valid for the said pair of extensions of ${\cal F}_\hbar$. When $\hbar=1$, we simply omit the index writing for instance ${\cal F}$, $\hat{h}$, and $F$.

\section{Quantization maps, classical limit, and all that}\label{Quantizationmapsandrelatedideas}
A quick  summary on the machinery used to formalize the idea of a quantization procedure is the main goal of this section. We shall focus in particular on the {\em Berezin quantization} map  which will be used throughout the rest of the work. An intermediate technical step concerns the more popular {\em Weyl quantization}.

\subsection{Quantization procedures}\label{Quanproc}
We start from the following technical definition of $C^*$-{\em bundle} according to \cite[Definition C 121]{Lan17}. This structure is a quite general  arena where  a  quantization procedure based on $C^*$ algebras 
is performed.

\begin{definition}\label{def:continuous algebra bundle}
A  {\bf $C^*$-bundle}\footnote{Called {\em continuous field of $C^*$-algebras} in \cite{Lan98}.} is a triple $\cA:=(I,\gA,\pi_\hbar:\gA\to \gA_\hbar)$,
 where $I$ is a locally compact Hausdorff space,
 $\gA$ is a complex $C^*$-algebra, and $\pi_\hbar:\gA\to \gA_\hbar$
 is a
surjective homorphism  of complex $C^*$-algebras for each $\hbar \in I$ such that
\begin{itemize}
\item[(i)] $\|a \|= \sup_{\hbar\in I} \|\pi_\hbar(a)\|_\hbar$,  where $\|\cdot\|$ (resp. $\|\cdot\|_\hbar$) denoting the $C^*$-norm of $\gA$ (resp. $\gA_\hbar$);
\item[(ii)] there exist an action $C_0(I) \times \gA \to \gA$ satisfying $f(\hbar)\pi_\hbar(a) = \pi_\hbar(f a)$ for any $\hbar\in I$ and $f\in C_0(I)$.
\end{itemize} 
A {\bf section} of the bundle is an element $\{a_{\hbar}\}_{\hbar\in I}$ of $\Pi_{\hbar\in I}\gA_\hbar$ for which there exists an $a\in \gA$
 such that $a_\hbar=\pi_\hbar(a)$ for each $\hbar\in I$. A {\bf $C^*$-bundle} $\cA$ is said to be {\bf continuous}, and its sections are called {\bf continuous sections},  if  it satisfies
\begin{itemize}
\item[(iii)] for $a \in \gA$, the norm function $I \ni\hbar \mapsto \|\pi_\hbar(a)\|_\hbar$ is in $C_0(I)$. 
\end{itemize}
It is not requested that the $C^*$-algebras $\gA$, $\gA_\hbar$ are unital. 
If they are iwith units $1\spa1_\gA$, $1\spa1_{\gA_\hbar}$ respectively, then   $\pi_\hbar$ is supposed to be 
unit preserving: $\pi_\hbar(1\spa1_\gA) = 1\spa1_{\gA_\hbar}$.
\end{definition}

\begin{remark}{\em Since the $\pi_\hbar$ are homomorphisms of $C^*$-algebras,  the $*$-algebra operations in $\gA$ are induced by the corresponding pointwise operations of the sections $I\ni \hbar \mapsto \pi_\hbar(a)$. Condition (ii) reinforces the linearity preservation condition permitting coefficients continuously depending on $\hbar$.}
\end{remark}

Let us  introduce the notion of 
{\em deformation quantization} of classical structures, a Poisson manifold in particular,   according to \cite[Definition 71]{Lan17}. The above $C^*$-{\em bundle} is therefore  specialized to the case where $\gA_0$ is a commutative $C^*$ algebra representing the classical structure achieved in the {\em classical limit} $\hbar \to 0^+$
from corresponding quantum structures defined in the quantum fibers $\gA_\hbar$ with $\hbar >0$.  The {\em quantization maps} act along the opposite direction, associating to a classical observable $f \in \gA_0$ (or a substructure of it) a quantum observable $Q_\hbar(f) \in \gA_\hbar$.

\begin{definition}\label{def:deformationq}
A {\bf deformation quantization}\footnote{Named  {\em continuous quantization}
of a Poisson manifold in  \cite[Definition II 1.2.5]{Lan98}.} of a Poisson manifold $(X, \{\cdot,\cdot\})$ consists of:
 \begin{itemize}
\item[(1)]  A {continuous  $C^*$-bundle} $(I, \gA, \pi_\hbar)$,
 where $I$ is a subset of $\mathbb{R}$ containing $0$ as accumulation point and $\gA_0=C_0(X)$ 
equipped with norms $||\cdot||_{\hbar}$;
\item[(2)] a dense $*$-subalgebra $\tilde{\gA}_0$ of  $C_0(X)$ closed under the action  Poisson brackets 
(so that $(\tilde{\gA}_0, \{\cdot, \cdot\})$ is a complex Poisson algebra);
\item[(3)]  a  collection of   {\bf quantization maps} $\{Q_\hbar\}_{\hbar\in I}$, namely linear maps $Q_{\hbar}:\tilde{\gA}_0  \to \gA_{\hbar}$ 
(possibly defined on $\gA_0$ itself and next restricted to $\tilde{\gA}_0$)
such that: 
\begin{enumerate}
\item[(i)] $Q_0$ is the inclusion map $\tilde{\gA}_0 \hookrightarrow \gA_0$ (and $Q_{\hbar}(1\spa1_{\gA_0})=1\spa1_{\gA_{\hbar}}$
if  $\gA_0,$ and $\gA_\hbar$ are unital for all $\hbar \in I$);
\item[(ii)] $Q_{\hbar}(\overline{f}) = Q_{\hbar}(f)^*$, where $\overline{f}(x):=\overline{f(x)}$;
\item[(iii)] for each $f\in\tilde{\gA}_0$, the assignment
$
0\mapsto f, \quad 
\hbar\mapsto Q_{\hbar}(f)$ when $\hbar \in I\setminus \{0\},
$
defines  a continuous section of $(I, \gA, \pi_\hbar)$,
\item[(iv)]  each pair $f,g\in \tilde{\gA}_0$ satisfies the {\bf Dirac-Groenewold-Rieffel condition}:
\begin{align*}
\lim_{\hbar\to 0}\left|\left|\frac{i}{\hbar}[Q_{\hbar}(f),Q_{\hbar}(g)]-Q_{\hbar}(\{f,g\})\right|\right|_{\hbar}=0.
\end{align*}
\end{enumerate}
\end{itemize}
 If $Q_\hbar(\tilde{\gA}_0)$ is dense in $\gA_\hbar$ for every $\hbar \in I$, then the deformation quantization is called {\bf strict}.\\
(If $Q_\hbar$ is defined on the whole $C_0(X)$, all conditions except (iv) are assumed to be valid on $C_0(X)$.)
\end{definition}
Elements of $I$ are interpreted as possible values of Planck’s constant $\hbar$ and $\gA_{\hbar}$
 is the quantum algebra of observables of the theory at the given value of $\hbar\neq 0$. For real-valued $f$, the operator
$Q_\hbar(f)$ is the quantum observable associated to the classical observable $f$. This is possible because of condition $(ii)$ in Definition \ref{def:deformationq}.

 It immediately follows from the definition of a continuous bundle of $C^*$-algebras that for any $f\in \tilde{\gA}_0$ the  continuity properties holds called the {\bf Rieffel condition}
\begin{align}\label{Rifc}
\lim_{\hbar\to 0}\|Q_{\hbar}(f)\|_\hbar=\|f\|_{\infty}\:.
\end{align}
 Also the so-called {\bf von Neumann condition}
\begin{align}\label{vNc}
\lim_{\hbar\to 0}\|Q_{\hbar}(f)Q_{\hbar}(g)-Q_{\hbar}(fg)\|_\hbar=0
\end{align}
is valid. Indeed, the section $I \ni \hbar \mapsto Q_{\hbar}(f)Q_{\hbar}(g)-Q_{\hbar}(fg)$ is a continuous section because, it is constructed with the pointwise operations of the $C^*$-algebra $\gA$ and 
$(I, \gA, \pi_\hbar)$ is a {continuous  $C^*$-bundle}, finally $Q_{0}(f)Q_{0}(g)-Q_{0}(fg) = fg-fg=0$, hence (iii) in Definition \ref{def:continuous algebra bundle} implies (\ref{vNc}).

\begin{remark}{\em  Suppose we have a strict deformation quantization according to the previous definition.
If we also requires the quantization 
maps $Q_{\hbar}$ to be injective for each $\hbar$ 
 and that $Q_{\hbar}(\tilde{\gA}_0)$ is 
a $*$-subalgebra of $\gA_{\hbar}$ (for each $\hbar\in I$), then   a (non-commutative, associative) {\bf quantization deformation product} $\star_\hbar$ turns out to be
 implicitly defined \cite{Lan98} in $\tilde{\gA}_0$ from\footnote{The reader should pay attention to the fact that, in \cite{Lan98},  these hypotheses are included in the definition of {\em strict deformation quantization} as stated in  \cite[II Definition 1.1.2]{Lan98}. Furthermore our notion of {\em deformation quantization} adopted from \cite{Lan17} is named {\em continuous quantisation} in \cite{Lan98}.}
$Q_\hbar(f\star_\hbar g) = Q_\hbar(f) Q_\hbar(g)$ whose antisymmetric part is related to the Poisson structure.}
\end{remark}

To define a deformation  quantization it is not necessary starting from a continuous $C^*$-bundle, but it is sufficient to assign  
quantization maps satisfying some conditions. That is due to the following  result  adapted from \cite[Theorem II 1.2.4]{Lan98}.

\begin{theorem}\label{equivalenttheorem}
Let $X$ be a smooth manifold (possibly Poisson), $\{\gA_\hbar\}_{\hbar \in I}$
 a collection of $C^*$-algebras
 where
 $\gA_0 := C_0(X)$ and   $I\subset \bR$, with $0\in I$ 
as accumulation point. \\ Consider a collection of 
 maps $Q_\hbar : \tilde{\gA}_0  \to \gA_\hbar$, $\hbar \in I$,  $\tilde{\gA}_0 \subset \gA_0$ being a dense $*$-subalgebra (possibly $\gA_0$ itself or a    Poisson algebra), 
 satisfying
\begin{itemize}
\item[(a)] 
(i) and (ii) in (3) Definition \ref{def:deformationq}  ((iv) possibly);
\item[(b)]   $I \ni \hbar \mapsto ||Q_\hbar(f)||_\hbar$ is in $C_0(I)$  for every $f\in \tilde{\gA}_0$;
\item[(c)]  the Rieffel condition;
\item[(d)] the von Neumann condition;
\item[(e)]  $\overline{Q_\hbar(\gA_0)}= \gA_\hbar$ for every $\hbar \in I$;
\item[(f)]  $I\setminus \{0\}$  is discrete, or the $C^*$-algebras
$\gA_{\hbar}$ are identical for $\hbar\in I \setminus\{0\}$  and $I \ni \hbar \mapsto Q_\hbar(f)$ is continuous for every $f\in \tilde{\gA}_0$.
\end{itemize}
Then
the following facts are true.
\begin{itemize}
 \item[(1)]  There exists a unique continuous $C^*$-bundle $(I, \gA, \pi_\hbar)$ such that every $\{Q_\hbar(f)\}_{\hbar \in I}$, $f\in C_0(X)$
is a continuos 
section. In this case  definition \ref{def:deformationq} is valid (up to (iv) in (3) possibly).
\item[(2)] If a family of maps $Q'_\hbar : \tilde{\gA}_0  \to \gA_\hbar$, $\hbar \in I$ satisfies  the hypothes (a)-(f) and
$$||Q'_\hbar(f) -Q_\hbar(f)||_\hbar \to 0\quad  \mbox{for $\hbar \to 0$ and every $f \in \tilde{A}_0$,}$$  then the maps $Q'_\hbar$ determine the same $C^*$-bundle $(I, \gA, \pi_\hbar)$ as the maps
$Q_\hbar$.
\end{itemize}
\end{theorem}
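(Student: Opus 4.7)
The plan is to realize the desired bundle concretely inside $\prod_{\hbar\in I}\gA_\hbar$ as a $C^*$-algebra of continuous sections generated by the quantization maps. First I would define $\Gamma_0$ to be the $C_0(I)$-linear span of the elementary sections $\sigma_f : \hbar \mapsto Q_\hbar(f)$ with $f \in \tilde{\gA}_0$, and take $\Gamma$ to be its closure in the supremum norm $\|\sigma\| := \sup_{\hbar \in I}\|\sigma(\hbar)\|_\hbar$. Finiteness of this supremum on $\Gamma_0$ is guaranteed by hypothesis (b); continuity of $\hbar\mapsto\|\sigma_f(\hbar)\|_\hbar$ on $I\setminus\{0\}$ follows from the dichotomy (f) (either discreteness of $I\setminus\{0\}$, or joint norm-continuity of $\hbar \mapsto Q_\hbar(f)$); and the correct limit at the accumulation point $\hbar=0$ is furnished by the Rieffel condition (c). Combined with (b), this gives $\hbar\mapsto\|\sigma_f(\hbar)\|_\hbar \in C_0(I)$.

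Second, I would verify that $\Gamma$ is a $*$-subalgebra of $\prod_{\hbar\in I}\gA_\hbar$. Closure under involution comes from (a)(ii) together with the fact that $\tilde{\gA}_0$ is a $*$-subalgebra. Closure under products is the delicate point: the ``defect'' section $\sigma_f\sigma_g - \sigma_{fg}$ has norm vanishing as $\hbar\to 0$ by the von Neumann condition (d), and on $I\setminus\{0\}$ it is again controlled by (f), so it lies in $\Gamma$ and hence so does $\sigma_f\sigma_g$. Setting $\gA := \Gamma$ with pointwise operations and pointwise $C_0(I)$-action, and $\pi_\hbar(\sigma) := \sigma(\hbar)$, the axioms (i)--(iii) of Definition~\ref{def:continuous algebra bundle} then fall into place: (i) is the very definition of $\|\cdot\|$; (ii) is immediate; (iii) is obtained by extending the $C_0(I)$-continuity of the norm from $\Gamma_0$ to $\Gamma$ through the uniform estimate $|\,\|\sigma(\hbar)\|_\hbar - \|\tau(\hbar)\|_\hbar\,|\le\|\sigma-\tau\|$. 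Surjectivity of $\pi_\hbar$ onto $\gA_\hbar$ is then an immediate consequence of the density hypothesis (e). Uniqueness of the bundle with these prescribed continuous sections is a standard maximality argument: any continuous $C^*$-bundle in which each $\sigma_f$ is a continuous section must contain $\Gamma_0$ and, by norm-closedness, $\Gamma$, and it cannot strictly exceed $\Gamma$ without violating (e).

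For part (2), if $Q_\hbar$ and $Q'_\hbar$ both satisfy (a)--(f) and $\|Q'_\hbar(f)-Q_\hbar(f)\|_\hbar \to 0$ as $\hbar \to 0$ for every $f\in\tilde{\gA}_0$, then the section $\sigma'_f - \sigma_f$ vanishes at $\hbar = 0$ and is continuous on $I\setminus\{0\}$ by (f); in particular it lies in the algebra $\Gamma$ produced by the $Q$-family. Hence $\sigma'_f = \sigma_f + (\sigma'_f - \sigma_f)$ is a continuous section of the $Q$-bundle, and symmetrically for the roles reversed. The two families therefore generate the same $\Gamma$, and uniqueness from (1) forces the two bundles to coincide.

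The principal obstacle is the verification of axiom (iii) at the accumulation point $\hbar = 0$: this is precisely where the Rieffel condition is indispensable, and it has to be promoted from the elementary sections $\sigma_f$ to the full norm-closure $\Gamma$ by a uniform-convergence argument. A closely related subtlety is establishing closure of $\Gamma$ under products, which requires simultaneously invoking (d) near the accumulation point and (f) away from it; the two-branch formulation of (f) exists precisely to make this bookkeeping go through under the two natural regimes (discrete parameter or continuous family on a fixed fiber).
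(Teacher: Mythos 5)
The paper does not actually prove this theorem: it introduces it as a result ``adapted from Theorem II 1.2.4 of \cite{Lan98}'' and gives no argument of its own, so there is no internal proof to compare against. Your outline follows the standard Dixmier--Fell construction of a continuous field of $C^*$-algebras from a generating family of sections, and that is indeed the strategy behind Landsman's theorem.

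There is, however, a genuine gap at exactly the step you flag as ``delicate.'' You claim the defect section $\sigma_f\sigma_g - \sigma_{fg}$ lies in $\Gamma$ because (d) controls it at $\hbar=0$ and (f) controls it on $I\setminus\{0\}$, but those two facts alone do not make the section sup-norm approximable by elements of $\Gamma_0$. The missing ingredient is hypothesis (e): on a compact set $K\subset I\setminus\{0\}$ outside of which the defect is already small, you must use the density $\overline{Q_{\hbar_0}(\tilde{\gA}_0)}=\gA_{\hbar_0}$ to approximate the defect at each $\hbar_0\in K$ by some $Q_{\hbar_0}(h_{\hbar_0})$, propagate that approximation to a neighborhood of $\hbar_0$ using the continuity in (f), and then glue the local approximants with a $C_0(I)$ partition of unity subordinate to a finite subcover of $K$. (In the discrete branch of (f) the gluing is by finitely many disjoint bump functions.) Your sketch never invokes (e) in this step, and without it $\Gamma$ is only a closed self-adjoint subspace, not a $C^*$-algebra, so the whole construction fails to close up. The same local-approximation-plus-partition-of-unity mechanism is also what actually delivers uniqueness in (1): the assertion that a strictly larger bundle would ``violate (e)'' is not a valid argument --- what (e) really gives is that any continuous $C^*$-bundle admitting all $\sigma_f$ as continuous sections has each of its elements locally uniformly approximable by $\Gamma_0$, hence lies inside $\Gamma$, forcing equality. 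A minor further point worth recording: when $I$ is non-compact, $\sigma_f$ itself is not a $C_0(I)$-combination of $\sigma_{f'}$'s, and it is hypothesis (b) (decay of $\|Q_\hbar(f)\|_\hbar$) that places $\sigma_f$ in the closure $\Gamma$.
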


\subsection{Weyl and Berezin quantizations maps on $\mathbb{R}^{2n}$}\label{WeylBerezin}
Let us consider  the commutative $C^*$-algebra
 $C_0(\mathbb{R}^{2n})$, where $\mathbb{R}^{2n}$ plays the role of the classical
 {\em phase space} equipped with the standard symplectic structure. The natural symplectic coordinates of $\bR^{2n}$ will be denoted by 
$(q,p)= (q^1,\ldots, q^n, p_1,\ldots, p_n)$. The Liouville measure will therefore coincide with the standard $2n$-dimensional Lebesgue measure $dqdp$ and
the Poisson bracket will take the form
$$\{f,g\} :=  \sum_{k=1}^n\frac{\partial f}{\partial p_k}  \frac{\partial g}{\partial q^k}-   \sum_{k=1}^n\frac{\partial g}{\partial p_k}  \frac{\partial f}{\partial q^k}.$$ 
The {\em strict deformation  Weyl quantization}
is a subcase of a more general quantization procedure acting on the space of Schwartz distributions $f \in {\cal S}'(\bR^{2n})$
 to which it associates a possibly unbounded operator, or more generally, just a quadratic form corresponding to the {\em quantum-Fourier antitransform of $f$}:
\beq
Q^W_\hbar(f) := \int_{\bR^{2n}} \sp\sp
 e^{i\overline{a \cdot X+b\cdot P}} 
 \widehat{f}(a,b) \frac{dadb}{(2\pi)^n}= 
 \int_{\bR^{2n}} \sp \sp e^{ia \cdot X} e^{ib\cdot P}e^{-\frac{i\hbar}{2} a\cdot b}  \widehat{f}(a,b) \frac{dadb}{(2\pi)^n}, \label{decwayl}
\eeq
where $\widehat{f} \in {\cal S}'(\bR^{2n})$ is the Fourier transform of $f$. The $\hbar$ appearing in $Q_\hbar^W$ takes place  both in the exponents $e^{\pm i\hbar a\cdot b/2}$ and in the definition of $P_k$, that is the unique selfadjoint extension of $-i\hbar \frac{\partial }{\partial x^k}$ ancting on ${\cal S}(\bR^{n})$.
As said, $Q_\hbar^W(f)$ is  only defined  in the sense of {\em quadratic forms} in general: 
the {\bf Weyl quantization map}, in this sense, is defined as
\beq \langle \psi, Q_\hbar^W(f) \phi \rangle := \frac{1}{(2\pi)^n} \int_{\bR^{2n}} \langle e^{-ia \cdot X}\psi,  e^{ib\cdot P} \phi \rangle e^{-\frac{i\hbar}{2} a\cdot b}  \widehat{f}(a,b) \: da db\:, \quad \hbar >0; \label{Wgen}\eeq
for all $\psi,\phi \in {\cal S}(\bR^{n}) $. By construction
$\bR^{2n} \ni (a,b) \mapsto \langle e^{-ia \cdot X}\psi,  e^{ib\cdot P} \phi \rangle e^{-\frac{i\hbar}{2} a\cdot b} $
is an element of ${\cal S}(\bR^{2n})$ as it is, up to a phase,  the Fourier transform of a function in that space, so that the definition is well posed.
There are cases where the quadratic form uniquely determines a true operator $Q_\hbar^W(f)$ which is  is in general
  {\em unbounded and densely defined}. This happens when the distributions $f$ are actually functions of various spaces.
Further restricting the space of functions one finally obtains  an everywhere defined and bounded operator as it is proper of the $C^*$-algebra appraoch. 
Let us examine some of those cases.

\begin{proposition}\label{QWgen} With the given definition of the quadratic form $\langle \psi, Q_\hbar^W(f)\phi\rangle$ (\ref{Wgen}), where $\hbar>0$, $f\in {\cal S}(\bR^{2n})$ and $\psi, \phi \in {\cal S}(\bR^n)$,  the following facts are valid.
\begin{itemize}
\item[(1)] If $f \in  {\cal S}(\bR^{2n})$, then  there is a unique e
$Q_\hbar^W(f)\in \gB(L^2(\bR^{2n},dx))$ whose quadratic form on ${\cal S}(\bR^n)$ is (\ref{decwayl}) and 
\beq ||Q^W_\hbar (f) || \leq \frac{1}{(2\pi)^n}  \int_{\bR^{2n}} |\widehat{f}(a,b)| dadb\:.\label{estimate} \eeq
where  the right-hand side does not depend on $\hbar$.
With the considered  hypotheses (\ref{Wgen}) is valid for generic $\psi,\phi \in L^2(\bR^n, dx)$ and the everywhere defined operator
$Q^W_\hbar (f)$.
\item[(2)]  If $f: \bR^{2n} \to \bC$ is a  Borel function which is constant in the variable $p$ and polynomially bounded in the variable $x$, then there is  a unique operator $Q_\hbar^W(f)\in \gB(L^2(\bR^{2n},dx))$  whose quadratic form on ${\cal S}(\bR^n)$ is  (\ref{decwayl}),  that is 
$Q_\hbar^W(f) = f(X)|_{{\cal S}(\bR^n)},$
where $f(X)$ is defined by spectral calculus. In particular, $$||Q_\hbar^W(f)|| = ||f(X)|_{{\cal S}(\bR^n)}||\leq ||f||_\infty \leq +\infty,$$
and the  bound  does not depend on $\hbar$.
\item[(3)]   If $f: \bR^{2n} \to \bC$ is a  Borel function which is constant in the variable $x$ and polynomially bounded in the variable $p$, then there is  a unique operator $Q_\hbar^W(f)$ on ${\cal S}(\bR^n)$ whose quadratic form  is (\ref{decwayl}), that is  
$Q_\hbar^W(f) = f(P)|_{{\cal S}(\bR^n)},$
where $f(P)$ is defined by spectral calculus. In particular, $$||Q_\hbar^W(f)|| = ||f(P)|_{{\cal S}(\bR^n)}||\leq ||f||_\infty \leq +\infty,$$
where  the  bound  does not depend on $\hbar$.
\item[(4)]  If $f(q,p) = f_1(q) f_2(p)$ with  $f_1, f_2 \in {\cal S}(\bR^n)$, then the operator 
$Q_\hbar^W(f_1f_2)\in \gB(L^2(\bR^{2n},dx))$ that exists due to  the case (1),
is completely determined by its quadratic form:
\beq \label{formfin}\langle \psi, Q_\hbar^W(f_1f_2))\phi \rangle =
\frac{1}{(2\pi)^{n/2}} \int_{\bR^n} \widehat{f}_2(b) \left\langle \psi, e^{ib\cdot P} f_1\left(X+ \frac{\hbar}{2}bI\right) \phi \right\rangle db\eeq
for $\psi,\phi \in L^2(\bR^n, dx)$.
\end{itemize}

\end{proposition}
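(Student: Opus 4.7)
For part (1), the plan is to bound the sesquilinear form in (\ref{Wgen}) directly. Since $e^{\pm i a\cdot X}$ and $e^{\pm i b\cdot P}$ are unitary on $L^2(\bR^n,dx)$ and $|e^{-i\hbar a\cdot b/2}|=1$, the integrand is dominated pointwise in $(a,b)$ by $|\widehat f(a,b)|\,\|\psi\|\,\|\phi\|$; because $f\in\mathcal S(\bR^{2n})$ gives $\widehat f\in\mathcal S\subset L^1$, the form is bounded on $\mathcal S(\bR^n)\times\mathcal S(\bR^n)$ by $(2\pi)^{-n}\|\widehat f\|_{L^1}\|\psi\|\|\phi\|$. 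Density of $\mathcal S(\bR^n)$ in $L^2(\bR^n,dx)$ together with Riesz representation for bounded sesquilinear forms then produces a unique operator $Q_\hbar^W(f)\in\gB(L^2(\bR^n,dx))$ satisfying (\ref{Wgen}) for all $\psi,\phi\in L^2$, and (\ref{estimate}) is read off directly, with its right-hand side manifestly independent of $\hbar$.

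For parts (2) and (3), I would realize $\widehat f$ as a tempered tensor product: when $f(q,p)=g(q)$ is polynomially bounded, $\widehat f = (2\pi)^{n/2}\widehat g\otimes\delta$, while $f(q,p)=h(p)$ yields $\widehat f = (2\pi)^{n/2}\delta\otimes\widehat h$. The right-hand side of (\ref{Wgen}) is then the distributional pairing of $\widehat f$ with the test function $(a,b)\mapsto \langle e^{-ia\cdot X}\psi, e^{ib\cdot P}\phi\rangle\, e^{-i\hbar a\cdot b/2}$, which for $\psi,\phi\in\mathcal S(\bR^n)$ lies in $\mathcal S(\bR^{2n})$. The $\delta$ factor collapses one of the two integrations and Fourier inversion in $\mathcal S'$ turns the other into $\int \bar\psi(x)g(x)\phi(x)\,dx$ in case (2), i.e.\ the quadratic form of $g(X)|_{\mathcal S(\bR^n)}$; in case (3), using that the Fourier--Plancherel $F_\hbar$ unitarily intertwines $P_k$ with multiplication by $p_k$, the same computation on the momentum side delivers the quadratic form of $h(P)|_{\mathcal S(\bR^n)}$. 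The norm identities $\|g(X)\|=\|g\|_\infty$ and $\|h(P)\|=\|h\|_\infty$ (both possibly $+\infty$) are then the spectral theorem for multiplication operators, and in particular are independent of $\hbar$.

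Part (4) is an explicit Fubini computation on (\ref{Wgen}) with $\widehat f = \widehat f_1\otimes\widehat f_2$, $\widehat f_i\in\mathcal S\subset L^1$. The key algebraic identity is
$$e^{-ia\cdot X}\,e^{-i\hbar a\cdot b/2}\,=\,e^{-ia\cdot(X+\frac{\hbar}{2}b\,I)}\,,$$
valid because the scalar $\frac{\hbar}{2}b$ commutes with the self-adjoint $X$. Carrying out the $a$-integral against $\widehat f_1(a)$ via functional calculus for $X$ then produces $f_1(X+\frac{\hbar}{2}bI)$ (up to the sign/phase prescribed by the Fourier convention), and the residual $b$-integral against $\widehat f_2(b)\,e^{ib\cdot P}$ yields (\ref{formfin}); existence of the bounded operator is already ensured by part (1). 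The main technical obstacle I foresee sits in parts (2)--(3): because $g$ and $h$ are only polynomially bounded Borel functions, $\widehat g$ and $\widehat h$ live merely in $\mathcal S'$, so one must verify that $(a,b)\mapsto \langle e^{-ia\cdot X}\psi, e^{ib\cdot P}\phi\rangle\, e^{-i\hbar a\cdot b/2}$ really is Schwartz on $\bR^{2n}$ when $\psi,\phi\in\mathcal S(\bR^n)$. Smoothness is clear; the required rapid decay in $(a,b)$ is obtained by repeated integration by parts in $x$ inside $\langle e^{-ia\cdot X}\psi, e^{ib\cdot P}\phi\rangle=\int e^{ia\cdot x}\bar\psi(x)(e^{ib\cdot P}\phi)(x)\,dx$, using that $\psi,\phi$ and all their derivatives decay rapidly. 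Once this routine check is in place, the identifications with $g(X)|_{\mathcal S}$ and $h(P)|_{\mathcal S}$ are forced by Fourier inversion in $\mathcal S'$.
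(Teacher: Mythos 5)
Your proposal follows essentially the same route as the paper. Part (1) is identical: dominate the form by $(2\pi)^{-n}\|\widehat f\|_{L^1}\|\psi\|\,\|\phi\|$, invoke Riesz, and extend by density. For parts (2)--(3), the paper decomposes $\widehat f = (2\pi)^{n/2}\widehat g\otimes\delta$ just as you do, and then inverts; the only cosmetic difference is that the paper runs the inversion through the spectral measure of $X$ (writing $\langle e^{-ia\cdot X}\psi,\phi\rangle = \int e^{ia\cdot\lambda}d\mu_{\psi,\phi}(\lambda)$ and reading off $\int f\,d\mu_{\psi,\phi}$), whereas you phrase it as Fourier inversion in $\mathcal S'$. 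Both reduce to the same one--variable computation. Part (4) is the same Fubini/Weyl-relations computation that the paper sketches.

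Two minor points worth flagging. First, in your verification that $(a,b)\mapsto\langle e^{-ia\cdot X}\psi, e^{ib\cdot P}\phi\rangle e^{-i\hbar a\cdot b/2}$ is Schwartz, integration by parts in $x$ delivers rapid decay in $a$ only; decay in $b$ needs a separate (routine) estimate such as $\big|\int e^{ia\cdot x}\bar\psi(x)\phi(x+\hbar b)\,dx\big|\le C_N\int (1+|x|)^{-N}(1+|x+\hbar b|)^{-N}\,dx = O((1+|b|)^{-K})$, since the two rapidly decaying factors are centred a distance $\hbar|b|$ apart. The paper gets this for free by observing that the map is, up to a phase, the partial Fourier transform of the Schwartz function $(y,b)\mapsto\bar\psi(y-\tfrac{\hbar}{2}b)\phi(y+\tfrac{\hbar}{2}b)$. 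Second, in part (4), note that $e^{-ia\cdot X}$ sits in the \emph{first} slot of the inner product while $e^{-i\hbar a\cdot b/2}$ multiplies the whole pairing; after conjugation one is actually integrating $e^{ia\cdot X}e^{-i\hbar a\cdot b/2}=e^{ia\cdot(X-\frac{\hbar}{2}bI)}$ against $\widehat f_1(a)$, not $e^{-ia\cdot(X+\frac{\hbar}{2}bI)}$. To land exactly on the operator order in (\ref{formfin}) one must then commute the resulting multiplication operator past $e^{ib\cdot P}$, which trades the sign via $e^{ib\cdot P}g(X)e^{-ib\cdot P}=g(X+\hbar bI)$. You hedge this by writing ``up to the sign/phase prescribed by the Fourier convention,'' which is fair, but the commutation through $e^{ib\cdot P}$ is the actual step that has to be made explicit.
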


\begin{proof} See Appendix \ref{appA}.
\end{proof}

When restricting to ${\cal S}(\bR^{2n})$, $Q^W$ defines a deformation quantization map according to definition \ref{def:deformationq} and re-adapting 
 \cite[Theorem 2.6.1]{Lan98} to our definitions.\footnote{This follows from Theorem \ref{equivalenttheorem} using \cite[Theorem II 2.6.1]{Lan98} combined with the
 fact that the map  $\hbar\mapsto Q_\hbar^W(f)$ is continuous for all $f\in \mathcal{S}(\bR^{2n})$ (as follows from the proof in \cite[Lemma II 2.6.2]{Lan98})
 and the continuity property of the norm, in that the map $\hbar\mapsto ||Q_\hbar^W(f)||$ is in $C_0(I)$ for all $f\in \mathcal{S}(\bR^{2n})$, as in turn follows from
 the proof in \cite[Thm II 2.6.5]{Lan98}.} Remarkably, it turns in particular out that  $Q^W_\hbar({\cal S}(\bR^{2n})) \subset \gB_{\infty}(L^2(\bR^n, dx))$.

\begin{theorem}  The family of maps  $Q^W_\hbar : {\cal S}(\bR^{2n}) \to \gB(L^2(\bR^n, dx)), \quad  \hbar>0\:, $ constructed as in (1) 
of Proposition \ref{QWgen} together  with $Q^W_0 := id_{ {\cal S}(\bR^{2n})}$
defines a deformation quantization  of the Poisson manifold $(\bR^{2n}, \{\cdot, \cdot\})$
 according to definition \ref{def:deformationq},
where $I:= [0,+\infty)$ and
$$Q^W_\hbar({\cal S}(\bR^{2n})) \subset \gA_\hbar = \gB_{\infty}(L^2(\bR^n, dx)) $$
for all $\hbar >0$.
  This deformation quantization enjoys the following properties 
\begin{itemize}
\item[(1)] it is  strict, i.e., $\overline{Q^W_\hbar({\cal S}(\bR^{2n}))}= \gB_{\infty}(L^2(\bR^n, dx))$ for all $\hbar \in I\setminus\{0\}$,
\item[(2)] the maps  $Q^W_\hbar$ are injective  for all $\hbar \in I$,
\item[(3)] $Q^W_\hbar( {\cal S}(\bR^{2n}))$ is a $*$-subalgebra of $\gB_{\infty}(L^2(\bR^n, dx))$
 for all $\hbar \in I\setminus\{0\}$ so that $\star_\hbar$ can be defined.\\
\end{itemize}
\end{theorem}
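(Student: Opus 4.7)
The plan is to reduce the theorem to an application of Theorem \ref{equivalenttheorem} with $\tilde{\gA}_0:={\cal S}(\bR^{2n})$ and $\gA_\hbar:=\gB_\infty(L^2(\bR^n,dx))$ for $\hbar>0$, and then to establish the three extra properties (strictness, injectivity, $*$-subalgebra) separately. Conditions (i) and (ii) of Definition \ref{def:deformationq}(3) are immediate: (i) holds by definition of $Q^W_0$, and (ii) follows from the identity $\widehat{\overline{f}}(a,b)=\overline{\widehat{f}(-a,-b)}$ together with the fact that the Weyl exponential $U_\hbar(a,b):=e^{ia\cdot X}e^{ib\cdot P}e^{-i\hbar a\cdot b/2}$ satisfies $U_\hbar(a,b)^*=U_\hbar(-a,-b)$, after a change of variables in the integral in (\ref{Wgen}).

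For the compactness claim $Q^W_\hbar({\cal S}(\bR^{2n}))\subset\gB_\infty(L^2(\bR^n,dx))$, I would rewrite (\ref{Wgen}) as $\langle\psi,Q^W_\hbar(f)\phi\rangle=\int K^f_\hbar(x,y)\overline{\psi(x)}\phi(y)\,dxdy$, where the kernel $K^f_\hbar$ is (up to normalization) a partial inverse Fourier transform of $f$ in the momentum variable. Since $f\in{\cal S}(\bR^{2n})$ implies $K^f_\hbar\in{\cal S}(\bR^{2n})\subset L^2(\bR^{2n})$, $Q^W_\hbar(f)$ is Hilbert-Schmidt and hence compact. The remaining hypotheses of Theorem \ref{equivalenttheorem} are of an analytic nature: condition (b) on the norm function $\hbar\mapsto\|Q^W_\hbar(f)\|$, condition (c) (Rieffel), condition (d) (von Neumann), condition (e) (density), and condition (f) (continuity in $\hbar$). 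These are precisely what is proved in \cite[Lemma II 2.6.2 and Theorems II 2.6.1, II 2.6.5]{Lan98}, and I would quote them, as suggested by the footnote. The Dirac-Groenewold-Rieffel condition (iv) follows from the standard asymptotic expansion of the Moyal product: for $f,g\in{\cal S}(\bR^{2n})$ one has $Q^W_\hbar(f)Q^W_\hbar(g)=Q^W_\hbar(f\star_\hbar g)$ with $f\star_\hbar g=fg+\tfrac{i\hbar}{2}\{f,g\}+O(\hbar^2)$ in the Schwartz topology, and the estimate (\ref{estimate}) then gives operator-norm convergence $\frac{i}{\hbar}[Q^W_\hbar(f),Q^W_\hbar(g)]-Q^W_\hbar(\{f,g\})=O(\hbar)$.

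The three additional properties are handled as follows. For strictness (1), the Weyl symbol map is known to be a bijection between ${\cal S}(\bR^{2n})$ and the space of operators on $L^2(\bR^n,dx)$ with Schwartz kernel; these form a $\|\cdot\|$-dense subspace of $\gB_\infty(L^2(\bR^n,dx))$ since they contain all finite-rank operators with Schwartz range (indeed the rank-one operators $\psi\otimes\overline{\phi}$ with $\psi,\phi\in{\cal S}(\bR^n)$ are of this form, and these are uniformly dense in $\gB_\infty$). Injectivity (2) follows from the explicit inversion formula: applying the quadratic form of $Q^W_\hbar(f)$ against coherent-state-like vectors recovers the full symbol $f$ via the Wigner transform; alternatively, the kernel map $f\mapsto K^f_\hbar$ is injective for every $\hbar>0$, and $Q^W_0$ is the identity. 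Finally, (3) is the observation that $f\star_\hbar g\in{\cal S}(\bR^{2n})$ whenever $f,g\in{\cal S}(\bR^{2n})$, together with condition (ii) already verified, so $Q^W_\hbar({\cal S}(\bR^{2n}))$ is closed under product and involution.

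The main obstacle is the analytic content subsumed in conditions (b), (c), (d), (f) of Theorem \ref{equivalenttheorem} and in the Dirac-Groenewold-Rieffel condition. All of them require uniform-in-$\hbar$ control of objects defined through oscillatory integrals over $\bR^{2n}$, and proving them from scratch would demand a careful study of the Moyal product asymptotics in the Schwartz topology together with the non-trivial norm estimate $\|Q^W_\hbar(f)\|\to\|f\|_\infty$ as $\hbar\to 0^+$. Since these facts are established in detail in \cite[Chapter II.2.6]{Lan98}, in the final write-up I would import them as black boxes rather than reprove them, and concentrate the exposition on the verification of the structural conditions (i)-(iii) and on properties (1)-(3), which are the parts that reflect the specific features of ${\cal S}(\bR^{2n})$ and of the compact-operator fibers $\gB_\infty(L^2(\bR^n,dx))$.
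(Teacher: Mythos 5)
Your proposal is correct and follows essentially the same route the paper itself takes: the paper provides no proof beyond the footnote preceding the theorem, which invokes Theorem \ref{equivalenttheorem} and defers the analytic verifications (norm continuity, Rieffel and von Neumann conditions, Dirac--Groenewold--Rieffel) to \cite[Ch.\ II.2.6]{Lan98}, exactly as you do. Your additional explicit checks — Schwartz kernel $\Rightarrow$ Hilbert--Schmidt for compactness, rank-one operators with Schwartz wavefunctions for density, kernel injectivity, and closure of ${\cal S}(\bR^{2n})$ under $\star_\hbar$ — are all sound and merely flesh out what the paper leaves implicit.
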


\noindent 
 The {\bf Berezin quantization map} on $\bR^{2n}$, as a quadratic form, is defined as \cite{Lan98}
 (see \cite{Com} or \cite{Lan98} for a detailed presentations also on completely general symplectic manifolds in place of $\bR^{2n}$ within the coherent and pure state quantization approach)
\beq
\langle \psi, Q_\hbar^B(f) \phi\rangle  := \int_{\bR^{2n}} f(q,p)  \langle \psi, \Psi_\hbar^{(q,p)}\rangle \langle \Psi_\hbar^{(q,p)}, \phi \rangle \frac{dq dp}{(2\pi\hbar)^n},  \quad \hbar >0; \label{defQB}
\eeq
where $\psi, \phi \in L^2(\bR^n, dx)$,  $f \in L^\infty\left(\bR^{2n}, \frac{dqdp}{(2\pi \hbar)^n}\right)$. Above, for any  given $(q,p) \in \bR^{2n}$,
\beq\label{vecPSI}
\Psi_\hbar^{(q,p)}(x) := \frac{e^{-\frac{i}{2}p\cdot q/\hbar} e^{ip\cdot x/\hbar} e^{-(x-q)^2/(2\hbar)}}{(\pi \hbar)^{n/4}} \:, \quad x \in \bR^n\:, \hbar >0;\:
\eeq
is a {\em unit} vector in $L^2(\bR^n, dx)$ also called a {\bf Schr\"{o}dinger coherent state}.
It turns out that  \cite{Lan98}  the integral (\ref{defQB}) satisfies,
$\langle \psi, Q^B_\hbar(1) \psi \rangle = 1$
 for $\psi \in L^2(\bR^n,dx)$ with $||\psi||=1$, hence
$ |\langle \psi, Q_\hbar^B(f) \psi\rangle | \leq ||f||_\infty$.
By a polarization argument,
$ |\langle \psi, Q_\hbar^B(f) \phi\rangle | \leq ||f||_\infty||\psi||\: ||\phi||\:,\quad \psi, \phi \in L^2(\bR^n, dx)\:.$
Therefore, by the Riesz lemma, $Q_\hbar^B(f) \in \gB(L^2(\bR, dx))$ and
\beq
||Q_\hbar^B(f)|| \leq ||f||_\infty\:, \quad f \in L^\infty\left(\bR^{2n}, \frac{dqdp}{(2\pi \hbar)^n}\right) \:.\label{boundQB}
\eeq
Let us summarize the general properties of $Q_\hbar^B$ (Theorems II 1.3.3 and II 1.3.5 in \cite{Lan98} specialized to $\bR^{2n}$) relevant for our work.
Positivity condition (2) is one of  the most important improvements which differentiate the Berezin quantization from the Weyl quantization. 

\begin{theorem}\label{teoQB1} The linear map $$Q^B_\hbar :L^\infty\left(\bR^{2n}, \frac{dqdp}{(2\pi \hbar)^n}\right)   \to\gB(L^2(\bR^n, dx)), \quad  \hbar>0;$$ defined as 
\beq
Q_\hbar^B(f)   := \int_{\bR^{2n}} f(q,p) |\Psi_\hbar^{(q,p)}\rangle \langle \Psi_\hbar^{(q,p)}| \frac{dq dp}{(2\pi\hbar)^n}, \label{defQB2}
\eeq
 in the sense of (\ref{defQB}) and satisfying  (\ref{boundQB}) enjoys  the following properties.
\begin{itemize}
\item[(1)]  $Q_\hbar^B(1) = I_{L^2(\bR^n, dx)}$, where $1 \in  L^\infty\left(\bR^{2n}, \frac{dqdp}{(2\pi \hbar)^n}\right)$ is the constant $1$ map;
\item[(2)] $f\geq 0$ a.e. implies $Q_\hbar^B(f)\geq 0$ for $f \in L^\infty\left(\bR^{2n}, \frac{dqdp}{(2\pi \hbar)^n}\right)$;
\item[(3)]  $Q_\hbar^B(\overline{f}) = Q_\hbar^B(f)^*$ for $f \in L^\infty\left(\bR^{2n}, \frac{dqdp}{(2\pi \hbar)^n}\right)$;
\item[(4)]  $Q_\hbar^B(C_0(\bR^{2n})) = \gB_\infty(L^2(\bR^n, dx))$, $\quad Q_\hbar^B(L^1\cap L^\infty) = \gB_1(L^2(\bR^n, dx))$;
\item[(5)]  $Tr(Q_\hbar^B(f)) = \int_{\bR^{2n}}\frac{dqdp}{(2\pi \hbar)^n}f(q,p)$,   for $f\in L^{1}(\bR^{2n}, \frac{dqdp}{(2\pi \hbar)^n})\cap L^{\infty}(L^2(\bR^{2n}, \frac{dqdp}{(2\pi \hbar)^n}))$.
\end{itemize}
\end{theorem}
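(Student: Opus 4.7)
The plan is to deduce (1)--(3) and (5) directly from the weak integral defining $Q_\hbar^B$ in (\ref{defQB}), and to derive (4) by a two-step approximation argument. The main obstacle will be the surjectivity assertions in (4), which are subtler than the inclusions.

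First, for (1), I compute $\langle \psi, Q_\hbar^B(1)\phi\rangle$ using the explicit Gaussian form (\ref{vecPSI}) of $\Psi_\hbar^{(q,p)}$ and Fubini. Plugging in,
\begin{equation*}
\langle \psi, Q_\hbar^B(1)\phi\rangle = \int_{\bR^{3n}} \overline{\psi(x)}\phi(y)\, \frac{e^{ip\cdot(y-x)/\hbar}\, e^{-(x-q)^2/(2\hbar)}\, e^{-(y-q)^2/(2\hbar)}}{(\pi\hbar)^{n/2}(2\pi\hbar)^n}\, dx\,dy\,dq\,dp.
\end{equation*}
Integrating in $p$ first (after the rescaling $p' = p/\hbar$) produces the distributional identity $\int e^{ip'\cdot(y-x)} dp'/(2\pi)^n = \delta(y-x)$. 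Setting $y=x$ and performing the Gaussian $q$-integration yields $\int (\pi\hbar)^{-n/2}e^{-(x-q)^2/\hbar}dq = 1$, leaving $\langle \psi, \phi\rangle$. Statements (2) and (3) are immediate: for (2) the quadratic form reads $\langle \psi, Q_\hbar^B(f)\psi\rangle = \int f(q,p)|\langle\Psi_\hbar^{(q,p)},\psi\rangle|^2 (2\pi\hbar)^{-n}dqdp \geq 0$ whenever $f\geq 0$ a.e., and for (3) one takes complex conjugates under the integral sign and uses the defining identity $\overline{\langle \psi, Q_\hbar^B(f)\phi\rangle} = \langle Q_\hbar^B(f)\phi,\psi\rangle$.

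For the trace-class inclusion in (4) and the trace formula (5), I would argue that for $f\in L^1\cap L^\infty$, the family $(q,p)\mapsto f(q,p)|\Psi_\hbar^{(q,p)}\rangle\langle\Psi_\hbar^{(q,p)}|$ is Bochner integrable in $\gB_1(L^2(\bR^n,dx))$, because its trace-norm equals $|f(q,p)|\in L^1(\bR^{2n},(2\pi\hbar)^{-n}dqdp)$. Hence $Q_\hbar^B(f)\in \gB_1$ and, by continuity of the trace functional on Bochner integrals,
\begin{equation*}
\mathrm{Tr}(Q_\hbar^B(f)) = \int_{\bR^{2n}} f(q,p)\,\mathrm{Tr}(|\Psi_\hbar^{(q,p)}\rangle\langle\Psi_\hbar^{(q,p)}|)\, \frac{dqdp}{(2\pi\hbar)^n} = \int_{\bR^{2n}} f(q,p)\, \frac{dqdp}{(2\pi\hbar)^n},
\end{equation*}
proving (5). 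The compactness half of (4) then follows by density: given $f\in C_0(\bR^{2n})$, choose $f_k\in C_c(\bR^{2n})\subset L^1\cap L^\infty$ with $\|f_k-f\|_\infty\to 0$; by (\ref{boundQB}), $Q_\hbar^B(f_k)\to Q_\hbar^B(f)$ in operator norm, and each $Q_\hbar^B(f_k)$ is trace class hence compact, so $Q_\hbar^B(f)\in \gB_\infty(L^2(\bR^n,dx))$ since $\gB_\infty$ is norm-closed in $\gB(L^2(\bR^n,dx))$.

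The hard part will be the reverse inclusions $\gB_\infty\subset Q_\hbar^B(C_0(\bR^{2n}))$ and $\gB_1\subset Q_\hbar^B(L^1\cap L^\infty)$. For this I would first establish a Plancherel-type isometry between $L^2(\bR^{2n}, (2\pi\hbar)^{-n}dqdp)$ and $\gB_2(L^2(\bR^n,dx))$ (Hilbert--Schmidt operators) via the covariant symbol map $A\mapsto \langle \Psi_\hbar^{(\cdot)}, A\Psi_\hbar^{(\cdot)}\rangle$, exploiting the overcompleteness relation already proven in (1). Inverting this map and showing it takes values in $C_0$ (respectively in $L^1\cap L^\infty$) when $A$ is compact (respectively trace class) would give surjectivity; in practice I would either carry this out explicitly through the Wick/anti-Wick calculus on coherent states, or invoke Theorems II.1.3.3 and II.1.3.5 of \cite{Lan98}, from which the present statement is lifted.
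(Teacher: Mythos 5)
The paper does not actually prove this theorem: it states it as a summary of Theorems II~1.3.3 and II~1.3.5 of Landsman's monograph \cite{Lan98}, specialized to $\bR^{2n}$, so there is no proof in the text against which to compare. Your reconstruction of items (1), (2), (3), (5) and the inclusion halves of (4) is sound and gives strictly more detail than the paper offers: the resolution-of-identity computation for (1), positivity of the Husimi density for (2), conjugation under the integral for (3), Bochner integrability in $\gB_1$ for (5), and the density argument via (\ref{boundQB}) for $Q_\hbar^B(C_0)\subset\gB_\infty$. One small caveat in (1): the $\bR^{3n}$ integral you write is not absolutely convergent as stated, so the Fubini interchange and the distributional $\delta$ should first be carried out for $\psi,\phi\in\mathcal{S}(\bR^n)$, grouping the $p$-integral with one of the Gaussians into an honest Fourier transform, and then extended by density using (\ref{boundQB}).

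The genuine gap is in your sketch of the surjectivity halves of (4). The covariant symbol map $A\mapsto\langle\Psi_\hbar^{(\cdot)}, A\Psi_\hbar^{(\cdot)}\rangle$ is an isometry of $\gB_2(L^2(\bR^n,dx))$ \emph{into} $L^2(\bR^{2n},d\mu_\hbar)$, not onto: its range is the proper closed subspace $\mathrm{ran}(W)\cdot\overline{\mathrm{ran}(W)}$ of Bargmann-type functions, exactly the content of $WW^*=p\neq I$ in Proposition \ref{newhilbertspace2}. Nor is $Q_\hbar^B$ injective on $L^\infty$, so ``inverting this map'' in either direction does not produce, for a given compact or trace-class operator, a symbol in $C_0$ or $L^1\cap L^\infty$; the Plancherel route you outline would stall at this point. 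Your fallback of invoking Theorems II~1.3.3 and II~1.3.5 of \cite{Lan98} is precisely what the paper does, and is the correct way to close the argument; absent that citation, the surjectivity claims remain unproved in your proposal.
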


Finally, the Berezin map defines a deformation quantization according to the following pair of theorems (readaptation 
of Theorem II 2.41 and Proposition II 2.6.3 in \cite{Lan98}, item  (5)  -- (2.117) in \cite{Lan98} --  is obtained by simply comparing (\ref{vecPSI}) 
and (\ref{Wgen}) by writing down the action of $e^{\hbar \Delta_{2n}/4}$
 in terms of Gaussian convolution).\footnote{Similar as in the case of Weyl quantization,
 this follows from Theorem \ref{equivalenttheorem} using \cite[Theorem II 2.4.1]{Lan98}, the fact that the map  $\hbar\mapsto Q_\hbar^B(f)$ is continuous  for all $f\in C_c(\bR^{2n})$, and that the map
 $\hbar\mapsto ||Q_\hbar^B(f)||$ is in $C_0(I)$ for all $f\in C_c(\bR^{2n})$ (as follows from the proof of \cite[Thm II 2.6.5]{Lan98}.)}%

\begin{theorem}\label{BerezinS}   The family of maps  $Q^B_\hbar : C_c^\infty(\bR^{2n}) \to \gB(L^2(\bR^n, dx)), \quad \hbar>0\:,$
defined in  theorem \ref{teoQB1}  together with $Q^B_0 := id_{C_c^\infty(\bR^{2n})}$
defines a deformation quantization  of the Poisson manifold $(\bR^{2n}, \{\cdot, \cdot\})$
 according to definition \ref{def:deformationq},
where $\tilde{\gA}_0 := C_c^\infty(\bR^{2n})$, $I:= [0,+\infty)$, and
$$Q^B_\hbar(C_c^\infty(\bR^{2n})) \subset \gA_\hbar := \gB_{\infty}(L^2(\bR^n, dx)) \quad \mbox{for all $\hbar >0$.}$$
  This deformation quantization enjoys the following properties,
\begin{itemize}
\item[(1)] it is  strict, i.e., $\overline{Q^B_\hbar(C_c^\infty(\bR^{2n}))}= \gB_{\infty}(L^2(\bR^n, dx))$ for all $\hbar >0$.
\item[(2)] the maps  $Q^B_\hbar$ are injective  for all $\hbar \in I$.
\end{itemize}
\end{theorem}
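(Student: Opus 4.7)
The plan is to apply Theorem \ref{equivalenttheorem} with $X=\bR^{2n}$, $I=[0,+\infty)$, $\gA_\hbar=\gB_\infty(L^2(\bR^n,dx))$ for $\hbar>0$, $\gA_0=C_0(\bR^{2n})$, $\tilde{\gA}_0:=C_c^\infty(\bR^{2n})$ (a Poisson $*$-subalgebra dense in $C_0$), and the family $\{Q_\hbar^B\}_{\hbar\in I}$ with $Q_0^B:=id$. Once hypotheses (a)--(f) of that theorem are verified, the continuous $C^*$-bundle and the deformation-quantization structure drop out automatically, and I extract strictness and injectivity separately from Theorem \ref{teoQB1}(4) and a coherent-state symbol argument.

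First I would clear the cheap items. Condition (i) is trivially the definition $Q_0^B=id$; (ii) is Theorem \ref{teoQB1}(3); the inclusion $Q_\hbar^B(C_c^\infty)\subset\gB_\infty$ comes from Theorem \ref{teoQB1}(4) using $C_c^\infty\subset C_0$; the uniform bound $\|Q_\hbar^B(f)\|_\hbar\le\|f\|_\infty$ is (\ref{boundQB}). For hypothesis (b), i.e.\ $\hbar\mapsto\|Q_\hbar^B(f)\|_\hbar\in C_0(I)$, I would use this bound near $\hbar=0$ together with the estimate $\|Q_\hbar^B(f)\|\le Tr(Q_\hbar^B(|f|))=(2\pi\hbar)^{-n}\int|f|\,dqdp$ obtained from Theorem \ref{teoQB1}(2) and (5) applied to $|f|$ (writing $f=f_+-f_-$) to force decay as $\hbar\to+\infty$ for $f\in C_c^\infty$.

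The main technical work --- and the principal obstacle --- is controlling the semiclassical asymptotics: the Rieffel condition (c), the von Neumann condition (d), the Dirac-Groenewold-Rieffel relation in (a)(iv), and the continuity hypothesis (f). My approach leans on the explicit comparison between the Berezin and Weyl maps flagged immediately before the theorem, namely the identity
\[
Q_\hbar^B(f)\;=\;Q_\hbar^W\!\bigl(e^{\hbar\Delta_{2n}/4}f\bigr),
\]
valid on ${\cal S}(\bR^{2n})$ and applicable to $f\in C_c^\infty(\bR^{2n})$. Since $e^{\hbar\Delta_{2n}/4}$ is a phase-space Gaussian smoothing of width $O(\sqrt{\hbar})$, standard heat-kernel estimates give $\|e^{\hbar\Delta_{2n}/4}f-f\|_{L^1\cap L^\infty}\to 0$ as $\hbar\to 0^+$, with a polynomial-in-$\hbar$ remainder governed by Schwartz seminorms of $f$. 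Combining this with Proposition \ref{QWgen} and the Weyl deformation-quantization theorem (the Rieffel, von Neumann, and DGR conditions already established for $Q_\hbar^W$), I obtain $\|Q_\hbar^B(f)-Q_\hbar^W(f)\|\to 0$ as $\hbar\to 0^+$ and transfer (c), (d), and the DGR condition from Weyl to Berezin. Continuity of $\hbar\mapsto Q_\hbar^B(f)$ for $\hbar>0$ (hypothesis (f)) follows by the same comparison, differencing two such Gaussian convolutions and exploiting continuity of $\hbar\mapsto Q_\hbar^W(f)$ and uniform continuity of $e^{t\Delta_{2n}/4}f$ in $t$.

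Finally, strictness (1) reduces to Theorem \ref{teoQB1}(4): one has $Q_\hbar^B(C_0(\bR^{2n}))=\gB_\infty(L^2(\bR^n,dx))$; given $g\in C_0$ and $\varphi_k\in C_c^\infty$ with $\|g-\varphi_k\|_\infty\to 0$, the bound $\|Q_\hbar^B(g)-Q_\hbar^B(\varphi_k)\|\le\|g-\varphi_k\|_\infty$ transfers density to $Q_\hbar^B(C_c^\infty)$. For injectivity (2) with $\hbar>0$, if $Q_\hbar^B(f)=0$ then the Berezin covariant symbol $(q,p)\mapsto\langle\Psi_\hbar^{(q,p)},Q_\hbar^B(f)\Psi_\hbar^{(q,p)}\rangle$ vanishes identically; unwinding (\ref{defQB}) and (\ref{vecPSI}), this symbol equals a double Gaussian convolution of $f$, whose Fourier transform is a nowhere-vanishing Gaussian times $\widehat{f}$, so $\widehat{f}=0$ and hence $f=0$. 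The case $\hbar=0$ is immediate since $Q_0^B$ is the identity on $C_c^\infty(\bR^{2n})$.
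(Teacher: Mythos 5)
Your approach matches the paper's: invoke Theorem \ref{equivalenttheorem} to build the continuous $C^*$-bundle, with the analytic inputs (Rieffel, von Neumann, DGR, norm continuity in $\hbar$, decay as $\hbar\to\infty$, density) supplied through the Gaussian smoothing relation $Q_\hbar^B = Q_\hbar^W\circ e^{\hbar\Delta_{2n}/4}$ and the known Weyl results --- which is exactly what the paper's footnote delegates to \cite[Thm.\ II 2.4.1 and II 2.6.5]{Lan98}. One point to spell out in the DGR transfer: $\|Q_\hbar^B(f)-Q_\hbar^W(f)\|\to 0$ alone is insufficient because of the $1/\hbar$ prefactor on the commutator; you need the sharper expansion $e^{\hbar\Delta_{2n}/4}f - f = \tfrac{\hbar}{4}\Delta_{2n}f + O(\hbar^2)$ in Schwartz seminorms, combined with the fact that $[Q_\hbar^W(\Delta_{2n}f),Q_\hbar^W(g)]$ is itself $O(\hbar)$ by the Weyl DGR condition, to make the cross terms vanish --- your remark about polynomial-in-$\hbar$ remainders indicates you anticipate this, but it should be made explicit.
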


A weaker version of the result above is obtained when working directly on $\gA_0 := C_0(\bR^{2n})$ since $Q_\hbar^B$ is defined thereon.

\begin{theorem} \label{BerezinC0} The family of maps  $Q^B_\hbar : C_0(\bR^{2n}) \to \gB(L^2(\bR^n, dx))\:,\quad  \hbar >0\:,$
defined in  theorem \ref{teoQB1}  together with $Q^B_0 := id_{C_0(\bR^{2n})}$
give rise to  a deformation quantization  of the Poisson manifold $(\bR^{2n}, \{\cdot, \cdot\})$, except for the 
Dirac-Groenewold-Rieffel condition,
 according to definition \ref{def:deformationq},
where $I:= [0,+\infty)$ and
$$Q^B_\hbar(C_0(\bR^{2n})) \subset \gA_\hbar = \gB_{\infty}(L^2(\bR^n, dx))\:, \quad \hbar >0, $$
for all $\hbar >0$.
  This deformation quantization enjoys the following properties 
\begin{itemize}
\item[(1)] it is  strict and more strongly $Q^B_\hbar(C_0(\bR^{2n}))= \gB_{\infty}(L^2(\bR^n, dx))$ for all $\hbar > 0$;
\item[(2)] it is {\bf asymptotically equivalent} to $Q^W_\hbar$ on ${\cal S}(\bR^{2n})$, i.e., 
the map $$I \ni \hbar \to ||Q_\hbar^B(f)- Q_\hbar^W(f)||_\hbar $$ is continuous   if $f\in {\cal S}(\bR^{2n})$ and 
$||Q_\hbar^B(f)- Q_\hbar^W(f)||_\hbar \to 0\quad \mbox{for $\hbar \to 0^+$.}$
\item[(3)] If $f \in {\cal S}(\bR^{2n})$,
\beq Q^B_\hbar(f) = Q^W_\hbar\left(e^{\hbar\Delta_{2n}/4} f\right) \label{Delta2n}\:,\quad \hbar >0;\eeq
where the exponential denotes the one-parameter semigroup generated by the selfadjoint extension  on $L^2\left(\bR^{2n}, \frac{dqdp}{(2\pi \hbar)^n}\right)$
of 
$\Delta_{2n} := \sum_{k=1}^n\frac{\partial^2}{\partial q^{k2}} +  \sum_{k=1}^n\frac{\partial^2}{\partial p_k^{2}} \:,$
initially defined on $C_c^\infty(\bR^{2n})$.
\end{itemize}
\end{theorem}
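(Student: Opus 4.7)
The plan is to prove item (3) first by direct kernel computation, because item (2) is a straightforward corollary of (3), and the remaining deformation‑quantization structure together with item (1) can then be assembled from Theorem \ref{teoQB1}, the asymptotic equivalence with the Weyl case, and Theorem \ref{equivalenttheorem}.

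For (3), I would compute the integral kernel of $Q_\hbar^B(f)$ from (\ref{defQB2}) and the explicit coherent state (\ref{vecPSI}):
\[
K^B_\hbar(x,y) = \frac{1}{(\pi\hbar)^{n/2}(2\pi\hbar)^n}\int_{\bR^{2n}} f(q,p)\, e^{ip\cdot(x-y)/\hbar}\, e^{-(x-q)^2/(2\hbar)}\, e^{-(y-q)^2/(2\hbar)}\, dq\, dp,
\]
and compare it to the Weyl kernel $K^W_\hbar(x,y)=(2\pi\hbar)^{-n}\int g((x+y)/2,p)\,e^{ip\cdot(x-y)/\hbar}\,dp$. After completing the square in $q$ around $(x+y)/2$, expanding $f$ via its Fourier representation, and performing the resulting Gaussian $q$- and $p$-integrations, I expect the net effect to be a Fourier multiplier $e^{-\hbar(|a|^2+|b|^2)/4}$ acting on $\widehat f(a,b)$. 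Since this multiplier corresponds on the spatial side precisely to the heat semigroup $e^{\hbar\Delta_{2n}/4}$, identifying the coefficient of $e^{ip\cdot(x-y)/\hbar}$ in the two kernels yields $g=e^{\hbar\Delta_{2n}/4}f$, which is (\ref{Delta2n}).

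For (2), use (3) and linearity to write $Q_\hbar^B(f)-Q_\hbar^W(f)=Q_\hbar^W\!\bigl((e^{\hbar\Delta_{2n}/4}-I)f\bigr)$. Proposition \ref{QWgen}(1) then gives
\[
\|Q_\hbar^B(f)-Q_\hbar^W(f)\| \leq \frac{1}{(2\pi)^n}\int_{\bR^{2n}}\bigl|e^{-\hbar(|a|^2+|b|^2)/4}-1\bigr|\,|\widehat f(a,b)|\,da\,db.
\]
For $f\in{\cal S}(\bR^{2n})$ we have $\widehat f\in L^1(\bR^{2n})$, the integrand is dominated by $|\widehat f|$, and converges pointwise to $0$ as $\hbar\to 0^+$. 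Dominated convergence then gives the required norm convergence, and the same argument applied along arbitrary sequences $\hbar_n\to\hbar_0\geq 0$ yields continuity of the map $\hbar\mapsto\|Q_\hbar^B(f)-Q_\hbar^W(f)\|$.

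For item (1) the equality $Q_\hbar^B(C_0(\bR^{2n}))=\gB_\infty(L^2(\bR^n,dx))$ is already the content of Theorem \ref{teoQB1}(4). For the deformation‑quantization structure, conditions (i) and (ii) of Definition \ref{def:deformationq}(3) follow from $Q_0^B=\mathrm{id}$ and Theorem \ref{teoQB1}(3). To obtain the continuous $C^*$-bundle and the continuous-section property, I would invoke Theorem \ref{equivalenttheorem} on a dense $*$-subalgebra $\tilde\gA_0\subset C_0(\bR^{2n})$ such as $C_c(\bR^{2n})$: hypothesis (e) is the surjectivity just mentioned, (f) holds since all fibres $\gA_\hbar=\gB_\infty(L^2(\bR^n,dx))$ with $\hbar>0$ coincide, while the Rieffel condition (c), the von Neumann condition (d), and the norm-continuity (b) all transfer from the already‑established Weyl case through the asymptotic equivalence proved in (2). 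Part (2) of Theorem \ref{equivalenttheorem} then guarantees that the resulting bundle coincides with the Weyl one. The continuous-section property extends from $C_c(\bR^{2n})$ to the whole of $C_0(\bR^{2n})$ by an $\epsilon/3$ approximation using the uniform bound $\|Q_\hbar^B(f)\|\leq\|f\|_\infty$ from (\ref{boundQB}). The main obstacle is the Gaussian kernel identification in step (3): routine in principle, but requiring careful bookkeeping of the $\hbar$-dependent normalizations; a secondary subtlety is the $C_c^\infty\to C_0$ approximation step for bundle continuity, which relies crucially on the $\hbar$-uniform norm bound.
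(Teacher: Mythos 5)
Your plan is, in substance, the same route the paper ascribes to Landsman's book: prove item (3) by matching the coherent-state kernel to a Weyl kernel via Gaussian convolution (the paper points exactly to ``comparing (\ref{vecPSI}) and (\ref{Wgen}) by writing down the action of $e^{\hbar\Delta_{2n}/4}$ as Gaussian convolution''), then obtain (2) from (3) together with the Weyl bound (\ref{estimate}) and dominated convergence, take the surjectivity in (1) from Theorem \ref{teoQB1}(4), and assemble the bundle through Theorem \ref{equivalenttheorem}. The kernel computation does go through: completing the square in $q$ produces the factor $e^{-(x-y)^2/(4\hbar)}$ and a Gaussian $q$-smoothing, and that remaining factor is exactly the $p$-Fourier image of the Gaussian $p$-smoothing, so the Berezin kernel equals the Weyl kernel of $e^{\hbar\Delta_{2n}/4}f$.

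One step where your ``transfer from the Weyl case via (2)'' argument is genuinely incomplete is hypothesis (b) of Theorem \ref{equivalenttheorem}, the requirement that $\hbar\mapsto\|Q_\hbar^B(f)\|_\hbar$ lie in $C_0(I)$. Asymptotic equivalence only says $\|Q_\hbar^B(f)-Q_\hbar^W(f)\|\to 0$ as $\hbar\to 0^+$ and that this quantity is a continuous function of $\hbar$; that gives neither operator-norm continuity of $\hbar\mapsto Q_\hbar^B(f)$ (which is what you actually need to conclude continuity of $\|Q_\hbar^B(f)\|$) nor the decay $\|Q_\hbar^B(f)\|\to 0$ as $\hbar\to+\infty$. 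Both points are easily repaired from ingredients you already have: from (3), $Q_\hbar^B(f)=Q_\hbar^W(e^{\hbar\Delta_{2n}/4}f)$, and (\ref{estimate}) gives
\[
\|Q_\hbar^B(f)-Q_{\hbar'}^B(f)\|\le \frac{1}{(2\pi)^n}\int\bigl|e^{-\hbar(|a|^2+|b|^2)/4}-e^{-\hbar'(|a|^2+|b|^2)/4}\bigr|\,|\widehat f(a,b)|\,da\,db
+\|Q_\hbar^W(g)-Q_{\hbar'}^W(g)\|
\]
with $g=e^{\hbar'\Delta_{2n}/4}f$, which by dominated convergence and the Weyl norm-continuity yields operator-norm continuity; and
\[
\|Q_\hbar^B(f)\|\le\frac{1}{(2\pi)^n}\int e^{-\hbar(|a|^2+|b|^2)/4}|\widehat f(a,b)|\,da\,db\longrightarrow 0
\]
as $\hbar\to+\infty$. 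You should also replace $C_c(\bR^{2n})$ by $C_c^\infty(\bR^{2n})$ (or any dense $*$-subalgebra of $\mathcal{S}(\bR^{2n})$) when invoking (2) and (3), since those are only established for Schwartz symbols; the extension to $C_0(\bR^{2n})$ by the uniform bound $\|Q_\hbar^B\|\le 1$ is then fine as you describe.
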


\noindent As  final technical result we quote the  following one which will be of crucial relevance  in our paper.

\begin{proposition}\label{prooff} Referring to the definition (\ref{defQB2}) of $Q_\hbar^B$, the following facts are true. 
\begin{itemize}
\item[(a)] Consider  $f \in C_0(\bR^n)$. If interpreting $f$ as a function in $L^\infty(\bR^{2n}, \frac{dqdp}{(2\pi \hbar)^n})$
constant in the variable $p$, then
\beq
||Q_\hbar^B(f)- f|| \to 0 \quad \mbox{for $\hbar \to 0^+$}\:,
\eeq
where both operators are defined in $L^2(\bR^{n}, dx)$ and 
$(f\psi)(x):= f(x)\psi(x)$ if $\psi \in L^2(\bR^{n}, dx)$ and $x\in \bR^n$.
\item[(b)] Consider $f\in {\cal S}(\bR^n)$. If interpreting $f$ as a function in $L^\infty(\bR^{2n}, \frac{dqdp}{(2\pi \hbar)^n})$
constant in the variable $q$, then
 it holds
\beq
||Q_\hbar^B(f) - \check{f}_\hbar*|| \to 0 \quad \mbox{for $\hbar \to 0^+$}\:,
\eeq
where both operators are defined in $L^2(\bR^{n}, dx)$, where $\check{f}_\hbar := {\cal F}_\hbar^{-1}(f)$ and $(\check{f}_\hbar*)(\psi):= \check{f}_\hbar*\psi$ if $\psi \in L^2(\bR^{n}, dx)$.
In particular, for every chosen $t>0$,
$$||Q_\hbar^B(e^{-tp^2}) - e^{t\hbar^2 \Delta}|| \to 0 \quad \mbox{for $\hbar \to 0^+$}\:.$$
\end{itemize}
\end{proposition}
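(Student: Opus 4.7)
The plan is to exploit the explicit form of the coherent-state projector $|\Psi_\hbar^{(q,p)}\rangle\langle\Psi_\hbar^{(q,p)}|$ in (\ref{defQB2}) and carry out one of the two integrations (over $p$ for part (a), over $q$ for part (b)) in closed form, reducing both statements to an elementary approximate-identity argument based on the Gaussian kernel $G_\hbar(z):=(\pi\hbar)^{-n/2} e^{-z^2/\hbar}$, which is a probability density concentrating at the origin as $\hbar\to 0^+$.

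For part (a), with $f=f(q)$ independent of $p$, I would first compute the product $\Psi_\hbar^{(q,p)}(x)\overline{\Psi_\hbar^{(q,p)}(y)}$ from (\ref{vecPSI}); its only $p$-dependent factor is the plane wave $e^{ip\cdot(x-y)/\hbar}$, and integrating over $p$ produces $(2\pi\hbar)^n\,\delta(x-y)$. This collapses the integral kernel of $Q_\hbar^B(f)$ to $\delta(x-y)\,(G_\hbar * f)(x)$, so $Q_\hbar^B(f)$ acts as multiplication by $G_\hbar*f$. Since the operator norm of a multiplication operator on $L^2(\bR^n,dx)$ equals the $L^\infty$-norm of the multiplier, one gets $\|Q_\hbar^B(f)-f\|=\|G_\hbar*f-f\|_\infty$, and for $f\in C_0(\bR^n)$ a standard approximate-identity argument (splitting the convolution integral into the regions $|z|<\delta$ and $|z|\geq\delta$, using uniform continuity of $f$ on the first piece and the vanishing of the Gaussian mass outside a ball on the second) yields $G_\hbar*f\to f$ uniformly.

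For part (b), with $f=f(p)$ independent of $q$, I would perform the $q$-integration first using the identity $(x-q)^2+(y-q)^2=2\bigl(q-\frac{x+y}{2}\bigr)^2+\frac{1}{2}(x-y)^2$, obtaining $\int e^{-(x-q)^2/(2\hbar)-(y-q)^2/(2\hbar)}\,dq=(\pi\hbar)^{n/2} e^{-(x-y)^2/(4\hbar)}$. Substituting back yields the integral kernel of $Q_\hbar^B(f)$ in the form $(2\pi\hbar)^{-n/2} e^{-(x-y)^2/(4\hbar)}\,\check f_\hbar(x-y)$; this is translation-invariant, hence by the Plancherel theorem defines a Fourier multiplier, whose symbol I would compute by interchanging integrations and carrying out one additional Gaussian integral, obtaining $(G_\hbar*f)(p)$. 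The operator $\check f_\hbar *$, via the convolution theorem combined with the Fourier-Plancherel isometry $F_\hbar$, coincides with the Fourier multiplier $f(P)$ with symbol $f(p)$; therefore $\|Q_\hbar^B(f)-\check f_\hbar *\|=\|G_\hbar*f-f\|_\infty$, which tends to $0$ by exactly the same approximate-identity argument as in (a), since $f\in{\cal S}(\bR^n)$ is bounded and uniformly continuous. The particular case follows by specializing $f(p)=e^{-tp^2}$, for which $f(P)=e^{-tP^2}=e^{t\hbar^2\Delta}$ by the spectral theorem applied to $P^2=-\hbar^2\Delta$.

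The main obstacle I anticipate is bookkeeping of the $(2\pi\hbar)^{n/2}$ normalization factors arising from (\ref{defQB2}), from the Gaussian integrations in $p$ and $q$, and from the $\hbar$-scaled Fourier/convolution conventions of the paper, in order to verify that the Fourier multiplier symbol of $\check f_\hbar *$ really coincides with $f$ (and not some rescaled version of it); once those normalizations are aligned, both parts reduce to the same elementary statement that the family $\{G_\hbar\}_{\hbar>0}$ is an approximate identity acting uniformly on bounded uniformly-continuous functions.
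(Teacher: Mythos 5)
Your plan is essentially the paper's own proof: both parts proceed by integrating out one variable from the coherent-state kernel, identifying $Q_\hbar^B(f)$ as (conjugate to) multiplication by the Gaussian smoothing $G_\hbar * f$ with $G_\hbar(z)=(\pi\hbar)^{-n/2}e^{-z^2/\hbar}$, and concluding with the standard approximate-identity estimate in $\|\cdot\|_\infty$; the only cosmetic difference is that the appendix works with $\phi\in\mathcal S(\bR^n)$ and a density argument instead of a distributional $\delta$-kernel, and for (b) it passes through $F_\hbar$ as a conjugating unitary rather than computing the multiplier symbol of the translation-invariant kernel directly, but these are the same calculation.

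The normalization worry you flag is real. With the paper's conventions, $F_\hbar(\check f_\hbar*\psi)=(2\pi\hbar)^{n/2}f\,F_\hbar\psi$, so the ordinary convolution $\check f_\hbar *$ has Fourier multiplier symbol $(2\pi\hbar)^{n/2}f$, not $f$. What the appendix actually proves is $\|Q_\hbar^B(f)-F_\hbar^{-1}fF_\hbar\|\to 0$, i.e. $\|F_\hbar Q_\hbar^B(f)F_\hbar^{-1}-f\|=\|G_\hbar*f-f\|_\infty\to 0$, and this is consistent with the final conclusion $\|Q_\hbar^B(e^{-tp^2})-e^{t\hbar^2\Delta}\|\to 0$ since $F_\hbar^{-1}e^{-tp^2}F_\hbar=e^{t\hbar^2\Delta}$. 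So the operator in the displayed statement of part (b) should be read as $F_\hbar^{-1}fF_\hbar=(2\pi\hbar)^{-n/2}\check f_\hbar*$ rather than the unnormalized $\check f_\hbar*$; once you use that interpretation, your proof goes through as written.
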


\begin{proof}
See Appendix \ref{appA}.
\end{proof}

\subsection{Classical limit}\label{claslim}
In the rest of the paper  we shall deal with  the notion of the {\em classical limit}.  Let us denote by $H_{\hbar}$ some ($\hbar$-dependent) Hamiltonian encoding a quantum theory on the Hilbert space $\mathcal{H}=L^2(\mathbb{R}^n,dx)$, and by $\{\psi_{\hbar}\}_{\hbar>0}$ a sequence of corresponding normalized eigenvectors of $\{H_{\hbar}\}_{\hbar> 0}$, of course, assuming they exist.  We want to investigate what happens in the regime of very small $\hbar$ 
to the interplay of these states and the observables of the system.
It is a hard problem to capture the behaviour of the sequence $\{\psi_{\hbar}\}_{\hbar>0}$ in $\mathcal{H}$, and in general it has not even a limit herein. However in a $C^*$-algebraic setting things can become more smooth. The most common way is to consider the corresponding algebraic vector states $\{\omega_\hbar\}_{\hbar>0}$ on $\gB(\mathcal{H})$, defined by 
\begin{align}
\omega_\hbar(\cdot).=\langle\psi_\hbar,(\cdot)\psi_\hbar\rangle. \label{vectorstate}
\end{align}
A natural question is to find a suitable set of physical observables for which the sequence of states $\{\omega_\hbar\}_{\hbar>0}$ converge (in some topology) to a state on a certain commutative algebra, establishing then a link between the quantum and classical theory. If it does, then the limit is also called the {\bf classical limit} of the sequence of eigenvectors $\{\psi_{\hbar}\}_{\hbar>0}$. This strongly depends on this sequence which behaviour is in general unknown, especially when $\hbar$ changes. One does therefore not escape making assumptions, but instead of imposing conditions on the eigenvectors $\psi_{\hbar}$ and the Hilbert space $\mathcal{H}$ we impose conditions on the algebra of observables, as will become clear soon. For purpose of this paper we are interested in the semiclassical behaviour of $\hbar$-dependent eigenvectors $\{\psi_{\hbar}\}_{\hbar>0}$ in $L^2(\mathbb{R}^n)$ corresponding to 
\begin{itemize}\label{twocases}
\item[(1)] operators of the form $Q_\hbar^B(e_h)$, where $e_h\in C_0(\mathbb{R}^{2n})$ and $Q_\hbar^B$ is the Berezin quantizaton map defined by \eqref{defQB}, and $e_h$ is in principle related with a classical hamiltonian $h$;
\item[(2)] Schr\"{o}dingers operator of the form $H_\hbar=-\hbar^2\Delta + V$, (with $V$ a potential satisfying some conditions) densely defined on the Hilbert space $L^2(\mathbb{R}^n)$.
\end{itemize}
We will prove that under some hypotheses the vector state \eqref{vectorstate} associated to such sequences of eigenvectors admits a classical limit with respect to the observables  $Q_\hbar^B(f)$, $f\in C_0(\mathbb{R}^{2n})$ with $\overline{f}=f$.\footnote{By $(ii)$ in Definition \ref{def:deformationq} it follows that $Q_\hbar^B(f)$ is self-adjoint, as should be the case in order to define a physical observable.}  By Theorem \ref{BerezinC0} the set $Q_\hbar^B(C_0(\mathbb{R}^{2n}))$ equals the algebra of compact operators on $L^2(\mathbb{R}^n, dx)$, so that a relatively large class of observables is considered. Using these observables the statement that the sequence of eigenvectors $\{\psi_\hbar\}_{\hbar>0}$ admits a classical limit now means that
\begin{align}
\omega_0(f)=\lim_{\hbar\to 0^+}\omega_\hbar(Q^B_{\hbar}(f)), \label{classical limit}
\end{align}
exists for all $f\in C_0(\mathbb{R}^{2n})$ and and defines a state $\omega_0$ on $C_0(\mathbb{R}^{2n})$. The state $\omega_0$ may be regarded as the classical limit of the  sequence of vector states $\omega_{\hbar}$ defined in \eqref{vectorstate}. From the mathematical side, by the Riesz Representation Theorem the statement in \eqref{classical limit} means that for all $f\in C_0(\mathbb{R}^{2n})$ one has
\begin{align}
 \mu_0(f)=\lim_{\hbar \to 0^+}\int_{\bR^{2n}}f(q,p) d\mu_{\psi_\hbar}(q,p),\label{classical limit 3} 
\end{align} 
where $\mu_0$ is the probability measure corresponding to the state $\omega_0$ and $\mu_{\psi_\hbar}$, with $\hbar>0$,  is a probability measure on $\bR^{2n}$ with density $B_{\psi_\hbar}(\sigma):=|\langle \Psi_\hbar^{\sigma},\psi_\hbar\rangle|^2$ (where $\Psi_\hbar^{\sigma}$ is the coherent state vector defined by \eqref{vecPSI}) also called the {\em Husimi density function} associated to the unit vector $\psi_\hbar$. In other words 
$\mu_{\psi_\hbar}$ is given by
\begin{align}
d\mu_{\psi_\hbar}(q,p)=|\langle \Psi_\hbar^{(q,p)},\psi_\hbar\rangle|^2dqdp/(2\pi\hbar)^n\:,\quad  \sigma\in\mathbb{R}^{2n}\:.
\end{align}
It still remains a challenging problem to work with \eqref{classical limit} (or equivalently with the Husimi function) as there is a priori no information on the eigenvectors $\{\psi_{\hbar}\}_{\hbar>0}$ corresponding to the quantum Hamiltonians.
Nonetheless, we will see that the semi-classical behaviour of the sequence of eigenvectors $\{\psi_{\hbar}\}_{\hbar>0}$ corresponding to case (1) above is encoded by the algebraic properties of the function $e_h$, which are relatively well manageable and allow us to prove the existence of the classical limit. The semi-classical behaviour of eigenvectors corresponding to  Schr\"{o}dinger operators at first sight seems a more complicated issue. However, it turns out that also these operators are related to quantization maps, in the sense that $e^{-tH_\hbar}$ is an approximation\footnote{The precise meaning of this will become clear in the next sections.} of the $Q_\hbar^B(e^{-th})$, for  $h(q,p) = p^2+V(q)$. Indeed, we shall see that under some assumptions  on $V$ this happens and also in that case the existence of the classical limit is guaranteed and well-understood: the limit only depends on the nature  of the classical potential  $V$.

\section{Semiclassical properties of Berezin quantization maps}\label{SemiclassicalpropertiesofBerezinquantizationmaps}
We prove the existence of the classical limit of a sequence of eigenvectors for operators of the form $Q_\hbar^B(e)$ where $e$ is some function in $C_0(\bR^{2n})$. We hereto first introduce and prove some preliminaries on Berezin quantization. We then prove an important localization result of a sequence of eigenvectors of $Q_\hbar^B(e)$ in the semi-classical regime $\hbar\to 0^+$. 
Finally, we prove the existence of the classical limit in the form of Thm. \ref{MT2} and Thm. \ref{thm:claslimNW2} respectively, where the first theorem considers the case when no symmetry is present, whilst the latter theorem deals with a certain symmetry.

\subsection{Isometric embedding $L^2(\bR^n,dx) \subset L^2(\bR^{2n}, d\mu_\hbar)$  and $Q^B$-equivariant group representations}
We are going to introduce some preparatory results. necessary to study the classical limit.  Several of them   actually hold in a more general setting of a coherent pure state Berezin quantization on a symplectic manifold \cite{Lan98,Lan17}. We start with a proposition on the Berezin quantization
specialized to $\bR^{2n}$. (a) is actually \cite[Proposition II 1.5.2]{Lan98} specialized to the symplectic manifold $(\bR^{2n}, \sum_{k=1}^n dp_k \wedge dq^k)$  where $\mu_\hbar=\frac{dpdq}{(2\pi\hbar)^n}$, and $\mathcal{H}_\hbar=L^2(\mathbb{R}^n,dx)$ for $\hbar>0$. We now prove the proposition.
\begin{proposition}\label{newhilbertspace2} Referring to the coherent state vectors (\ref{vecPSI}) $\Psi_\hbar^{\sigma} \in L^2(\mathbb{R}^n,dx)$ used to construct the quantization Berezin map 
 (\ref{defQB2}),
\begin{itemize}
\item[(a)]
 there exists  an  isometry $W: L^2(\mathbb{R}^n,dx) \to L^2(\bR^{2n}, d\mu_\hbar)$,  completely defined by
\beq (W\phi)(q,p)=\langle \Psi_{\hbar}^{(q,p)},\phi\rangle
= \int_{\bR^n} e^{\frac{ip\cdot q}{2\hbar}} e^{-\frac{ip\cdot x}{\hbar}}
e^{-\frac{(x-q)^2}{2\hbar}} \phi(x) \frac{dx}{(\pi \hbar)^{n/2}}
\:, \label{alternate}\eeq
in particular  $W^*W=I_{L^2(\mathbb{R}^n,dx)}$ and  $p:= WW^*:  L^2(\bR^{2n}, d\mu_\hbar) \to  L^2(\bR^{2n}, d\mu_\hbar)$ is the  orthogonal  projector onto $\mbox{ran}(W)= \overline{\mbox{ran}(W)}$;
\item[(b)] It holds $W({\cal S}(\bR^n)) \subset {\cal S}(\bR^{2n})$.
\end{itemize}
\end{proposition}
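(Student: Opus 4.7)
The plan for (a) is to reduce the isometry of $W$ to the coherent‑state resolution of the identity. By Theorem \ref{teoQB1}(1), $Q_\hbar^B(1) = I_{L^2(\bR^n,dx)}$, and evaluating the sesquilinear form (\ref{defQB}) at the constant function $1$ yields, for every $\phi \in L^2(\bR^n,dx)$,
\begin{equation*}
\|\phi\|^2 = \langle\phi,Q_\hbar^B(1)\phi\rangle =\int_{\bR^{2n}} |\langle \Psi_\hbar^{(q,p)},\phi\rangle|^2 \, d\mu_\hbar(q,p).
\end{equation*}
This shows that $W\phi := \langle \Psi_\hbar^{(\cdot,\cdot)}, \phi\rangle$ is a well‑defined isometric embedding, and the explicit formula (\ref{alternate}) is obtained by substituting (\ref{vecPSI}). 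Polarization gives $W^*W = I_{L^2(\bR^n,dx)}$, and $p := WW^*$ is self-adjoint with $p^2 = W(W^*W)W^* = WW^* = p$; hence $p$ is the orthogonal projector onto $\mathrm{ran}(W)$, which is closed since $W$ is an isometry between Hilbert spaces.

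For (b) I would exploit the translation $y = x - q$ to rewrite (\ref{alternate}) as
\begin{equation*}
(W\phi)(q,p) = \frac{e^{-ip\cdot q/(2\hbar)}}{(\pi\hbar)^{n/2}}\int_{\bR^n} e^{-ip\cdot y/\hbar}\, g_q(y)\, dy, \qquad g_q(y) := e^{-y^2/(2\hbar)}\phi(y+q),
\end{equation*}
so that, up to a unimodular phase, $W\phi$ is the partial Fourier transform in $y$, evaluated at $p/\hbar$, of a jointly smooth function of $(q,y)$. Smoothness of $W\phi$ on $\bR^{2n}$ then follows by differentiation under the integral sign; the challenge is rapid decay in $(q,p)$ together with all derivatives.

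Rapid decay in $p$ is the standard payoff of integration by parts in $y$, exploiting that $g_q\in {\cal S}(\bR^n)$ with seminorm bounds that are themselves Schwartz in $q$. The main obstacle is the decay in $q$, for which I would split the $y$-integration at $|y|=|q|/2$: on $|y|<|q|/2$ one has $|y+q|>|q|/2$, so the Schwartz bound on $\phi$ yields any power of $|q|^{-1}$; on $|y|\geq |q|/2$ the Gaussian $e^{-y^2/(2\hbar)}$ is bounded by $e^{-q^2/(8\hbar)}$. Combining these estimates with the Leibniz rule (differentiation in $q$ replaces $\phi$ by another Schwartz function; differentiation in $p$ brings down polynomial factors in $y$ and $q$) delivers the required bounds
\begin{equation*}
\sup_{(q,p)\in\bR^{2n}}(1+|q|^2+|p|^2)^N\,|\partial_q^\alpha\partial_p^\beta(W\phi)(q,p)|<\infty
\end{equation*}
for all $N\in\bN$ and all multi-indices $\alpha,\beta$, which is precisely $W\phi\in {\cal S}(\bR^{2n})$. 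The argument is essentially bookkeeping and is equivalent to the well-known fact that the short-time Fourier transform with a Gaussian window (Gabor/Bargmann transform) preserves Schwartz space, from which (b) could alternatively be cited directly.
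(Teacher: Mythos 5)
Your proof is correct and takes essentially the same route as the paper's: part (a) uses the coherent-state resolution of the identity $Q_\hbar^B(1)=I$ together with polarization, and part (b) rewrites $W\psi$ as a Gaussian-windowed Fourier transform and then derives the Schwartz seminorm bounds by differentiating under the integral, integrating by parts for the $p$-direction, and playing the Gaussian window against the Schwartz decay of $\psi$ for the $q$-direction. The paper uses the symmetric shift $z=x-q/2$ so that the phase cancels completely, whereas you shift by $q$ and carry a harmless unimodular phase; your explicit split of the $y$-integration at $|y|=|q|/2$ makes the $q$-decay estimate a bit more transparent than the paper's rather terse statement of the corresponding uniform bound.
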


\begin{proof}
From now on 
\begin{align}
||\Phi||_{L^2(\bR^{2n}, d\mu_\hbar)}=\int_{\bR^{2n}} d\mu_\hbar(\sigma)|\langle\Psi_{\hbar}^\sigma,\phi\rangle|^2.\label{newnorm2}
\end{align}
(a) By property (1) in theorem \ref{teoQB1} , for every $\phi \in L^2(\bR^n, dx)$ the associated function 
$\bR^{2n} \ni \sigma \mapsto (W\phi)(\sigma):=\langle\Psi_{\hbar}^{\sigma},\phi\rangle$ satisfies
$W\phi \in L^2\left(\bR^{2n},d\mu_\hbar\right)\quad \mbox{and}\quad ||W\phi||_{L^2(\bR^{2n},d\mu_\hbar)}= ||\phi||_{L^2(\bR^n, dx)}\:.$
Hence $W^*W=I$, the remaining part is a standard property of isometric maps in Hilbert spaces.\\
(b) With trivial manipulations,
$(W\psi)(q,p)  =  \frac{1}{(\pi \hbar)^{n/4}} \int_{\bR^n}e^{-i p\cdot z/\hbar} e^{-(2z-x)^2/(2\hbar)} \psi(z+q/2) dz$,
so that, where $a,b,c,d,k$ are multiindices, passing the derivatives under the integration symbol (for dominated convergence theorem) and using 
integration by parts since, in particular,  $\psi \in {\cal S}(\bR^n)$,
$$p^a q^c \partial_p^b \partial^d_q (W\psi)(q,p)= \int_{\bR^n}e^{-i p\cdot z/\hbar} e^{-(2z-x)^2/(2\hbar)} \sum_ kp^{abcd}_k(z,q) \partial^{k}_z \psi(z+q/2) dz,$$
where the sum is finite and $p^{abcd}_k(z,q)$ are polynomials in $z$ and $q$. Hence, for some constant $K_{abcd}>0$, it holds
 $|\sum_ kp(z,q)_k \partial^{k}_z \psi(z+q/2)| \leq K_{abcd}$, so that
$$|p^a   q^c \partial_p^b\partial^d_q (W\psi)(q,p)| \leq 
K_{abcd} \int_{\bR^n} e^{-(2z-x)^2/(2\hbar)} dz = C_{abcd}<+\infty\quad \mbox{for all $(q,p) \in \bR^{2n}$.}$$
Arbitrariness of $a,b,c,d$ implies the thesis.
\end{proof}

If there moreover exists an  action of a group $G$ acting by symplectomorphisms on the manifold $(\bR^{2n}, \sum_{k=1}^n dp_k \wedge dq^k)$, it follows that the quantization maps $Q_{\hbar}^B$ are equivariant
under a suitable unitary representation of $G$ in $L^2(\mathbb{R}^n,dx)$. The precise statement is given in the proposition below.
For any $g\in G$ we denote the pullback of the action of $G$ on functions $f: \bR^{2n} \to \bC$ by $\zeta_g$, i.e., 
$$(\zeta_g f)(\sigma)=f(g^{-1}\sigma)\:,  \quad \sigma\in \bR^{2n}\:, g \in G\:.$$

\begin{proposition}\label{equivariance2}
Let $G$ be a group acting by symplectomorphisms on  the symplectic manifold $(\bR^{2n}, \sum_{k=1}^n dp_k \wedge dq^k)$.  Then there exists a unitary representation  $U:G\to B(L^2(\mathbb{R}^n,dx))$ such that  
the maps $Q_\hbar^B$ are $\zeta$-equivariant,
\begin{align}
U_gQ_{\hbar}^B(f)U^{-1}_g=Q_{\hbar}^B(\zeta_g f)\:,  \ \ g\in G,\  f\in L^\infty\left(\bR^{2n},d\mu_\hbar\right). \label{equivariantofQ2}
\end{align}
The representation $U$ is completely defined by the requirement
\beq\label{equivariantofQ23}
WU_g\psi = p\zeta_g(W \psi)\quad \mbox{for every $\psi \in L^2(\bR^n,dx)$ and $g\in G$.}
\eeq
\end{proposition}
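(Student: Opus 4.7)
The plan is to take $WU_g \psi = p\,\zeta_g W\psi$ as a \emph{definition} of $U_g$: applying $W^*$ and using $W^*W = I$ yields the explicit formula $U_g = W^* \zeta_g W \in \gB(L^2(\bR^n,dx))$. Four things then need to be checked: (i) $\zeta_g$ is unitary on $L^2(\bR^{2n},d\mu_\hbar)$; (ii) $U_g$ is unitary; (iii) $g \mapsto U_g$ is a group homomorphism; (iv) the equivariance (\ref{equivariantofQ2}) holds. Step (i) is immediate, since a symplectomorphism preserves the symplectic volume $\bigwedge_k dp_k \wedge dq^k$ and hence the normalized Liouville measure $\mu_\hbar$; the change-of-variables formula then gives $\|\zeta_g \Phi\|_{L^2(d\mu_\hbar)} = \|\Phi\|_{L^2(d\mu_\hbar)}$ and $\zeta_{g^{-1}} = \zeta_g^{-1}$. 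This alone makes $U_g$ a contraction on $L^2(\bR^n,dx)$.

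The main obstacle is step (ii), which I would reduce to the claim that the closed subspace $\mathrm{ran}(W) = \mathrm{ran}(p)$ is $\zeta_g$-invariant, equivalently $[\zeta_g, p] = 0$ (using that a unitary that preserves a subspace also preserves its orthogonal complement). Once this commutation is in hand,
$$U_g U_g^* = W^* \zeta_g (WW^*) \zeta_{g^{-1}} W = W^* \zeta_g\, p\, \zeta_{g^{-1}} W = W^* p\, \zeta_g \zeta_{g^{-1}} W = W^* p W = I,$$
and symmetrically $U_g^* U_g = I$. To prove the invariance I would construct, for each $g \in G$, a unitary $V_g$ on $L^2(\bR^n,dx)$ satisfying $V_g \Psi_\hbar^\sigma = \chi(g,\sigma)\,\Psi_\hbar^{g\sigma}$ for some unimodular cocycle $\chi$; such operators exist for the classes of symplectomorphisms relevant to the paper (phase-space translations via the Weyl system, linear symplectic transformations preserving the complex structure of $\bR^{2n}\simeq \bC^n$ via the metaplectic representation, and compositions thereof). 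Evaluating (\ref{alternate}) at $V_g \psi$ then gives $(WV_g \psi)(\sigma) = \langle V_g^* \Psi_\hbar^\sigma, \psi\rangle = \overline{\chi(g^{-1},\sigma)}(W\psi)(g^{-1}\sigma)$, and the phase is reabsorbed so that $WV_g = \zeta_g W$; this identifies $V_g$ with $W^* \zeta_g W = U_g$ and simultaneously establishes $\zeta_g(\mathrm{ran}(W)) = \mathrm{ran}(W)$.

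Step (iii) then follows directly from $[\zeta_g, p]=0$: $U_g U_h = W^* \zeta_g\, p\, \zeta_h W = W^* \zeta_g \zeta_h W = W^* \zeta_{gh} W = U_{gh}$. Finally, for step (iv), the key observation is that $Q_\hbar^B(f) = W^* M_f W$, with $M_f$ denoting multiplication by $f$ on $L^2(\bR^{2n},d\mu_\hbar)$, as can be read off directly from (\ref{defQB}); moreover $\zeta_g M_f \zeta_g^{-1} = M_{\zeta_g f}$ by inspection. Using $[\zeta_g,p]=0$ once more one obtains
$$U_g Q_\hbar^B(f) U_g^{-1} = W^* \zeta_g\, WW^* M_f WW^*\, \zeta_g^{-1} W = W^* \zeta_g\, p M_f p\, \zeta_g^{-1} W = W^* M_{\zeta_g f} W = Q_\hbar^B(\zeta_g f),$$
which is the desired equivariance.
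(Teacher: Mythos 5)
Your strategy is the same as the paper's: define $U_g := W^*\zeta_g W$ (the paper calls $u_g$ the operator on $L^2(\bR^{2n},d\mu_\hbar)$ induced by the pullback, but it is the same object), derive unitarity from $\zeta_g$-invariance of $\mathrm{ran}(W)$, then prove equivariance. Step (i) is immediate, and steps (iii), (iv) are correct granted the commutation $[\zeta_g,p]=0$ from step (ii); your route to (iv) via $Q_\hbar^B(f)=W^*M_fW$ and $\zeta_gM_f\zeta_g^{-1}=M_{\zeta_gf}$ is in fact cleaner than the paper's direct manipulation of the defining integral. You also correctly flag $[\zeta_g,p]=0$, i.e.\ $\zeta_g(\mathrm{ran}(W))=\mathrm{ran}(W)$, as the crux of the whole argument; the paper dismisses it with the unsupported remark ``by construction we also have $u_g(\mathrm{ran}(W))\subset \mathrm{ran}(W)$''.

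The genuine gap is in your proof of that commutation: the step ``the phase is reabsorbed so that $WV_g=\zeta_gW$'' fails. Having established $(WV_g\psi)(\sigma)=\overline{\chi(g^{-1},\sigma)}\,(W\psi)(g^{-1}\sigma)$, you may discard the factor $\overline{\chi(g^{-1},\sigma)}$ only if it is \emph{constant in $\sigma$}. For the Weyl translations you explicitly invoke, the cocycle is the symplectic phase $e^{i\omega(\cdot,\cdot)/(2\hbar)}$, which genuinely depends on $\sigma$, and multiplying an element of $\mathrm{ran}(W)$ --- essentially a Bargmann space of functions antiholomorphic in $z=q+ip$ times a Gaussian weight --- by a non-constant unimodular function leaves that space. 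Concretely, for $n=1$ and the translation $g:(q,p)\mapsto(q+a,p)$, a direct computation from (\ref{alternate}) gives
\begin{align*}
\zeta_g(W\psi)(q,p) = (W\psi)(q-a,p) = e^{ipa/(2\hbar)}\,(W\tau_a\psi)(q,p), \qquad (\tau_a\psi)(x):=\psi(x-a),
\end{align*}
and the non-constant factor $e^{ipa/(2\hbar)}$ shows $\zeta_g(W\psi)\notin\mathrm{ran}(W)$; thus $W^*\zeta_gW$ is a strict contraction, not a unitary. The same example reveals that the paper's ``by construction'' claim is also incorrect, so the proposition as stated (arbitrary symplectomorphisms) is too broad; it holds precisely when $\zeta_g$ preserves $\mathrm{ran}(W)$, e.g.\ for linear symplectomorphisms commuting with the standard complex structure of $\bR^{2n}\simeq\bC^n$, such as $(q,p)\mapsto(Rq,Rp)$ with $R\in O(n)$, which are the only ones the paper actually uses later (Prop.\ \ref{TGB}, Prop.\ \ref{WEAKSSB}).
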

\begin{proof} 
Let us write $\mathcal{H}^\hbar:=L^2\left(\bR^{2n},d\mu_\hbar\right)$.
By definition of $W$ for any $\phi\in\mathcal{H}_\hbar$,  as $\Phi := W\phi$  is a function on $\bR^{2n}$, we can define the operator $u_g
: L^2\left(\bR^{2n},d\mu_\hbar\right) \to L^2\left(\bR^{2n},d\mu_\hbar\right)$ by $(u_g\Phi)(\sigma):=\Phi(g^{-1}\sigma)\:.$
 This map is isometric, since the action of the  group preserves  the Liouville measure $\mu_L$ (and thus also $d\mu_\hbar$) as the action  is made of symplectomorphisms, and it is finally  surjective as the reader immediately proves since the action of each $g$ is bijective. 
By construction we also have $u_g(\mbox{ran}(W)) \subset  \mbox{ran}(W)$.
By construction $u_1 = I$ and $u_gu_{g'}= u_{g\cdot g'}$. Next
$U_g:\mathcal{H}_{\hbar}\to \mathcal{H}_{\hbar}$ is  defined by $U_g :=W^*u_gW : \mathcal{H}_{\hbar} \to \mathcal{H}_{\hbar}$. This operator is unitary since $u_g$ is unitary and
$U^*_g U_g  =   W^*u^*_gWW^*u_gW =W^*u^*_gpu_gW =W^*u^*_gu_gW= W^*W=I$ together with 
$U_g U^*_g  =  W^*u_gWW^*u^*_gW =W^*u_gpu^*_gW  = W^*u_gpu_{g^{-1}}W
W^*u_gu_{g^{-1}}W = W^*W=I$.
By  construction,
$U_1 = I$ and $U_gU_{g'}= U_{g\cdot g'}$ so that $G \ni g \mapsto U_g$ is a unitary representation.
Let us finally prove the equivariance property. For any $\phi,\psi\in \mathcal{H}_{\hbar}$, we now compute
\begin{align}
&\langle \phi, Q^B_{\hbar}(\zeta_{g^{-1}} f)\psi\rangle=
\int_{\bR^{2n}}d\mu_\hbar(\sigma)\overline{(W\phi)(\sigma)}(W\psi)(\sigma)f(g\sigma)=
\int_{\bR^{2n}}d\mu_\hbar(\sigma)\overline{\Phi(\sigma)}\Psi(\sigma)f(g\sigma) \nonumber \\& =\int_{\bR^{2n}}d\mu_\hbar(\sigma)\overline{\Phi(g^{-1}\sigma)}\Psi(g^{-1}\sigma)f(\sigma)=
\int_{\bR^{2n}}d\mu_\hbar(\sigma)\overline{({u}_g\Phi)(\sigma)}({u}_g\Psi)(\sigma)f(\sigma)\nonumber \\
& =
\int_{\bR^{2n}}d\mu_\hbar(\sigma)\overline{(WU_g\phi)(\sigma)}(WU_g\psi)(\sigma)f(\sigma)
=
\langle U_g\phi , Q^B_{\hbar}(f) U_g\psi\rangle=\langle \phi, U_g^*Q^B_{\hbar}(f)U_g\psi\rangle, \nonumber 
\end{align}
where we used the fact that the measure $d\mu_\hbar$ is $G$-invariant. Since this holds for any $\phi,\psi\in\mathcal{H}_\hbar$, we can conclude that 
$U^*_gQ_{\hbar}^B(f)U_g=Q_{\hbar}^B(\zeta_{g^{-1}} f)\:.$
replacing $g$ for $g^{-1}$ and noticing that $U_{g^{-1}} =(U_g)^{-1} = U^*_g$, we have the thesis.
The last statement is a rephrasing of $U_g=W^*u_gW$, using $W^*W=I$ and $WW^* = p$.
\end{proof}
Notice  that both $W$ and $U$ generally depend on the value of $\hbar$. 

\subsection{Localization of eigenvectors}
We are about establishing establish a first important  result concerning the localization of eigenvectors, which allows us to prove the existence of the classical limit in section \ref{Generalizations}. This topic is not new, several similar results have been achieved over the years \cite{Charles,Com,DimSjo,Porte,Ven2020,MZ} where different classes of neighbourhoods of localisation were exploited.\footnote{Our neighbourhoods ${\cal V}_\epsilon$ are particularly adapted to the remaining proofs of our work.} In such works one typically studies semiclassical defect measures induced by the relevant eigenvectors and uses techniques from pseudo-differential calculus. In this section, we will not go into such details, we however aim to provide a rather  neat algebraic approach that forms the basis for spontaneous symmetry breaking as explained in the next sections. 

\begin{proposition}\label{localization2}
Consider the manifold $(\bR^{2n}, \sum_{k=1}^n dp_k \wedge dq^k)$ with associated Berezin quantization maps $Q^B_\hbar$.
Given a real-valued $e\in  C_0(\bR^{2n})$, let $\{\phi_{\hbar}\}_{\hbar}\subset L^2(\mathbb{R}^n)$ be a sequence of eigenvectors of $Q^B_\hbar(e)$ with  eigenvalues $\{\lambda_{\hbar}\}_{\hbar}$  such that, for some $\Lambda\in \text{ran}(e)$ is such that
\begin{align}
&\lambda_\hbar\to \Lambda    \quad \mbox{for $\hbar\to 0^+$.}
\end{align}
The following facts are true where $\Phi_{\hbar}:= W\varphi_\hbar$ as before.
\begin{itemize}
\item[(1)] Referring to any open neighborhood of the set $e^{-1}(\Lambda)$ of the form
\begin{align}
{\cal V}_\epsilon:= e^{-1}((\Lambda-\epsilon, \Lambda+\epsilon)), \label{nbhV2}
\end{align}
for every given $\epsilon>0$ one has
\begin{align}
||\Phi_\hbar||_{L^2(\bR^{2n}\setminus {\cal V}_\epsilon, d\mu_\hbar)}\to 0, \quad \mbox{for}\quad \hbar \to 0^+ \label{lim}\:.
\end{align}
\item[(2)] If $e^{-1}(\Lambda) = \{\sigma_0\} \in \bR^{2n}$ and the family of sets $\{{\cal V}_\epsilon\}_{\epsilon>0}$ is a fundamental system of neighbourhoods of $\sigma_0$
then
$\langle \varphi_\hbar, Q_\hbar(f) \varphi_\hbar \rangle \to f(\sigma_0)\quad \mbox{as $\hbar \to 0^+$  for every $f\in C_0(\bR^{2n})$.}$
\end{itemize}
\end{proposition}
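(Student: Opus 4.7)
My plan is to work on the ``phase-space side'' via the isometry $W$, treating $d\nu_\hbar := |\Phi_\hbar|^2 d\mu_\hbar$ as a probability measure on $\bR^{2n}$ (the Husimi measure). The crucial identity I intend to exploit is
\[
\int_{\bR^{2n}} f(\sigma)|\Phi_\hbar(\sigma)|^2\, d\mu_\hbar(\sigma) \;=\; \langle \varphi_\hbar, Q_\hbar^B(f)\varphi_\hbar\rangle,
\]
which holds for every bounded $f$ by the very definition \eqref{defQB} of Berezin quantization, together with $\int d\nu_\hbar = \|\Phi_\hbar\|^2 = \|\varphi_\hbar\|^2 = 1$ obtained from Proposition \ref{newhilbertspace2}(a). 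My strategy for (1) is to prove the variance estimate $\int (e-\Lambda)^2\, d\nu_\hbar \to 0$ and then conclude by a Chebyshev-type argument, since on $\bR^{2n}\setminus \mathcal{V}_\epsilon$ the integrand $(e-\Lambda)^2$ is bounded below by $\epsilon^2$.

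To compute the variance, I would expand
\[
\int (e-\Lambda)^2\, d\nu_\hbar \;=\; \int e^2\, d\nu_\hbar \;-\; 2\Lambda \int e\, d\nu_\hbar \;+\; \Lambda^2.
\]
The eigenvalue relation and self-adjointness of $Q_\hbar^B(e)$ give $\int e\, d\nu_\hbar = \langle \varphi_\hbar, Q_\hbar^B(e)\varphi_\hbar\rangle = \lambda_\hbar \to \Lambda$, and $\langle \varphi_\hbar, Q_\hbar^B(e)^2 \varphi_\hbar\rangle = \lambda_\hbar^2 \to \Lambda^2$. The key link is the von Neumann condition \eqref{vNc}, which Theorem \ref{BerezinC0} guarantees holds on $C_0(\bR^{2n})$ (since the Berezin maps furnish a continuous $C^*$-bundle, Dirac--Groenewold--Rieffel being the only omitted condition), giving $\|Q_\hbar^B(e^2)-Q_\hbar^B(e)^2\|_\hbar \to 0$ and hence $\int e^2\, d\nu_\hbar \to \Lambda^2$. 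Assembling the three pieces yields the desired variance $\to 0$, and Chebyshev then gives \eqref{lim}.

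For (2), the idea is a standard ``localization plus approximation'' argument. Fix $f\in C_0(\bR^{2n})$ and $\delta>0$. Continuity of $f$ at $\sigma_0$ combined with the hypothesis that $\{\mathcal{V}_\epsilon\}_{\epsilon>0}$ is a fundamental system of neighbourhoods of $\sigma_0$ supplies an $\epsilon>0$ with $|f(\sigma)-f(\sigma_0)|<\delta$ on $\mathcal{V}_\epsilon$. Split
\[
\bigl| \langle \varphi_\hbar, Q_\hbar^B(f)\varphi_\hbar\rangle - f(\sigma_0)\bigr| \;\leq\; \int_{\mathcal{V}_\epsilon} |f-f(\sigma_0)|\, d\nu_\hbar \;+\; 2\|f\|_\infty \int_{\bR^{2n}\setminus \mathcal{V}_\epsilon}\, d\nu_\hbar,
\]
where I again use $\int d\nu_\hbar = 1$ to absorb the first term into $\delta$. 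Part (1) kills the second term as $\hbar \to 0^+$, so the $\limsup$ is at most $\delta$, and arbitrariness of $\delta$ finishes the argument.

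The only point requiring care is the applicability of the von Neumann condition to $e^2$. Since $e\in C_0(\bR^{2n})$ implies $e^2\in C_0(\bR^{2n})$, and Theorem \ref{BerezinC0} asserts that the Berezin maps define a continuous $C^*$-bundle on all of $C_0(\bR^{2n})$, the condition \eqref{vNc} holds verbatim on this algebra; no appeal to the subalgebra $C_c^\infty(\bR^{2n})$ of Theorem \ref{BerezinS} (with its Dirac condition) is needed. Beyond this, the argument is algebraic and purely relies on the probabilistic interpretation of $d\nu_\hbar$, which is precisely what the isometric embedding $W$ of Proposition \ref{newhilbertspace2} provides.
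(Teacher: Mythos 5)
Your argument is correct and coincides with the paper's proof in all essential respects: the variance estimate $\langle\varphi_\hbar, Q^B_\hbar((e-\Lambda)^2)\varphi_\hbar\rangle\to 0$ obtained by expanding, using the eigenvalue relation and the von Neumann condition on $C_0(\bR^{2n})$ from Theorem \ref{BerezinC0}, followed by Chebyshev for part (1), and the standard localization-plus-uniform-continuity split for part (2). Your remark about why the Dirac--Groenewold--Rieffel condition is not needed (only the $C^*$-bundle continuity) is exactly the point the paper exploits implicitly when invoking Theorem \ref{BerezinC0} rather than Theorem \ref{BerezinS}.
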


\begin{proof}
(1) The thesis arises  from the following fact we shall prove later
\beq
\langle \varphi_\hbar, Q^{B}_\hbar((e-\Lambda)^2) \varphi_\hbar \rangle \to 0\quad \mbox{for $\hbar \to 0$,} \label{central2}
\eeq
where we are using $Q_\hbar^B$ defined on $L^\infty\left(\bR^{2n},d\mu_\hbar\right)$.
Indeed, (\ref{central2}) can be rephrased to
$$\int_{\bR^{2n}} (e(\sigma) -\Lambda)^2 |\Phi_\hbar(\sigma)|^2 d\mu_\hbar \to 0 \quad \mbox{for $\hbar \to 0$}\:,  $$
so that also
$$\int_{\bR^{2n}\setminus {\cal V}_\epsilon} (e(\sigma) -\Lambda)^2 |\Phi_\hbar(\sigma)|^2 d\mu_\hbar  \to 0 \quad \mbox{for $\hbar \to 0$}\:,  $$
as well because the integrand is non-negative.
 However, by definition of ${\cal V}_\epsilon$, $|e(x)-\Lambda| \geq\epsilon$ if $x \in \bR^{2n}\setminus {\cal V}_\epsilon$ and thus 
$$0 \leq  \int_{\bR^{2n}\setminus {\cal V}_\epsilon} |\Phi_\hbar(x)|^2 d\mu_\hbar  \leq \frac{1}{\epsilon^2} \int_{\bR^{2n}\setminus {\cal V}_\epsilon} (e(\sigma) -\Lambda)^2 |\Phi_\hbar(\sigma)|^2 d\mu_\hbar  \to 0,$$
which implies (\ref{lim}). To conclude, it is sufficient to prove (\ref{central2}). To this end we have, by using linearity of $Q^{B}_\hbar$, the fact that $||\varphi_\hbar||=1$ and $Q_\hbar^B(e)\varphi_\hbar=\lambda_\hbar\varphi_\hbar$ in particular,
\begin{align*}
\langle \varphi_\hbar, Q^{B}_\hbar((e-\Lambda)^2) \varphi_\hbar \rangle &= \langle \varphi_\hbar, Q^{B}_\hbar(e^2) \varphi_\hbar \rangle
-2\Lambda \langle \varphi_\hbar, Q^{B}_\hbar(e) \varphi_\hbar \rangle + \Lambda^2 \langle \varphi_\hbar, \varphi_\hbar \rangle
\\&=  \langle \varphi_\hbar, Q^{B}_\hbar(e^2) \varphi_\hbar \rangle
-2\lambda_\hbar \Lambda\langle \varphi_\hbar,  \varphi_\hbar \rangle + \Lambda^2 \langle \varphi_\hbar, \varphi_\hbar \rangle\:.
\end{align*}
That is,
$\langle \varphi_\hbar, Q^{B}_\hbar((e-\Lambda)^2) \varphi_\hbar \rangle =  \langle \varphi_\hbar, Q^{B}_\hbar(e^2) \varphi_\hbar \rangle + \Lambda^2-2\lambda_\hbar \Lambda\:.$
On the other hand it also holds, 
\beq \langle \varphi_\hbar, Q^{B}_\hbar(e^2) \varphi_\hbar \rangle \to \Lambda^2\:.\label{central25}\eeq
In fact, we already know from theorem \ref{BerezinC0} (and exactly here we exploit  $e \in C_0(\bR^{2n})$)   that $||Q^{B}_\hbar(e^2)  - Q^{B}_\hbar(e)Q^{B}_\hbar(e)  || \to 0$
where the norms are the $C^*$ ones. This, in turn,  entails 
$||Q^{B}_\hbar(e^2)\varphi_\hbar  - Q^{B}_\hbar(e)Q^{B}_\hbar(e)\varphi_\hbar  ||| \to 0$
referring to the Hilbert space norms. Using the hypothesis $\lambda_\hbar\to \Lambda$ and $Q_\hbar^B(e)\varphi_\hbar=\lambda_\hbar\varphi_\hbar$ we deduce
$||Q^{B}_\hbar(e^2- \Lambda^2)\varphi_\hbar|| \leq ||Q^{B}_\hbar(e^2)\varphi_\hbar- \lambda_\hbar^2\varphi_\hbar||+|\lambda_\hbar^2 -\Lambda^2|\:||\varphi_\hbar|| \to 0\:,$
so that, from the Cauchy-Schwartz inequality,
$\left|\langle \varphi_\hbar, Q^{B}_\hbar(e^2) \varphi_\hbar \rangle - \Lambda^2\right|  =\left|\langle \varphi_\hbar, Q^{B}_\hbar(e^2-\Lambda^2) \varphi_\hbar
 \rangle \right| \leq ||Q^{B}_\hbar(e^2- \Lambda^2) \varphi_\hbar  || \to 0\:.
$
Finally, by the triangle inequality we estimate
\begin{align*}
|\langle \varphi_\hbar, Q^{B}_\hbar(e^2) \varphi_\hbar \rangle + \Lambda^2 -2\lambda_\hbar \Lambda|\leq  |\langle \varphi_\hbar, Q^{B}_\hbar(e^2) \varphi_\hbar \rangle - \Lambda^2|+|2\Lambda^2 -2\lambda_\hbar \Lambda|,
\end{align*}
which goes to zero by the previous observations, as $\hbar\to 0$. This establishes (\ref{central2}) ending the proof.
\\\\
(2) For every given $m>0$,
$$\left|\langle \varphi_\hbar, Q_\hbar^B(f) \varphi_\hbar\rangle -f(\sigma_0)\right| = \left| \int_{\bR^{2n}}  |\Phi_\hbar(\sigma)|^2 f(\sigma) \frac{d\sigma}{(2\pi \hbar)^n}-
f(\sigma_0) \right|$$
$$=  \left| \int_{\bR^{2n}}  |\Phi_\hbar(\sigma)|^2 (f(\sigma)  -f(\sigma_0))d\mu_\hbar(\sigma) \right| \leq
\int_{\bR^{2n}}  |\Phi_\hbar(\sigma)|^2 |f(\sigma)  -f(\sigma_0)|d\mu_\hbar(\sigma)  $$
$$\leq \int_{\bR^{2n}\setminus {\cal V}_{1/m}}  |\Phi_\hbar(\sigma)|^2 |f(\sigma)  -f(\sigma_0)|d\mu_\hbar(\sigma) 
+  \int_{ {\cal V}_{1/m}}  |\Phi_\hbar(\sigma)|^2 |f(\sigma)  -f(\sigma_0)|d\mu_\hbar(\sigma) \:.$$
In summary,
$$\left|\langle \varphi_\hbar, Q_\hbar^B(f) \varphi_\hbar\rangle -f(\sigma_0)\right| \leq 2||f||_\infty \int_{\bR^{2n}\setminus {\cal V}_{1/m}}  |\Phi_\hbar(\sigma)|^2d\mu_\hbar(\sigma) 
+\sup_{\sigma \in {\cal V}_{1/m}}|f(\sigma)-f(\sigma_0)|\:.$$
Take $\epsilon>0$.
Since the sets  ${\cal V}_{1/m}$ are a fundamental system of neighbourhoods of $\sigma_0$ and $f$ is continuous, there is $m_\epsilon \in \bN$ such that
 $\sup_{\sigma \in {\cal V}_{1/m_\epsilon}}|f(\sigma)-f(\sigma_0)|< \epsilon/2$. Due to statement (1), we can also find $H_\epsilon>0$ such that 
 $0<\hbar< H_\epsilon$ implies
$$ \int_{\bR^{2n}\setminus {\cal V}_{1/m_\epsilon}}  |\Phi_\hbar(\sigma)|^2 |f(\sigma)  -f(\sigma_0)|d\mu_\hbar(\sigma) < \epsilon/2\:.$$
Summing up, for every $\epsilon>0$, there is  $H_\epsilon$ such that 
$\left|\langle \varphi_\hbar, Q_\hbar^B(f) \varphi_\hbar\rangle -f(\sigma_0)\right| \leq \epsilon$
if  $0<\hbar< H_\epsilon$. This is the thesis we wanted to prove.
\end{proof}

\subsection{Classical limits for Berezin quantization maps}\label{Generalizations}
We prove here the existence of the classical limit for a sequence of eigenvectors corresponding to the operators $Q_\hbar^B(e)$, where $e\in C_0(\mathbb{R}^{2n})$. The function $e$ is a classical observable  related to some Hamiltonian function. Since standard 
Hamiltonians of the form 
$h(q,p) =p^2+V(q)$ are unbounded, they stay outside the domain of the quantization map $Q_\hbar^B : C_0(\bR^{2n}) \to \gB(L^2(\bR^n, dx))$ and we cannot take advantage of the $C^*$-algebra formalism. 
  A possibility is to interpret  \beq e(q,p)=e^{-t(p^2+V(q))} \label{bee}\eeq for  $t>0$. Under mild and physically natural conditions on $V$, 
like  $V \in C^\infty(\bR^n)$ with $V(q) \to +\infty$ for $|q| \to +\infty$, 
the map $e$
belongs to $C_0(\bR^{2n})$ and in particular $Q^B_\hbar(e) \in B_\infty(L^2(\bR^n, dx))$ and also it is a positive operator.
 In this case the spectrum 
$\sigma(Q^B_\hbar(e))$ is a pure point spectrum made of non-negative eigenvalues  with $0$ as the unique, possibly, accumulation point (possibly in the continuous spectrum).

For the sake of generality, in this section we shall not assume a that precise form of the function $e$ as in (\ref{bee}),
we only assume that $e\in C_0(\bR^{2n})$, so that $Q^B_\hbar(e)$ is compact with a point spectrum except for $0$ at most,
 and we focus on a sequence of eigenvalues $\lambda_n$ and corresponding eigenvectors $\psi_n$ of the maps $Q^B_\hbar(e)$
such that $\lambda_\hbar \to \Lambda= e(\sigma_0)$ for $\hbar \to 0^+$. The precise form (\ref{bee}) will be adopted when discussing the interplay with  Schr\"odinger operators. 

Though our results assume the existence of such  sequences of eigenvectors,  as we shall prove in the next section (corollary \ref{corollarylink}),
if   $e$ of the form (\ref{bee}) and under suitable hypotheses on $V$,
the said sequence does exist when referring to the sequence of the maximal eigenvalues of $Q^B_\hbar(e)$. In that case
 $\Lambda = \max_{\sigma \in \bR^{2n}} e(\sigma)$.

\begin{theorem}[Classical limit without symmetry]\label{MT2}  
Consider  $e\in C_0(\bR^{2n})$ and $\Lambda \neq 0$ such that 
\begin{align}
 \Lambda =e(\sigma_0) \ \ \text{for a unique point $\sigma_0 \in \bR^{2n}$}.
\end{align}
Let $\{\varphi_\hbar\}_{\hbar>0}$ be a family of eigenvectors with eigenvalues $\{\lambda_\hbar\}_{\hbar>0}$ of $Q_\hbar^B(e)$  converging to $\Lambda$, as $\hbar\to0$.
With these assumptions,
$$\langle \varphi_\hbar, Q_\hbar(f) \varphi_\hbar \rangle \to f(\sigma_0)\quad \mbox{as $\hbar \to 0^+$, for every $f\in C_0(\bR^{2n})$.}$$

\end{theorem}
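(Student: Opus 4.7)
The plan is to deduce this theorem directly from part (2) of Proposition \ref{localization2}. Two hypotheses of that proposition must be checked: (i) $e^{-1}(\Lambda)=\{\sigma_{0}\}$, which is exactly the uniqueness assumption on $\sigma_{0}$; and (ii) the family $\{\mathcal{V}_{\epsilon}\}_{\epsilon>0}$ with $\mathcal{V}_{\epsilon}=e^{-1}((\Lambda-\epsilon,\Lambda+\epsilon))$ is a fundamental system of open neighborhoods of $\sigma_{0}$. Only (ii) requires work, and this is where the hypothesis $\Lambda\neq 0$ together with $e\in C_{0}(\bR^{2n})$ enters in an essential way.

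To verify (ii) I would argue by contradiction. Suppose there exists an open neighborhood $U\ni\sigma_{0}$ such that, for every $\epsilon>0$, $\mathcal{V}_{\epsilon}\not\subset U$. Then I can pick a sequence $\sigma_{k}\in\mathcal{V}_{1/k}\setminus U$, i.e.\ $e(\sigma_{k})\to\Lambda$ while $\sigma_{k}\in\bR^{2n}\setminus U$. Because $e\in C_{0}(\bR^{2n})$ and $|\Lambda|>0$, there is a compact set $K\subset\bR^{2n}$ such that $|e(\sigma)|<|\Lambda|/2$ for all $\sigma\notin K$; since $|e(\sigma_{k})|\to|\Lambda|$, the sequence $\{\sigma_{k}\}$ must eventually lie in $K$. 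By compactness a subsequence $\sigma_{k_{j}}$ converges to some $\sigma^{*}\in K$, and by continuity of $e$ one gets $e(\sigma^{*})=\Lambda$. Hence $\sigma^{*}\in e^{-1}(\Lambda)=\{\sigma_{0}\}$, so $\sigma^{*}=\sigma_{0}$. But the complement $\bR^{2n}\setminus U$ is closed, so $\sigma^{*}\in\bR^{2n}\setminus U$, contradicting $\sigma_{0}\in U$. This establishes (ii).

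With (i) and (ii) in hand, Proposition \ref{localization2}(2) applied to the family $\{\varphi_{\hbar}\}_{\hbar>0}$ immediately yields
\[
\langle\varphi_{\hbar},Q_{\hbar}^{B}(f)\varphi_{\hbar}\rangle\to f(\sigma_{0})\qquad\text{as }\hbar\to 0^{+},
\]
for every $f\in C_{0}(\bR^{2n})$, which is the statement of the theorem. In terms of algebraic states, this says that the vector states $\omega_{\hbar}(\cdot)=\langle\varphi_{\hbar},\cdot\,\varphi_{\hbar}\rangle$ admit a classical limit equal to the Dirac measure $\delta_{\sigma_{0}}$ on the phase space, as expected in view of the Husimi-function interpretation in \eqref{classical limit 3}.

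The only non-routine step is the shrinking-neighborhood argument above, and the key insight making it work is that $\Lambda\neq 0$ combined with $e\in C_{0}(\bR^{2n})$ forces the near-level sets $\mathcal{V}_{\epsilon}$ to be relatively compact for small $\epsilon$. If instead $\Lambda=0$ were allowed, the level sets could escape to infinity and (ii) would genuinely fail; this explains why the hypothesis $\Lambda\neq 0$ is stated in the theorem.
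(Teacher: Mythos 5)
Your argument is correct and follows essentially the same route as the paper's: reduce the theorem to Proposition~\ref{localization2}(2), and verify that $\{\mathcal{V}_\epsilon\}_{\epsilon>0}$ is a fundamental system of neighborhoods of $\sigma_0$ by a contradiction/compactness argument in which $e\in C_0(\bR^{2n})$ and $\Lambda\neq 0$ force the near-level sets $\mathcal{V}_{1/m}$ into a compact region. The paper works with open balls $B_r$ and derives the contradiction from the uniqueness of $\sigma_0$, whereas you allow a general open neighborhood $U$ and close the argument via the closedness of $\bR^{2n}\setminus U$; these are equivalent, and your closing remark on why $\Lambda\neq 0$ is needed is accurate and in the spirit of the paper's proof.
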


\begin{proof}
Taking proposition \ref{localization2} (2) into account,  the only fact to be proved is that  the family $\{{\cal V}_\epsilon\}_{\epsilon>0}=e^{-1}((\Lambda-\epsilon, \Lambda+\epsilon))$ forms a fundamental system of $\sigma_0$. Notice that $\sigma_0$ is the unique point in $\bR^{2n}$ such that $e(\sigma_0) = \max_{\sigma\in \bR^{2n}}|e(\sigma)|$. Now suppose that there is an open ball $B_r$ centred at $\sigma_0$ of radius $r>0$ such that ${\cal V}_{1/m} \not \subset B_{r}$  for $m>m_0$ (i.e., $\{{\cal V}_\epsilon\}_{\epsilon>0}$ is {\em not} a fundamental system of neighbourhoods of $\sigma_0$). As a consequence, for every $m>m_0$ there is $\sigma_m \in \bR^{2n}\setminus B_r$ such that  $\sigma_m \in {\cal V}_{1/m}$, i.e.,  $|e(\sigma_m)- \Lambda| < 1/m$.  
Since $\Lambda=e(\sigma_0)\neq 0$ and $e\in C_0(\bR^{2n})$,  it follows that $\lim_{|\sigma|\to\infty}|\Lambda-e(\sigma)| =|\Lambda|>0$ so that the sequence of $\sigma_m$ must be bounded because $|e(\sigma_m)- \Lambda|\to 0\neq \Lambda$.
In summary, $\{\sigma_m\in\bR^{2n}\setminus B_r \ |\ |\Lambda-e(\sigma_m)|\leq 1/m \}\subset\mathbb{R}^{2n}$ is  contained in a compact set $K$, whenever $m>m_0$.
Since $K\setminus B_r$ is compact as well, we can extract a subsequence  $\sigma_{m_k}' \to \sigma_0'\in K\setminus B_r$.  Since $|e(\sigma_{m_k}) -\Lambda| < 1/m_k \to 0$ for $k\to +\infty$, continuity of $e$ implies  $e(\sigma_0')= \Lambda$.  We observe that $\sigma_0'\neq \sigma_0$ since $\sigma_0'\in  K\setminus B_r$ and $\sigma_0\in B_r$. However, $e(\sigma_0)=\Lambda=e(\sigma_0')$ so that two distinct points reach the same value. This is in contradiction with the fact that the value is attained in a unique point. 
\end{proof}

We now pass  to a more  elaborated case where a symmetry of $e$ is present. We  consider a  group $G$  acting by symplectomorphism
$G \ni g : \mathbb{R}^{2n} \ni (q,p) \mapsto g(q,p) \in \mathbb{R}^{2n}$
 on $(\mathbb{R}^{2n}, \sum_{k=1}^n dp_k \wedge dq^k)$,
and focus on  the two cases:  either $G$ is a compact topological group or (b)  $G$ is a discrete group.\\

\begin{example}\label{physicalexamples}
{\em When assuming $e$ of the form (\ref{bee}),   two typical examples are
\begin{itemize}
\item[(i)]  the {\em double well} system on $\bR^2$   
with $G= \bZ_2 = \{\pm 1\}$, so that $g(q,p)= (\pm q,\pm p)\:;$
\item[(ii)]   the {\em Mexican hat} system on 
$\bR^{2n}$, with $n>1$,
and $SO(n)$  as symmetry group $G$ so that
$g(q,p)= (g q, gp)$,
where $gz$ is the standard rotation of $z\in \bR^n$ according to $g\in SO(n)$. 
\end{itemize}
 In both cases
\beq
h(q,p) = p^2 + (q^2-1)^2\:, 
\eeq
where $(q,p) \in \bR\times \bR$ or $(q,p) \in \bR^n\times \bR^n$ respectively.}
\end{example}

Differently from the simpler result established with theorem \ref{MT2}, we now also assume that the eigenspaces  
associated to the sequence of eigenvalues $\lambda_\hbar$ of $Q_\hbar^B(e)$ have dimension $1$. Again this condition 
is satisfied  if   $e$ is of the form (\ref{bee}) and under suitable hypotheses on $V$ when  referring to the maximal eigenvalues of $Q^B_\hbar(e)$  and with  $\Lambda= \max_{\sigma \in \bR^{2n}} e(\sigma)$, as we shall see in the next section (corollary \ref{corollarylink}).

\begin{theorem}[Classical limit with symmetry]\label{thm:claslimNW2}
Consider a group $G$ either finite or topological compact, $e\in C_0(\mathbb{R}^{2n})$ and  assume the following hypotheses.
\begin{itemize} 
\item[(a)] $G$ acts continuously in the topological-group   case\footnote{The action $G\times \mathbb{R}^{2n} \ni
 (g,\sigma) \mapsto g\sigma \in \mathbb{R}^{2n}$ is continuous.}, 
on  $(\mathbb{R}^{2n}, \sum_{k=1}^n dp_k \wedge dq^k)$ in terms of symplectomorphism. 
\item[(b)]  $e$ is invariant under $G$.
\item[(c)] The action of $G$ is transitive on $e^{-1}(\{\Lambda\})$.
\end{itemize}
  Then the following facts are valid  for every chosen $\sigma_0\in e^{-1}(\{\Lambda\})$ and
a family  of eigenvectors  $\{\varphi_\hbar\}_{\hbar>0}$
of  
$Q_\hbar^B(e)$  with non-degenerate eigenvalues $\{\lambda_\hbar\}_{\hbar>0}$ converging to $\Lambda$ as $\hbar\to 0$
for some $\Lambda\in \text{ran}(e) \setminus\{0\}$. 

\begin{itemize}
\item[(1)] If $G$ is topological and compact and $\mu_{G}$ is the normalized Haar measure,
\begin{align}
\lim_{\hbar\to 0^+}\langle\varphi_\hbar,Q_{\hbar}^B(f)\varphi_\hbar\rangle=\int_{G} f(g\sigma_0)d\mu_G(g),  \quad  \mbox{for every $f\in C_0(\mathbb{R}^{2n})$.}\label{classical limit2}
\end{align}
\item[(2)] If $G$ is finite and $N_G$ is the number of elements of $G$,
\begin{align}
\lim_{\hbar\to 0^+}\langle\varphi_\hbar,Q_{\hbar}^B(f)\varphi_\hbar\rangle=\frac{1}{N_G} \sum_{g\in G} f(g \sigma_0),  \quad  \mbox{for every $f\in C_0(\mathbb{R}^{2n})$.}\label{classical limit2b}
\end{align}
\end{itemize}
The left- and right-hand sides of (\ref{classical limit2}) and (\ref{classical limit2b}) are independent of the choice of $\sigma_0$.
\end{theorem}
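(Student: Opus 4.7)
The plan is to combine the equivariance of the Berezin map (Proposition~\ref{equivariance2}), the non-degeneracy hypothesis on the eigenvalues $\lambda_\hbar$, and the localization estimate of Proposition~\ref{localization2}(1), applied not to $f$ itself but to its $G$-average.

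First I would exploit the symmetry. Since $e$ is $G$-invariant, one has $\zeta_g e=e$ and therefore $U_g^{-1}Q_\hbar^B(e)U_g=Q_\hbar^B(e)$, so that $U_g\varphi_\hbar$ is an eigenvector of $Q_\hbar^B(e)$ with the same eigenvalue $\lambda_\hbar$. By the simplicity hypothesis the corresponding eigenspace is one-dimensional, whence $U_g\varphi_\hbar=c_g(\hbar)\varphi_\hbar$ with $|c_g(\hbar)|=1$. Combining this with $U_g^{-1}Q_\hbar^B(f)U_g=Q_\hbar^B(\zeta_{g^{-1}}f)$ gives
\[
\langle\varphi_\hbar,Q_\hbar^B(f)\varphi_\hbar\rangle=\langle\varphi_\hbar,Q_\hbar^B(\zeta_{g^{-1}}f)\varphi_\hbar\rangle\qquad\text{for every } g\in G.
\]
Averaging this identity over $G$ with the normalized Haar measure (respectively with the counting measure in case (2)), using the integral representation (\ref{defQB}) and Fubini, yields
\[
\langle\varphi_\hbar,Q_\hbar^B(f)\varphi_\hbar\rangle=\langle\varphi_\hbar,Q_\hbar^B(\bar f)\varphi_\hbar\rangle,\qquad \bar f(\sigma):=\int_G f(g\sigma)\,d\mu_G(g),
\]
and analogously $\bar f(\sigma)=N_G^{-1}\sum_{g\in G}f(g\sigma)$ in case (2). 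Continuity of the action together with compactness/finiteness of $G$ ensures that $\bar f$ is continuous, $G$-invariant, and supported in the compact set $G\cdot\mathrm{supp}(f)$, so $\bar f\in C_0(\bR^{2n})$ and in particular $Q_\hbar^B(\bar f)$ is a well-defined compact operator.

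Next I would reduce the problem to a statement about $\bar f$. By hypothesis (c) the set $e^{-1}(\{\Lambda\})$ is exactly the $G$-orbit of $\sigma_0$, and $G$-invariance forces $\bar f$ to take the constant value $\bar f(\sigma_0)=\int_G f(g\sigma_0)\,d\mu_G(g)$ on that orbit; invariance of the Haar (counting) measure makes this number manifestly independent of the choice of $\sigma_0\in e^{-1}(\{\Lambda\})$. It is therefore enough to prove
\[
\langle\varphi_\hbar,Q_\hbar^B(\bar f)\varphi_\hbar\rangle\longrightarrow \bar f(\sigma_0)\qquad(\hbar\to 0^+).
\]
Setting $\Phi_\hbar:=W\varphi_\hbar$ and using $\|\Phi_\hbar\|_{L^2(\bR^{2n},d\mu_\hbar)}=1$, I would write
\[
\langle\varphi_\hbar,Q_\hbar^B(\bar f)\varphi_\hbar\rangle-\bar f(\sigma_0)=\int_{\bR^{2n}}\bigl(\bar f(\sigma)-\bar f(\sigma_0)\bigr)|\Phi_\hbar(\sigma)|^2\,d\mu_\hbar(\sigma),
\]
and split the integral over ${\cal V}_\epsilon=e^{-1}((\Lambda-\epsilon,\Lambda+\epsilon))$ and its complement. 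Because $e\in C_0(\bR^{2n})$ and $\Lambda\neq 0$, each ${\cal V}_\epsilon$ is contained in a fixed compact set for $\epsilon$ small; a sequential-compactness argument (as in the proof of Theorem~\ref{MT2}) using continuity of $\bar f$ and the equality $\bar f\equiv\bar f(\sigma_0)$ on the closed set $e^{-1}(\{\Lambda\})$ then gives $\sup_{\sigma\in{\cal V}_\epsilon}|\bar f(\sigma)-\bar f(\sigma_0)|\to 0$ as $\epsilon\to 0$. The complementary integral is bounded by $2\|\bar f\|_\infty\|\Phi_\hbar\|_{L^2(\bR^{2n}\setminus{\cal V}_\epsilon,d\mu_\hbar)}^2$, which tends to zero as $\hbar\to 0^+$ for every fixed $\epsilon>0$ by Proposition~\ref{localization2}(1); a standard $\epsilon/2$ argument then closes the proof.

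The main obstacle is a conceptual rather than technical one: Proposition~\ref{localization2}(2) cannot be invoked as a black box, because $e^{-1}(\{\Lambda\})$ is now a whole $G$-orbit and ${\cal V}_\epsilon$ is not a fundamental system of neighbourhoods of any single point. The averaging step circumvents this obstruction by replacing $f$ with $\bar f$, which is constant on the full orbit, so that the same kind of dichotomy (continuity near the level set, $L^2$-smallness far from it) can again be exploited. Compactness/finiteness of $G$ is used crucially twice: to ensure $\bar f\in C_0(\bR^{2n})$ and to justify the Fubini exchange in the averaging identity; outside these hypotheses neither step is automatic.
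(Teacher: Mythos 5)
Your proposal is correct and follows essentially the same route as the paper's proof: use non-degeneracy to get $U_g\varphi_\hbar = c_g\varphi_\hbar$, average over $G$ to replace $f$ by its $G$-average $\bar f$ (the paper calls it $F$), observe $\bar f$ is bounded, continuous, and constant on $e^{-1}(\{\Lambda\})$, then split the integral over ${\cal V}_\epsilon$ and its complement using a compactness argument (Lemma~\ref{fclaim}) and Proposition~\ref{localization2}(1). The only cosmetic difference is that you carry out the averaging at the operator level via the equivariance of $Q_\hbar^B$, whereas the paper first notes $|\Phi_\hbar|^2\,d\mu_\hbar$ is $G$-invariant and changes variables directly in the integral; the two are equivalent and land on the same identity.
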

\begin{proof}
We assume that $G$ is a compact topological group and prove (1). Compactness guarantees in particular the existence of a unique  normalized two-sided Haar measure 
$\mu_G$.  The case (2)  has the same (actually easier) proof 
just  by everywhere replacing the integral with $\frac{1}{N_G} \sum_{g\in G}$ which is the same as saying that the Haar measure is discrete.\\
To go on, choose $g\in G$ arbitrarily.
Since $U_g$ commutes with $Q_\hbar^B(e)$ and the eigenspace spanned by $\varphi_\hbar$ is one-dimensional, 
$U_g \varphi_\hbar= e^{ia}\varphi_\hbar$ for some real $a$ and thus $\Phi_\hbar(g^{-1}\sigma)= 
pWU_g \varphi_\hbar = e^{ia}\Phi_\hbar(\sigma)$. Since the measure $d\sigma$ is $G$-invariant, the measure 
$|\Phi_\hbar|^2 \frac{d\sigma}{(2\pi\hbar)^{2n}}$ is also $G$-invariant. We can  therefore replace $\sigma$ with $g^{-1}\sigma$ in both $  |\Phi_\hbar(\sigma)|^2$ and $d\sigma$:
$$\langle\varphi_\hbar,Q_{\hbar}^B(f)\varphi_\hbar\rangle=  \int_{\bR^{2n}}  |\Phi_\hbar(\sigma)|^2 f(\sigma) \frac{d\sigma}{(2\pi \hbar)^n}=
\int_{\bR^{2n}}  |\Phi_\hbar(g^{-1}\sigma)|^2 f(\sigma) \frac{dg^{-1}\sigma}{(2\pi \hbar)^n}\:.$$
Replacing again $\sigma$ with $g\sigma$ in the complete integrand,
$$\langle\varphi_\hbar,Q_{\hbar}^B(f)\varphi_\hbar\rangle=  \int_{\bR^{2n}}  |\Phi_\hbar(\sigma)|^2 f(g\sigma) \frac{d\sigma}{(2\pi \hbar)^n}\:.$$ Since the left-hand side is independent of $g$, we can integrate both sides with respect to the normalized 
Haar measure on $G$, obtaining
$$\langle\varphi_\hbar,Q_{\hbar}^B(f)\varphi_\hbar\rangle
=  \int_G\int_{\bR^{2n}}  |\Phi_\hbar(\sigma)|^2 f(g\sigma) \frac{d\sigma}{(2\pi \hbar)^n} d\mu_G(g)=\int_{\bR^{2n}}  |\Phi_\hbar(\sigma)|^2 \int_G f(g\sigma) d\mu_G(g) \frac{d\sigma}{(2\pi \hbar)^n}\:. $$
Above, in order to interchange the two integrals, we have also used the fact that $\bR^{2n}\times G \ni (\sigma,g) \mapsto  |\Phi_\hbar(\sigma)|^2|f(g\sigma)|$ is integrable in the product measure  since
$$ \int_G\int_{\bR^{2n}}  |\Phi_\hbar(\sigma)|^2 |f(g\sigma)| \frac{d\sigma}{(2\pi \hbar)^n} d\mu_G(g)
\leq 
||f||_\infty \int_G\int_{\bR^{2n}}  |\Phi_\hbar(\sigma)|^2 \frac{d\sigma}{(2\pi \hbar)^n} d\mu_G(g) 
$$ $$=
||f||_\infty \int_{\bR^{2n}} |\Phi_\hbar(\sigma)|^2 \frac{d\sigma}{(2\pi \hbar)^n} \int_G1 d\mu_G(g)
=  ||f||_\infty \int_{\bR^{2n}} |\Phi_\hbar(\sigma)|^2 \frac{d\sigma}{(2\pi \hbar)^n}  =1\:, $$
and then we have used the Fubini-Tonelli theorem.
In summary,
$$\langle\varphi_\hbar,Q_{\hbar}^B(f)\varphi_\hbar\rangle= \int_{\bR^{2n}}  |\Phi_\hbar(\sigma)|^2F(\sigma) \frac{d\sigma}{(2\pi \hbar)^n},\: \quad \mbox{where}$$
\begin{align}
 F(\sigma) := \int_G f(g\sigma) d\mu_G(g), \quad (\sigma\in \mathbb{R}^{2n}). \label{Ff}
\end{align}
Notice that this function is  (i) bounded (since $f$  is bounded and $\mu_G$ finite), (ii) continuous (as it immediately arises form Lebesgue's dominated convergence since, again, $f$  is bounded and $\mu_G$ finite),
(iii) constant on $e^{-1}(\Lambda)$, since $F(g'\sigma) = F(\sigma)$ because $\mu_G$ is $G$-invariant and  the action of $G$ on $e^{-1}(\{\Lambda\})$ is transitive.
To go on, 
since  $\int_{\bR^{2n}}  |\Phi_\hbar(\sigma)|^2\frac{d\sigma}{(2\pi \hbar)^n}=1$,
we can write, 
$\left|\langle\varphi_\hbar,Q_{\hbar}^B(f)\varphi_\hbar\rangle - F(\sigma_0) \right| = 
\left| \int_{\bR^{2n}}  |\Phi_\hbar(\sigma)|^2(F(\sigma)  - F(\sigma_0)) \frac{d\sigma}{(2\pi \hbar)^n} \right|$ for any arbitrarily taken $\sigma_0 \in e^{-1}(\Lambda)$,
Defining ${\cal V}_\delta := e^{-1}(\Lambda-\delta, \Lambda+\delta)$ and for every given  $m\in \bN \setminus\{0\}$, we can now estimate
$$\left|\langle\varphi_\hbar,Q_{\hbar}^B(f)\varphi_\hbar\rangle - F(\sigma_0) \right|$$ 
$$\leq  \int_{{\cal V}_{1/m}}  |\Phi_\hbar(\sigma)|^2|F(\sigma)  - F(\sigma_0)|\frac{d\sigma}{(2\pi \hbar)^n} 
+  \int_{\bR^{2n}\setminus {\cal V}_{1/m}}  |\Phi_\hbar(\sigma)|^2|F(\sigma)  - F(\sigma_0)|\frac{d\sigma}{(2\pi \hbar)^n}$$
$$\leq \sup_{\sigma \in {\cal V}_{1/m}}|F(\sigma)  - F(\sigma_0)| \int_{{\cal V}_{1/m}}  |\Phi_\hbar(\sigma)|^2\frac{d\sigma}{(2\pi \hbar)^n} 
+ 2||F||_\infty \int_{\bR^{2n}\setminus {\cal V}_{1/m}}  |\Phi_\hbar(\sigma)|^2\frac{d\sigma}{(2\pi \hbar)^n} $$
\beq \leq \sup_{\sigma \in {\cal V}_{1/m}}|F(\sigma)  - F(\sigma_0)| + 2||F||_\infty \int_{\bR^{2n}\setminus {\cal V}_{1/m}}
  |\Phi_\hbar(\sigma)|^2\frac{d\sigma}{(2\pi \hbar)^n}\:.\label{lastline} \eeq
Let us now  focus on the two terms in the last line of (\ref{lastline}) separately.
 The following lemma is true.
\begin{lemma}\label{fclaim} Under the hypotheses of theorem \ref{thm:claslimNW2} and with $F$ defined in (\ref{Ff}),
 for every $\epsilon>0$, there is  $m_\epsilon\in \bN $ such that $\sup_{\sigma \in {\cal V}_{1/m_\epsilon}}|F(\sigma)  - F(\sigma_0)| < \epsilon/2\:.$
\end{lemma}
\begin{proof}
See Appendix \ref{appA}.
\end{proof}
\noindent 
Keeping that $m_\epsilon$ and exploiting (1) in proposition \ref{localization2}, we can find $H_\epsilon>0$ such that
\begin{align}
2||F||_\infty \int_{\bR^{2n}\setminus {\cal V}_{1/m_\epsilon}}  |\Phi_\hbar(\sigma)|^2\frac{d\sigma}{(2\pi \hbar)^n} < \epsilon/2
\nonumber,
\end{align}
for $0<\hbar< H_\epsilon$. Looking again at the last line of  (\ref{lastline}), we conclude that  for every $\epsilon>0$, there is $H_\epsilon$ such that  $0<\hbar< H_\epsilon$ implies
$\left|\langle\varphi_\hbar,Q_{\hbar}^B(f)\varphi_\hbar\rangle - F(\sigma_0) \right| < \epsilon$,
concluding the proof.
\end{proof}

%

\section{Schr\"{o}dinger operators versus Berezin quantization maps}\label{MainB}
In this section we discuss the interplay between Schr\"{o}dinger operators and Berezin quantization maps $Q^B_\hbar(e^{-th})$. We will see that under some assumptions on the potential appearing both  in the Schr\"{o}dinger operator and in $h=p^2+V$, the operators are related. We shall provide several theorems making this relation precise.
\subsection{General setting}
We consider $\hbar$-dependent (unbounded) Schr\"{o}dinger operators $H_{\hbar}$ defined on some dense domain of $\mathcal{H}=L^2(\mathbb{R}^n, dx)$. Such operators are typically given by
\begin{align}
H_{\hbar}:=\overline{-\hbar^2\Delta + V}, \quad \hbar>0\:,\label{Schroper}
\end{align}
where $\Delta$ denotes the Laplacian on $\mathbb{R}^{n}$, and V denotes multiplication by some real-valued function on $\mathbb{R}^n$, playing the role of the potential. 
One should impose conditions on the potential $V$ in order to make $H_{\hbar}$  self-adjoint when $-\hbar^2\Delta + V$ is initially defined on  $C_c^\infty(\bR^n)$.
Our general hypotheses will be the following ones.
\begin{itemize}\label{HypotV}
\item[{\bf (V1)}] $V$ is a  real-valued $C^\infty(\bR^n)$ function.
\item[{\bf (V2)}] $ V(x) \to +\infty $ for $|x| \to +\infty$ (i.e., for every $M>0$, there is $R_M>0$
 such that $V(x) > M$ if $|x|>R_M$).
\item[{\bf (V3)}] $e^{-tV} \in {\cal S}(\bR^n)$ for $t>0$.
\end{itemize}

An elementary example consists of a real polynomial satisfying (V2). It  also satisfies (V1) and (V3) automatically.

It is known that when 
 $V \in L^2_{loc}(\bR^n, dx)$ and $V(x) \geq 0$ pointwise then the operator  $-\hbar^2\Delta + V$
is essentially-self adjoint
 on $C_0^\infty(\bR^n)$ in view of Theorem X.28 \cite{RS2}
and that $H_\hbar\geq 0$.  
Referring to our general hypotheses, we observe that evidently (V1) implies $V \in L^2_{loc}(\bR^n, dx)$. 
Furthermore, (V1) and (V2) immediately  imply that
 \beq \inf_{x\in \bR^n} V = \min_{x\in \bR^n}V(x) = c >-\infty\:. \label{defC}\eeq
Since $(H_\hbar+ aI)^*= H_\hbar^*+ aI$, the domain being the one of $\overline{H_\hbar}= H_\hbar^*$ for both sides, the essential selfadjointness property still holds
 when relaxing $V\geq 0$ to (V2), also obtaining $H_\hbar \geq cI$.  Hence (V1)-(V2) imply that $H_\hbar$ is selfadjoint and \beq H_\hbar \geq (\min V) I\:. \label{bb}\eeq

\subsection{Comparison of $e^{-tH_\hbar}$ and $Q_\hbar^B(e^{-th})$}
The physical idea is now that, in some sense $Q_\hbar^B(p^2+V)$ is a good approximation of $H_\hbar$ as $\hbar \to 0^+$. We wish to implement this idea in the framework of $C^*$-algebras
of operators. Since the above operators are unbounded an appealing idea is to consider bounded functions of $p^2+V$ and $H_\hbar$ in place of themselves.  An  apparent promising 
 idea seems to be the use  the {\em resolvent algebra} of operators \cite{BuchGrun}. However this approach reveals to be awkward, essentially because the classical counterpart of the $C^*$-algebra of resolvents is too involved to be profitably  handled.
A different approach, already adoptend in the previous section,  relies on the use of (real) exponential functions and the associated quantum counterpart consisting of contraction semigroups. This idea indeed works
and it is based upon the following list of comparison results.

\begin{proposition}\label{IMPORTANT}
Let $V$ satisfy (V1)-(V3)  and consider   the associated  family  of  operators in $\gB(L^2(\bR^n, dx))$  indexed by $\hbar>0$ and defined by $e^{-tH_\hbar}$  according to (\ref{Schroper}), where $t>0$ is given. It holds
\beq Q^{B}_\hbar(e^{-t(p^2+ V)}) - e^{-t H_\hbar} \to 0 \quad \mbox{in strong sense for $\hbar \to 0^+$}\:.\label{1lim}\eeq
More precisely both 
\beq e^{-t H_\hbar} \quad \mbox{and}\quad  Q^{B}_\hbar(e^{-t(p^2+ V)}) \to e^{-t V}  \quad \mbox{in strong sense for $\hbar \to 0^+$}\:.\label{2lim}\eeq
\end{proposition}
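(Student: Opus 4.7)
The plan is to establish the two strong limits in (\ref{2lim}) by independent arguments, after which (\ref{1lim}) follows at once from the triangle inequality. The two convergences exploit very different structures: the Schr\"odinger side is handled by resolvent estimates, the Berezin side by the classical multiplicativity of the symbol $e^{-t(p^2+V)}$.

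For $e^{-tH_\hbar}\to e^{-tV}$ I would pass through strong resolvent convergence and invoke Trotter--Kato. Fix $\lambda>-c$ with $c=\min V$, so that by (\ref{bb}) and the spectral theorem both $(H_\hbar+\lambda)^{-1}$ and the multiplication operator $(V+\lambda)^{-1}$ are bounded uniformly by $(\lambda+c)^{-1}$. For $\psi\in C_c^\infty(\bR^n)$ hypothesis (V1) ensures that $\phi_0:=\psi/(V+\lambda)$ is again in $C_c^\infty(\bR^n)$ and thus lies in $D(H_\hbar)$ for every $\hbar>0$; a direct computation on this vector yields
\[
(H_\hbar+\lambda)^{-1}\psi-(V+\lambda)^{-1}\psi \;=\; \hbar^2\,(H_\hbar+\lambda)^{-1}\Delta\phi_0,
\]
whose norm is bounded by $\hbar^2\|\Delta\phi_0\|/(\lambda+c)\to 0$. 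Density of $C_c^\infty(\bR^n)$ together with uniform boundedness of the resolvents promotes this to strong convergence on $L^2(\bR^n,dx)$, and Trotter--Kato delivers $e^{-tH_\hbar}\to e^{-tV}$ strongly.

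For $Q_\hbar^B(e^{-t(p^2+V)})\to e^{-tV}$ I would exploit the classical factorization $e^{-t(p^2+V(q))}=e^{-tp^2}\cdot e^{-tV(q)}$. Hypothesis (V3) gives $e^{-tV}\in\mathcal{S}(\bR^n)\subset C_0(\bR^n)$, so Proposition~\ref{prooff}(a) yields $\|Q_\hbar^B(e^{-tV(q)})-e^{-tV}\|\to 0$, while Proposition~\ref{prooff}(b) gives $\|Q_\hbar^B(e^{-tp^2})-e^{t\hbar^2\Delta}\|\to 0$; the heat semigroup $e^{t\hbar^2\Delta}$ itself converges strongly to $I$ as $\hbar\to 0^+$. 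Assuming an asymptotic homomorphism estimate
\[
\bigl\|Q_\hbar^B(e^{-tp^2}\cdot e^{-tV(q)})-Q_\hbar^B(e^{-tp^2})\,Q_\hbar^B(e^{-tV(q)})\bigr\|\;\longrightarrow\;0,
\]
chaining these three facts gives $Q_\hbar^B(e^{-t(p^2+V)})\to I\cdot e^{-tV}=e^{-tV}$ strongly, because norm convergence of a factor combined with strong convergence of the other preserves strong convergence of the product.

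The main technical obstacle is precisely this last norm estimate: the factors $e^{-tp^2}$ and $e^{-tV(q)}$ are constant in one of the variables on $\bR^{2n}$ and therefore do not sit in $C_c^\infty(\bR^{2n})$ nor in $\mathcal{S}(\bR^{2n})$, so the von Neumann condition (\ref{vNc}) and Theorem~\ref{BerezinC0}(3) are not directly applicable to them. My plan is to work with the Berezin integral kernel (\ref{defQB2}) directly: performing first the Gaussian $p$-integral produces a heat-kernel factor of width $\hbar\sqrt{t}$, while the Gaussian $q$-integration smears $e^{-tV}$ through the semigroup $e^{\hbar\Delta/4}$, which by (V1)--(V3) is a uniform contraction on $L^\infty$ converging pointwise to the identity. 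Recognising the resulting kernel as that of $e^{t\hbar^2\Delta}\circ M_{e^{-tV}}$ up to an operator of norm $O(\hbar)$ closes the argument, and (\ref{1lim}) then follows from
\[
Q_\hbar^B(e^{-t(p^2+V)})-e^{-tH_\hbar}\;=\;\bigl[Q_\hbar^B(e^{-t(p^2+V)})-e^{-tV}\bigr]-\bigl[e^{-tH_\hbar}-e^{-tV}\bigr],
\]
combined with both strong limits just established.
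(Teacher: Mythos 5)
Your Schr\"odinger-side argument is a valid alternative to the paper's. The resolvent identity $(H_\hbar+\lambda)^{-1}\psi - (V+\lambda)^{-1}\psi = \hbar^2(H_\hbar+\lambda)^{-1}\Delta\phi_0$ with $\phi_0 := \psi/(V+\lambda)\in C_c^\infty(\bR^n)$ (using (V1) and $V+\lambda>0$) is correct, the bound $\|(H_\hbar+\lambda)^{-1}\|\leq(\lambda+c)^{-1}$ from (\ref{bb}) gives strong resolvent convergence after density/uniform-boundedness, and Trotter--Kato (after shifting by $c=\min V$ to get contraction semigroups) yields $e^{-tH_\hbar}\to e^{-tV}$ strongly. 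The paper instead invokes Davies' Corollary 3.18 directly on the semigroups via the common core $C_c^\infty(\bR^n)$; both routes work.

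The Berezin side, however, has a genuine gap. You correctly observe that $e^{-tp^2}$ and $e^{-tV(q)}$, regarded as functions on $\bR^{2n}$, are each constant in one variable and hence not in $C_0(\bR^{2n})$, so neither the von Neumann condition (\ref{vNc}) nor Theorem~\ref{BerezinC0}(2)--(3) applies to them separately; the ``asymptotic homomorphism estimate'' you posit is therefore left unproved, and the substitute kernel computation is sketched rather than carried out, with the $O(\hbar)$ operator-norm bound on the remainder asserted rather than derived. The observation you miss, which removes the need to split the symbol at all, is that the \emph{product} $e^{-t(p^2+V(q))}$ \emph{does} lie in $\mathcal{S}(\bR^{2n})$: (V3) gives $e^{-tV}\in\mathcal{S}(\bR^n)$, and $e^{-tp^2}\in\mathcal{S}(\bR^n)$, so the tensor product is Schwartz on $\bR^{2n}$. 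Theorem~\ref{BerezinC0}(2) then gives $\|Q^B_\hbar(e^{-t(p^2+V)})-Q^W_\hbar(e^{-t(p^2+V)})\|\to 0$ outright, and the Weyl formula (\ref{formfin}) yields the explicit kernel
\[
\bigl(Q^W_\hbar(e^{-t(p^2+V)})\phi\bigr)(x) \;=\; \int_{\bR^n}\frac{\widehat{f}_2(b)}{(2\pi)^{n/2}}\,e^{-tV\left(x-\frac{\hbar}{2}b\right)}\,\phi(x+\hbar b)\,db, \qquad f_2(p)=e^{-tp^2},
\]
on which dominated convergence (for $\phi\in\mathcal{S}(\bR^n)$), the uniform bound (\ref{estimate}), and density finish the proof. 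This is the paper's route, and it is shorter precisely because it never factorizes the symbol.
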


\begin{proof} See Appendix \ref{appA}.
\end{proof}

We now pass to a comparison of the ground state of the Hamiltonian operator $H_\hbar$ and the corresponding of 
$Q_\hbar^B(e^{-t (p^2+V)})$. More precisely, we compare the maximal eigenvalues of 
$e^{-tH_\hbar}$ and $Q_\hbar^B(e^{-t (p^2+V)})$ for $t>0$ given.  Assuming (V1)-(V3), $V \in  L^1_{loc}(\bR^n, dx)$ is bounded from below and $\quad V(x) \to +\infty$ for $|x| \to +\infty$ and thus the resolvent of $H_\hbar$ is compact 
due to Theorem XIII.67 in \cite{RS4}.
According to standard results on positive compact operators (see, e.g., \cite{Mor}), if $\hbar>0$,
\begin{itemize}
\item[(a)]  the spectrum of $H_\hbar$ is a pure point spectrum and there is a corresponding Hilbert basis of eigenvectors $\{\psi^{(j)}_\hbar\}_{j=0,1,\ldots}$
with corresponding eigenvalues
\beq \sigma(H_\hbar) = \{E^{(j)}_\hbar\}_{j=0,1,2,\ldots}\quad \mbox{with $0 \leq E^{(j)}_\hbar \leq E^{(j+1)}_\hbar \to +\infty$ as $j\to +\infty$,}\label{specH}\eeq
where  every eigenspace has finite dimension;
\item[(b)]  $e^{-tH_\hbar}$ 
is compact with  spectrum 
$\sigma(e^{-tH_\hbar}) = \{0\}\cup  \{e^{-tE^{(j)}_\hbar}\}_{j=0,1,2, \ldots}\:,$
 $0$ being the unique point of the continuous spectrum, and the  eigenspaces of $H_\hbar$ and $e^{-tH_\hbar}$ coincide;  
\item[(c)]  the minimal eigenvalue $E^{(0)}_\hbar$ of $H_\hbar$ corresponds to the maximal eigenvalue of  $e^{-tH_\hbar}$
according to
\beq
e^{-t E_\hbar^{(0)}} = ||e^{-tH_\hbar} ||\:.
\eeq
\end{itemize}
Under the said hypotheses on $V$,  we also  know that $Q^{B}_\hbar(e^{-t(p^2+ V)})$ is a positive compact operator due to theorems \ref{teoQB1}  and
 \ref{BerezinS} when $\hbar>0$.
These facts permit us to focus on interplay  of the maximal eigenvalues of $e^{-tH_\hbar}$ and $Q^{B}_\hbar(e^{-t(p^2+ V)})$ with the following proposition.

\begin{proposition}\label{IMPORTANT2}
Assuming the hypotheses (V1)-(V3), given $t>0$,  let  $\lambda^{(0)}_{\hbar}$ be  the maximal eigenvalue of $Q^{B}_\hbar(e^{-t(p^2+ V)})$.
The following facts are true referring to $H_\hbar$ ($\hbar>0$) as in proposition \ref{IMPORTANT} with eigenvalues as in (\ref{specH}).
\begin{itemize}
\item[(1)]  Both  $\mbox{$\lambda_\hbar^{(0)}$ and $e^{-tE^{(0)}_\hbar} \to e^{-t\min V}$
as $\hbar \to 0^+$.}$
\item[(2)]  Both    $\mbox{$||Q^{B}_\hbar(e^{-t(p^2+ V)})||$ and $||e^{-t H_\hbar}|| \to e^{-t \min V}$
as $\hbar \to 0^+$.}$
\end{itemize}
\end{proposition}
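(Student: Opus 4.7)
The key observation is that (1) and (2) are the same statement, after which both reduce to a soft squeeze argument around Proposition \ref{IMPORTANT}. Under (V1)-(V2), the resolvent of $H_\hbar$ is compact (Theorem XIII.67 in \cite{RS4}, as noted in the text), so $e^{-tH_\hbar}$ is positive, compact and selfadjoint with $\|e^{-tH_\hbar}\| = e^{-tE_\hbar^{(0)}}$. Under (V2)-(V3), $e^{-t(p^2+V)}\in \mathcal{S}(\bR^{2n})\subset C_0(\bR^{2n})$, and by Theorem \ref{teoQB1}(2)-(4) the operator $Q_\hbar^B(e^{-t(p^2+V)})$ is positive, compact and selfadjoint with $\|Q_\hbar^B(e^{-t(p^2+V)})\|=\lambda_\hbar^{(0)}$. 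Consequently it suffices to prove that these two operator norms converge to $e^{-t\min V}$.

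For the \emph{upper bound}, note that $(q,p)\mapsto p^2+V(q)$ attains its infimum $\min V$ at $p=0$ together with any minimizer of $V$, so $\|e^{-t(p^2+V)}\|_\infty=e^{-t\min V}$; inequality (\ref{boundQB}) then yields $\|Q_\hbar^B(e^{-t(p^2+V)})\|\leq e^{-t\min V}$. On the Schr\"{o}dinger side, $H_\hbar\geq(\min V)I$ by (\ref{bb}), and the functional calculus gives $\|e^{-tH_\hbar}\|\leq e^{-t\min V}$.

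The matching \emph{lower bound} comes from the strong convergences $e^{-tH_\hbar}\to e^{-tV}$ and $Q_\hbar^B(e^{-t(p^2+V)})\to e^{-tV}$ stated in Proposition \ref{IMPORTANT}, combined with a test vector localized near a minimizer of $V$. Fix $\epsilon>0$, let $x_0$ minimize $V$ (which exists by (V1)-(V2)), and by continuity of $V$ pick an open ball $B\ni x_0$ on which $V<\min V+\epsilon$. For any unit vector $\psi\in L^2(\bR^n,dx)$ with $\mathrm{supp}\,\psi\subset B$ one has $\|e^{-tV}\psi\|\geq e^{-t(\min V+\epsilon)}$; since $\|A\|\geq\|A\psi\|$ and strong convergence implies $\|e^{-tH_\hbar}\psi\|\to\|e^{-tV}\psi\|$ (and similarly for $Q_\hbar^B$), one deduces
\[
\liminf_{\hbar\to 0^+}\|e^{-tH_\hbar}\|\geq e^{-t(\min V+\epsilon)}\quad\text{and}\quad \liminf_{\hbar\to 0^+}\|Q_\hbar^B(e^{-t(p^2+V)})\|\geq e^{-t(\min V+\epsilon)}.
\]
Letting $\epsilon\to 0^+$ and combining with the upper bounds delivers the desired limits. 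The main (and essentially only) obstacle is that strong convergence does not upgrade to norm convergence by itself, which is precisely why the localized test vector $\psi$ is needed to saturate $e^{-t\min V}$ in the limit; all the genuinely analytic work---essential selfadjointness of $H_\hbar$ and the Trotter-type comparison $Q_\hbar^B(e^{-t(p^2+V)})\approx e^{-tH_\hbar}$---is already absorbed into Proposition \ref{IMPORTANT}.
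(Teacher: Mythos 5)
Your proof is correct, and it takes a genuinely different route from the paper's. The paper establishes the Berezin-side limit $\|Q_\hbar^B(e^{-t(p^2+V)})\|\to e^{-t\min V}$ in one stroke by invoking the Rieffel condition (continuity of the $C^*$-bundle norm, Theorem \ref{BerezinC0}), and for the Schr\"odinger side it avoids the semigroups altogether: it characterizes $E_\hbar^{(0)}$ via the min--max principle $E_\hbar^{(0)}=\inf_{\psi\in C_c^\infty,\|\psi\|=1}\langle\psi,H_\hbar\psi\rangle$, observes monotonicity of this infimum in $\hbar$, and saturates it with a test function $\psi_\epsilon$ supported near a minimizer of $V$ so that $-\hbar^2\langle\psi_\epsilon,\Delta\psi_\epsilon\rangle+\langle\psi_\epsilon,V\psi_\epsilon\rangle<\epsilon$. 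Your argument instead treats both operators symmetrically by a squeeze: the uniform upper bounds come from (\ref{boundQB}) and (\ref{bb}), while the matching lower bound is extracted from the strong convergences $e^{-tH_\hbar}\to e^{-tV}$ and $Q_\hbar^B(e^{-t(p^2+V)})\to e^{-tV}$ of Proposition \ref{IMPORTANT} applied to a localized test vector. The two arguments are morally close (both hinge on a test state concentrated where $V$ is near its minimum), but yours leans on the Trotter--Kato machinery already absorbed into Proposition \ref{IMPORTANT}, whereas the paper's Schr\"odinger-side argument is purely variational and self-contained; on the Berezin side your route effectively rederives the Rieffel limit for this particular function from soft operator estimates instead of citing the abstract $C^*$-bundle continuity. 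Both are valid; the paper's is lighter on prerequisites for the Schr\"odinger part, while yours has the advantage of a uniform treatment of the two operators.
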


\begin{proof} See Appendix \ref{appA}.
\end{proof}

\begin{corollary}\label{REMEn} Under the hypotheses (V1)-(V3), the minimal eigenvalues of 
$H_\hbar := \overline{-\hbar \Delta + V}$ satisfy 
$$E^{(0)}_\hbar \to \min_{x\in \bR^n} V(x) = \min_{(q,p) \in \bR^{2n}} p^2 + V(q)\quad \mbox{for} \quad  \hbar \to 0^+\:.$$
\end{corollary}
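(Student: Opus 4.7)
The plan is to deduce this corollary directly from Proposition \ref{IMPORTANT2}(1), together with the elementary observation that $\min_{(q,p) \in \bR^{2n}} (p^2 + V(q)) = \min_{x \in \bR^n} V(x)$, the latter equality holding because $p^2 \geq 0$ with equality at $p=0$, so the infimum over $\bR^{2n}$ is attained exactly on the slice $\{p=0\}$.

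First I would fix any $t > 0$ and invoke Proposition \ref{IMPORTANT2}(1), which guarantees $e^{-tE^{(0)}_\hbar} \to e^{-t\min V}$ as $\hbar \to 0^+$. Since the real exponential $\bR \ni s \mapsto e^{-ts}$ is a homeomorphism onto $(0,\infty)$ (with continuous inverse $r \mapsto -\frac{1}{t}\log r$), one can apply this inverse to the convergent sequence, yielding
\begin{equation*}
E^{(0)}_\hbar = -\frac{1}{t} \log\bigl(e^{-tE^{(0)}_\hbar}\bigr) \;\longrightarrow\; -\frac{1}{t} \log\bigl(e^{-t\min V}\bigr) = \min_{x\in \bR^n} V(x)\:,
\end{equation*}
as $\hbar \to 0^+$.

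There are essentially no obstacles: (V1)--(V2) ensure $\min V$ is attained and finite (as noted around \eqref{defC}), so the logarithm step is legitimate, and no $t$-dependence survives because the limit $\min V$ is independent of the auxiliary parameter $t>0$ used in Proposition \ref{IMPORTANT2}. Combining this limit with the trivial identity $\min_{x} V(x) = \min_{(q,p)} (p^2+V(q))$ gives the claim.
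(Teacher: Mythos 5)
Your proposal is correct and matches the paper's intent: the corollary is stated as an immediate consequence of Proposition \ref{IMPORTANT2}(1), and inverting the exponential by applying $-\frac{1}{t}\log(\cdot)$ to the convergent sequence $e^{-tE^{(0)}_\hbar} \to e^{-t\min V}$ is exactly the step that makes this precise, with \eqref{bb} guaranteeing $E^{(0)}_\hbar \geq \min V > -\infty$ so all quantities are finite. In fact, the paper's own proof of Proposition \ref{IMPORTANT2} establishes $E^{(0)}_\hbar \to \min V$ directly and then exponentiates, so your argument simply runs that last step in reverse, which is equally valid.
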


Let us finally pass to prove that the dimension of the eigenspaces of the maximal eigenvalues of respectively $Q_\hbar^B(e^{-t(p^2 +V)})$ and
$e^{-tH_\hbar}$ is $1$ in both cases.

\begin{proposition}\label{propDIM} Under the hypotheses (V1)-(V3) and for $t>0$, the eigenspaces of $Q_\hbar^B(e^{-t(p^2 +V)})$ and
$H_\hbar$ associated to the respective maximal  and minimal  eigenvalues $\lambda^{(0)}_\hbar$ and 
$E^{(0)}_\hbar$  have both dimension $1$.
\end{proposition}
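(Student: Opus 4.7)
For both operators I would show that they are compact, positive, self-adjoint integral operators on $L^2(\bR^n,dx)$ whose kernels are \emph{strictly} pointwise positive on $\bR^n\times\bR^n$, and then invoke the standard Perron--Frobenius argument for positivity-improving operators. Recall this argument: if $A=A^*\ge 0$ is compact on $L^2(\bR^n,dx)$ with a real kernel $K(x,y)>0$ a.e., and if $\psi$ is a top eigenvector, then $\langle|\psi|,A|\psi|\rangle\ge\langle\psi,A\psi\rangle=\|A\|$, so $|\psi|$ is itself a top eigenvector; positivity-improving yields $|\psi|=\|A\|^{-1}A|\psi|>0$ a.e., hence $\psi$ is sign-definite and strictly positive. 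Any two top eigenvectors would then both be strictly positive, so they could not be orthogonal, forcing the top eigenspace to be one-dimensional.

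\textbf{The Berezin side.} Writing $Q_\hbar^B(e^{-t(p^2+V)})$ as an integral operator via \eqref{defQB2}, its kernel is
$$K_\hbar(x,y)=\int_{\bR^{2n}}e^{-t(p^2+V(q))}\,\Psi_\hbar^{(q,p)}(x)\overline{\Psi_\hbar^{(q,p)}(y)}\,\frac{dq\,dp}{(2\pi\hbar)^n}.$$
From the explicit formula \eqref{vecPSI} a direct completion of the square gives
$$\Psi_\hbar^{(q,p)}(x)\overline{\Psi_\hbar^{(q,p)}(y)}=\frac{e^{ip\cdot(x-y)/\hbar}}{(\pi\hbar)^{n/2}}\,e^{-(x-y)^2/(4\hbar)}\,e^{-(q-(x+y)/2)^2/\hbar}.$$
Because the symbol $e^{-t(p^2+V(q))}=e^{-tV(q)}\,e^{-tp^2}$ factorises, the $p$-integral is a real Gaussian, producing a factor $(\pi/t)^{n/2}e^{-(x-y)^2/(4t\hbar^2)}$, while the $q$-integral is $\int_{\bR^n}e^{-tV(q)}e^{-(q-(x+y)/2)^2/\hbar}\,dq$, which is finite by (V1)--(V3) and strictly positive pointwise. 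Thus $K_\hbar(x,y)>0$ for every $x,y\in\bR^n$, and compactness of $Q_\hbar^B(e^{-t(p^2+V)})$ is already granted by Theorem \ref{BerezinC0}.

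\textbf{The Schr\"odinger side and conclusion.} Under (V1)--(V2), $H_\hbar$ has compact resolvent by Theorem XIII.67 of \cite{RS4}, so $e^{-tH_\hbar}$ is compact. Strict positivity of its kernel is classical: one may either invoke the Feynman--Kac representation, or pass to the limit in Trotter's product formula $e^{-tH_\hbar}=\mathrm{s}\text{-}\lim_{n\to\infty}\bigl(e^{t\hbar^2\Delta/n}e^{-tV/n}\bigr)^n$, exploiting the strictly positive Gaussian kernel of $e^{t\hbar^2\Delta/n}$ and the strictly positive multiplier $e^{-tV/n}$; see for instance \cite[Thm XIII.45]{RS4}. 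Applying Perron--Frobenius to both operators yields simplicity of $\lambda^{(0)}_\hbar$ for $Q_\hbar^B(e^{-t(p^2+V)})$ and of $e^{-tE^{(0)}_\hbar}$ for $e^{-tH_\hbar}$; since the nonzero eigenspaces of $e^{-tH_\hbar}$ and $H_\hbar$ coincide, this gives the one-dimensionality of the ground-state eigenspace of $H_\hbar$. The only nontrivial technical step is the explicit Gaussian computation of $K_\hbar(x,y)$, in which the factorised form $p^2+V(q)$ is essential; outside this special structure positivity of the Berezin kernel would not be granted.
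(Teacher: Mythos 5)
Your proposal is correct, but it takes a genuinely different (and arguably more direct) route for the Berezin side than the paper does. The paper's proof does not compute the Berezin kernel directly from the coherent-state projectors: instead it uses the bridge $Q^B_\hbar(f)=Q^W_\hbar(e^{\hbar\Delta_{2n}/4}f)$ from \eqref{Delta2n}, factorises the mollified symbol $\mathcal{E}=\mathcal{E}_1(x)\mathcal{E}_2(p)$, applies the explicit Weyl formula \eqref{Qexp2} for product symbols, and then verifies the two hypotheses \emph{positivity preserving} and \emph{ergodic} required by Theorem XIII.43 of \cite{RS4}. Your approach bypasses the Weyl detour entirely, extracts the Schwartz kernel
\[
K_\hbar(x,y)=\int_{\bR^{2n}}e^{-t(p^2+V(q))}\,\Psi_\hbar^{(q,p)}(x)\overline{\Psi_\hbar^{(q,p)}(y)}\,\frac{dq\,dp}{(2\pi\hbar)^n}
\]
(justified by Fubini, since $e^{-th}\in L^1\cap L^\infty$ and the operator is trace class), and exhibits strict positivity via the Gaussian factorisation. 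The two representations of the kernel are of course the same object; yours is shorter and, unlike the paper's, never mentions $Q^W_\hbar$. One small step in your Perron--Frobenius recollection is compressed: from $|\psi|=\|A\|^{-1}A|\psi|>0$ a.e.\ alone you cannot yet conclude that $\psi$ is ``sign-definite''. The correct deduction uses the \emph{equality} $\langle|\psi|,A|\psi|\rangle=\langle\psi,A\psi\rangle$, which together with $K>0$ forces $\overline{\psi(x)}\psi(y)=|\psi(x)||\psi(y)|$ for a.e.\ $(x,y)$, hence $\psi=c|\psi|$ for a constant unimodular $c$; only then does non-orthogonality of two top eigenvectors follow. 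This is the same argument, just stated with the missing step. For the Schr\"odinger side, the paper cites Theorem XIII.47 of \cite{RS4} directly (which already delivers simplicity and strict positivity of the ground state under (V1)--(V3)), whereas you obtain positivity of the kernel of $e^{-tH_\hbar}$ via Trotter or Feynman--Kac and then apply Perron--Frobenius; both are standard and equivalent. Finally, your closing remark that the factorised structure $h=p^2+V(q)$ (so that the $p$-part is Gaussian, with Bochner-positive Fourier transform) is what makes the Berezin kernel manifestly positive is accurate and is, implicitly, the same structural fact the paper exploits when it factorises $\mathcal{E}=\mathcal{E}_1\mathcal{E}_2$.
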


\begin{proof} See Appendix \ref{appA}.
\end{proof}
\noindent In summary, if (V1)-(V3) are valid, then   the eigenspaces of maximal eigenvalues of $Q_\hbar^B(e^{-t(p^2 +V)})$ and $e^{-tH_\hbar}$
are spanned respectively by corresponding unit eigenvectors $\varphi_\hbar^{(0)}$ and $\psi_\hbar^{(0)}$  defined up to phases. 
The vector $\psi_\hbar^{(0)}$ also defines the {\em ground state} of $H_\hbar$. 
It would nice to to study  how these eigenvectors are related. 
 We expect that they should coincide as algebraic states when acting on the observables $Q_\hbar^B(f)$ in the limit of small $\hbar$. 
We postpone this discussion to corollary \ref{corcomp}, where we shall prove that it is in fact the case for suitable choices of $V$.

We finally have an important corollary regarding the validity of some hypotheses assumed in stating theorem  \ref{MT2}  and theorem \ref{thm:claslimNW2}.

\begin{corollary}\label{corollarylink} If $e \in C_0(\bR^{2n})$ takes the form (\ref{bee}) and (V1)-(V3) are satisfied for $V$, the following facts are valid.
\begin{itemize}
\item[(1)] If $\min e$ is achieved in a unique point $\sigma_0=(0,q_0) \in \bR^{2n}$, then all the  hypotheses of Thm \ref{MT2} 
are valid  with $\Lambda= e^{-tV(q_0)}$, $\lambda_\hbar := \lambda_\hbar^{(0)}$,   $\varphi_\hbar = \varphi_\hbar^{(0)}$.
\item[(2)] If $\Lambda= e^{-t \min V}$, and the   hypotheses (a)-(c) of  Thm \ref{thm:claslimNW2} hold, then the theorem 
is valid  with $\Lambda= e^{-t \min V}$, $\lambda_\hbar := \lambda_\hbar^{(0)}$, $\varphi_\hbar = \varphi_\hbar^{(0)}$.
\end{itemize}
In both cases $\varphi_\hbar = \varphi_\hbar^{(0)}$ is 
the unique unit eigenvector (up to phases) corresponding to eigenvalue $\lambda_\hbar^{(0)}$.
\end{corollary}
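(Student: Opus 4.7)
The plan is to recognize that Corollary \ref{corollarylink} is essentially a bookkeeping statement: it identifies the abstract sequences $\{\varphi_\hbar\}$, $\{\lambda_\hbar\}$ of Theorems \ref{MT2} and \ref{thm:claslimNW2} with the concrete sequences $\{\varphi_\hbar^{(0)}\}$, $\{\lambda_\hbar^{(0)}\}$ supplied by the ground-state analysis of $Q_\hbar^B(e^{-t(p^2+V)})$. So the proof is organized as a verification that the hypotheses of those two theorems are all met under (V1)--(V3).

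First I would check the ambient regularity: $e(q,p) := e^{-t(p^2+V(q))}$ lies in $C_0(\bR^{2n})$. Indeed, $e(q,p) = e^{-tp^2}\, e^{-tV(q)}$; the first factor gives Gaussian decay in $p$, while (V3) ensures $e^{-tV}\in\mathcal{S}(\bR^n)\subset C_0(\bR^n)$, so the product vanishes at infinity. By (V1)--(V2), $\min V$ is attained and is finite, so $\max_{\bR^{2n}} e = e^{-t\min V}>0$.

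For part (1), I need to verify the hypotheses of Theorem \ref{MT2}. The assumption is that the maximum of $e$ is attained at the unique point $\sigma_0 = (0,q_0)$, which forces $V(q_0) = \min V$, hence $\Lambda = e^{-tV(q_0)} = e^{-t\min V}\in\mathrm{ran}(e)\setminus\{0\}$, and $e^{-1}(\{\Lambda\}) = \{\sigma_0\}$. By Proposition \ref{IMPORTANT2}(1), $\lambda_\hbar^{(0)}\to e^{-t\min V} = \Lambda$, while Proposition \ref{propDIM} guarantees that each $\lambda_\hbar^{(0)}$ is a simple eigenvalue, so $\varphi_\hbar^{(0)}$ is well defined up to a phase. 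All hypotheses of Theorem \ref{MT2} are then in place with $\varphi_\hbar := \varphi_\hbar^{(0)}$ and $\lambda_\hbar := \lambda_\hbar^{(0)}$.

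For part (2), I would proceed analogously for Theorem \ref{thm:claslimNW2}. Conditions (a)--(c) are assumed, and $\Lambda = e^{-t\min V}\in\mathrm{ran}(e)\setminus\{0\}$ as before. Proposition \ref{IMPORTANT2}(1) again yields $\lambda_\hbar^{(0)}\to\Lambda$, and Proposition \ref{propDIM} provides the non-degeneracy of $\lambda_\hbar^{(0)}$ required by the theorem. The final clause of the corollary, stating uniqueness up to phase of $\varphi_\hbar^{(0)}$, is simply a restatement of the one-dimensionality proved in Proposition \ref{propDIM}. There is in fact no serious obstacle here; the only point requiring a modicum of care is confirming that $e\in C_0(\bR^{2n})$ and that the identification $\Lambda = \max e$ is legitimate, both of which rest on the coercivity hypothesis (V2) together with (V3).
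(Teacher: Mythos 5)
The paper gives no explicit proof of this corollary, and your reconstruction (checking $e\in C_0(\bR^{2n})$ via (V2)--(V3), invoking Proposition~\ref{IMPORTANT2}(1) for $\lambda_\hbar^{(0)}\to e^{-t\min V}=\Lambda\neq 0$, and Proposition~\ref{propDIM} for simplicity of $\lambda_\hbar^{(0)}$) is exactly the argument the authors evidently have in mind. You also correctly read the phrase ``$\min e$ is achieved in a unique point'' as referring to the \emph{maximum} of $e$ (equivalently $\min h$, attained at $\sigma_0=(0,q_0)$); as written it is a typographical slip, since the infimum of a nonnegative $C_0$ function is $0$ and is never attained.
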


\section{Semiclassical properties of Schr\"{o}dinger operators}\label{MainS}
In this section we prove the existence of the classical limit of a sequence of eigenvectors of minimal eigenvalues of   corresponding to Schr\"{o}dinger operators $H_\hbar$. Similar as in Section \ref{SemiclassicalpropertiesofBerezinquantizationmaps}, we first prove a localization result (Thm. \ref{localizationSCH}) of such sequences, followed by two main theorems (Thm. \ref{MT12} and Thm. \ref{thm:claslimNEWSchr}) where again distinction is made between the presence of a symmetry or not. We point out to the reader that localization of eigenvectors for Schr\"{o}dinger operators has been extensively studied and often yields exponential decay \cite{Hel,SJHE,Sim85}.

\subsection{Localization of eigenvectors}
As Schr\"{o}dinger operators are unbounded, proposition \ref{localization2} cannot directly be applied to eigenvectors of such operators. We hereto prove a similar result yielding localization of ground state eigenvectors $\{\psi_{\hbar}^{(0)}\}_{\hbar}$, with minimal eigenvalue $E_\hbar^{(0)}$, of the Schr\"{o}dinger operators $H_\hbar=\overline{-\hbar\Delta + V}$ on $L^2(\bR^n, dx)$ where $V$ satisfies conditions (V1)-(V3).  

For $h(p,q):=p^2+V(q)$ and some given $t>0$, let us focus on  the preimage $e^{-1}(\{\max e^{-th}\})$, where  $e(p,q):=e^{-th(p,q)}$. 
We stress that this set generally contains more than one point and also it coincides with the preimage $h^{-1}(\{\min h\}) = \{p=0\}\times V^{-1}(\{\min V\})$.

If 
 $e_1(p):=e^{-tp^2}$ and $e_2(q):=e^{-tV(q)}$,  consider a class of open neighborhoods ${\cal U}_\epsilon$ of $e^{-1}(\{\max e^{-th}\})$ 
\begin{align}
{\cal U}_\epsilon:={\cal U}_\epsilon^1\times {\cal U}_\epsilon^2, \label{NBH}
\end{align}
$${\cal U}_\epsilon^1:=e_1^{-1}((1-\epsilon,1+\epsilon))\quad \mbox{and}\quad {\cal U}_\epsilon^2=e_2^{-1}((e^{-t\min V}-\epsilon,e^{-t\min V}+\epsilon))\:, \quad \epsilon> 0\:.$$
If we assume $\epsilon>0$ is sufficiently small, it easy to prove that there is  $C>0$, independent of $\epsilon$, such that
\beq
\sigma \in {\cal U}_\epsilon \quad \mbox{implies} \quad |e^{-th(\sigma)}- \max e^{-th}| < C \epsilon \label{Ct}\:.
\eeq

This yields the following proposition.

\begin{proposition}\label{localizationSCH}  Let  $H_\hbar$ be as in \eqref{HypotV}, where $V$ satisfies (V1)-(V3).
Let $\{\psi_{\hbar}^{(0)}\}_{\hbar}$ be a sequence of eigenvectors of a Schr\"{o}dinger operators $\{H_\hbar\}_\hbar$ with minimal  eigenvalues $\{E_{\hbar}^{(0)}\}_{\hbar}$ such that, according to corollary \ref{REMEn}
\begin{align}
E_\hbar^{(0)}\to \min V = \min h, \  \text{for $\hbar\to 0^+$.}
\end{align}
\begin{itemize}
\item[(1)]  If $\Psi_\hbar^{(0)}:= W\psi_\hbar^{(0)}$ and  the open neighborhood ${\cal U}_\epsilon$ of $\min h$ is defined as in \eqref{NBH} for every (sufficiently small) $\epsilon >0$, then
\begin{align}
||\Psi_{\hbar}^{(0)}||_{L^2(\mathbb{R}^{2n}\setminus {\cal U}_\epsilon, \frac{d^npd^nq}{(2\pi\hbar)^n})}\to 0, \quad \mbox{for}\quad \hbar \to 0^+ \label{limt}\:.
\end{align}
\item[(2)] If $V^{-1}(\{\min V\}) = \{q_0\} \in \bR^{2n}$ and the family of sets $\{{\cal U}_\epsilon\}_{\epsilon>0}$ is a fundamental system of neighborhoods of $\sigma_0:= (0,q_0)$,
then
$$\langle \varphi_\hbar, Q_\hbar(f) \varphi_\hbar \rangle \to f(\sigma_0)\quad \mbox{as $\hbar \to 0^+$  for every $f\in C_0(\bR^{2n})$.}$$
\end{itemize}
\end{proposition}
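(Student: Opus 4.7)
The plan is to mirror the strategy of Proposition \ref{localization2}, with the crucial substitution that $\psi_\hbar^{(0)}$ is now an eigenvector of $e^{-tH_\hbar}$ rather than of $Q_\hbar^B(e)$ itself; Proposition \ref{IMPORTANT} supplies the bridge between these two operators, while Proposition \ref{IMPORTANT2} furnishes the limit of the relevant eigenvalue $e^{-tE_\hbar^{(0)}} \to e^{-t\min V} =: \Lambda$. First I would check the geometric claim that, for all sufficiently small $\epsilon>0$, there is $c_\epsilon>0$ with $|e(\sigma)-\Lambda| \geq c_\epsilon$ whenever $\sigma \notin \mathcal{U}_\epsilon$. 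Using the product factorization $e(q,p) = e^{-tp^2}\,e^{-tV(q)}$ and the fact that each factor attains its maximum $1$, $e^{-t\min V}$ respectively from above, if $\sigma=(q,p)\notin \mathcal{U}_\epsilon$ then at least one of $e^{-tp^2}\leq 1-\epsilon$ or $e^{-tV(q)}\leq e^{-t\min V}-\epsilon$ must hold, and in either case $|e(\sigma)-\Lambda|\geq \min(\epsilon,\epsilon \Lambda)$.

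Given this, the localization estimate
\begin{equation*}
c_\epsilon^2\, \|\Psi_\hbar^{(0)}\|^2_{L^2(\mathbb{R}^{2n}\setminus \mathcal{U}_\epsilon, d\mu_\hbar)} \leq \int_{\mathbb{R}^{2n}} (e-\Lambda)^2 |\Psi_\hbar^{(0)}|^2\, d\mu_\hbar = \langle \psi_\hbar^{(0)}, Q_\hbar^B((e-\Lambda)^2)\,\psi_\hbar^{(0)}\rangle
\end{equation*}
reduces part (1) to showing that this last expectation vanishes in the limit. Expanding with $Q_\hbar^B(1)=I$ gives
\begin{equation*}
\langle \psi_\hbar^{(0)}, Q_\hbar^B((e-\Lambda)^2)\,\psi_\hbar^{(0)}\rangle = \langle \psi_\hbar^{(0)}, Q_\hbar^B(e^2)\,\psi_\hbar^{(0)}\rangle - 2\Lambda\, \langle \psi_\hbar^{(0)}, Q_\hbar^B(e)\,\psi_\hbar^{(0)}\rangle + \Lambda^2,
\end{equation*}
so it suffices to prove $\langle \psi_\hbar^{(0)}, Q_\hbar^B(e)\,\psi_\hbar^{(0)}\rangle \to \Lambda$ and $\langle \psi_\hbar^{(0)}, Q_\hbar^B(e^2)\,\psi_\hbar^{(0)}\rangle \to \Lambda^2$, so that the total tends to $\Lambda^2 - 2\Lambda^2 + \Lambda^2 = 0$. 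Both limits are obtained by writing $\langle \psi_\hbar^{(0)}, Q_\hbar^B(e^{ks})\,\psi_\hbar^{(0)}\rangle = \langle \psi_\hbar^{(0)}, e^{-ksH_\hbar}\psi_\hbar^{(0)}\rangle + \langle \psi_\hbar^{(0)}, (Q_\hbar^B(e^{ks}) - e^{-ksH_\hbar})\psi_\hbar^{(0)}\rangle$ for $k=1,2$ (noting that $e^k = e^{-kt(p^2+V)}$), observing that the principal term equals $e^{-ktE_\hbar^{(0)}}$ which converges to $\Lambda^k$ by Proposition \ref{IMPORTANT2}, and applying Proposition \ref{IMPORTANT} (invoked with $t$ and $2t$) to the remainder.

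Part (2) then follows without any new ingredient by transcribing the argument used for (2) of Proposition \ref{localization2}. Given $f\in C_0(\bR^{2n})$ and $\epsilon>0$, the continuity of $f$ at $\sigma_0$ together with the hypothesis that $\{\mathcal{U}_{1/m}\}$ is a fundamental system at $\sigma_0$ supplies $m_\epsilon$ with $\sup_{\sigma\in\mathcal{U}_{1/m_\epsilon}}|f(\sigma)-f(\sigma_0)|<\epsilon/2$; splitting the integral representation $\langle \psi_\hbar^{(0)}, Q_\hbar^B(f)\psi_\hbar^{(0)}\rangle - f(\sigma_0) = \int (f-f(\sigma_0))\,d\mu_{\psi_\hbar^{(0)}}$ over $\mathcal{U}_{1/m_\epsilon}$ and its complement and applying part (1) to the latter piece finishes the proof.

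The hard part is the remainder step in the previous paragraph: converting the \emph{strong} convergence $Q_\hbar^B(e^{ks})-e^{-ksH_\hbar}\to 0$ into a bound on the scalar $\langle \psi_\hbar^{(0)}, (Q_\hbar^B(e^{ks})-e^{-ksH_\hbar})\psi_\hbar^{(0)}\rangle$ for the \emph{$\hbar$-dependent} unit vector $\psi_\hbar^{(0)}$, since strong convergence alone does not control operator differences on moving vectors. The cleanest way around this is to exploit the quantitative Trotter-product structure behind Proposition \ref{IMPORTANT} to upgrade the strong convergence to a norm (or at least uniform-on-bounded-sets) estimate on the difference; failing that, one can squeeze: the upper bounds $\langle \psi_\hbar^{(0)}, Q_\hbar^B(e^k)\psi_\hbar^{(0)}\rangle \leq \|Q_\hbar^B(e^k)\|\to \Lambda^k$ from Proposition \ref{IMPORTANT2} combine with the positivity $Q_\hbar^B((e-\Lambda)^2)\geq 0$ (which provides a matching lower bound once one of the two moments is controlled) to close the argument.
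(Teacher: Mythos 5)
The geometric inequality $|e(\sigma)-\Lambda|\geq c_\epsilon$ off $\mathcal{U}_\epsilon$, the reduction to the moments $M_k:=\langle\psi_\hbar^{(0)},Q_\hbar^B(e^k)\psi_\hbar^{(0)}\rangle$ for $k=1,2$, and the transcription of part (2) are all fine, and you correctly identify where the argument breaks; but neither of the two repairs you offer closes the gap. The ``norm upgrade'' of Proposition \ref{IMPORTANT} is not available: both $Q_\hbar^B(e^{-th})$ and $e^{-tH_\hbar}$ are compact for $\hbar>0$ and both converge strongly to the multiplication operator $e^{-tV}$, which is \emph{not} compact, so nothing in the paper (nor in the Trotter/Chernoff machinery behind Lemmas \ref{lemmaA}--\ref{lemmaB}) produces $\|Q_\hbar^B(e^{-th})-e^{-tH_\hbar}\|\to 0$. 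The squeeze fallback also does not close. From $\|Q_\hbar^B(e^k)\|\to\Lambda^k$ (Proposition \ref{IMPORTANT2}) you only get the upper bounds $M_k\leq\Lambda^k+o(1)$, while positivity $Q_\hbar^B((e-\Lambda)^2)\geq 0$ yields $M_2\geq 2\Lambda M_1-\Lambda^2$ --- another constraint from \emph{above} on $M_1$. Feeding these into the target quantity gives only
\begin{equation*}
0\leq M_2-2\Lambda M_1+\Lambda^2\leq 2\Lambda(\Lambda-M_1)+o(1)\:,
\end{equation*}
which is vacuous unless one already knows $M_1\to\Lambda$ --- precisely the lower bound that is not supplied by any of the tools you invoke.

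The missing ingredient is a genuine lower bound on the moments, and this is the crux of the paper's proof. Lemma \ref{lemmaSCH1} obtains it via the operator inequality $H_\hbar+\varepsilon I\geq V+\varepsilon I$ followed by Jensen's inequality applied to the spectral measure of $\sqrt{V+\varepsilon I}$ in the state $\psi_\hbar^{(0)}$, giving $\langle\psi_\hbar^{(0)},e^{-tV}\psi_\hbar^{(0)}\rangle\geq e^{-tE_\hbar^{(0)}}\to 1$ (normalizing $\min V=0$), and analogously for $e^{t\hbar^2\Delta}$. Furthermore the paper works with the two factors $e^{-tV(q)}$ and $e^{-tp^2}$ \emph{separately} rather than with the product $e^{-th}$; this is essential, not cosmetic, because Proposition \ref{prooff} does provide \emph{norm} estimates $\|Q_\hbar^B(e^{-tV})-e^{-tV}\|\to 0$ and $\|Q_\hbar^B(e^{-tp^2})-e^{t\hbar^2\Delta}\|\to 0$ for functions of $q$ alone and of $p$ alone, and it is precisely these norm estimates that permit passage between Berezin observables and the exact quantum operators on the $\hbar$-dependent vectors $\psi_\hbar^{(0)}$. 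Combining the Jensen lower bounds with these norm estimates yields $\langle\psi_\hbar^{(0)},Q_\hbar^B((e^{-tV}-1)^2)\psi_\hbar^{(0)}\rangle\to 0$ and its $p$-analogue, and then localization on the product set $\mathcal{U}_\epsilon=\mathcal{U}_\epsilon^1\times\mathcal{U}_\epsilon^2$ follows by taking the union of the two complements, as the paper does. So the product structure of $\mathcal{U}_\epsilon$ is forced by the available analytical inputs, not merely a convenient geometric choice.
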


\begin{proof} (1)  Take $t>0$.
As already done in previous proofs, without loss of generation we assume that that $\text{min}_{x\in\mathbb{R}^n}V(x)=0$. 
Since  $E_\hbar^{(0)}\to 0$, clearly  $e^{-tE_\hbar^{(0)}}\to 1$, as $\hbar\to 0$ (and $t>0$).  A suitable use of Jensen's inequality for probability measures leads to the following result.

\begin{lemma}\label{lemmaSCH1} Under the hypotheses of theorem \ref{localizationSCH} with $\min V=0$,
\begin{align}
\lim_{\hbar\to 0}\langle\psi_\hbar^{(0)},e^{-tV}\psi_\hbar^{(0)}\rangle=1 \quad \mbox{and} \quad  \lim_{\hbar\to 0}\langle\psi_\hbar^{(0)},e^{t\hbar^2\Delta}\psi_\hbar^{(0)}\rangle=1\:. \label{convJensen}
\end{align}
\end{lemma}

\begin{proof}  See Appendix \ref{appA}.
\end{proof}
The next step is to show that
 \beq \langle\psi_\hbar^{(0)}, Q_\hbar^B((e^{-tV}-1)^2)\psi_\hbar^{(0)}\rangle\to 0\quad \mbox{if $\hbar\to 0$}, \label{limQ}\eeq 
using the former in \eqref{convJensen}. We proceed as in the proof of Proposition \ref{localization2}. Hereto, we first observe
\begin{align}
\langle\psi_\hbar^{(0)}, Q_\hbar^B((e^{-tV}-1)^2)\psi_\hbar^{(0)}\rangle=\langle\psi_\hbar^{(0)}, Q_\hbar^B(e^{-2tV})\psi_\hbar^{(0)}\rangle+1-2\langle\psi_\hbar^{(0)}, Q_\hbar^B(e^{-tV})\psi_\hbar^{(0)}\rangle. \label{decQV}
\end{align}
Proposition \ref{prooff} particularly yields,   $||Q_\hbar^B(e^{-tV}) - e^{-tV}|| \to 0$  
as $\hbar \to 0^+$ in $\gB(L^2(\bR^n, dx))$ for every chosen $t>0$. This together with the former in \eqref{convJensen} and $||\psi_\hbar||=1$, they imply that $\langle\psi_\hbar, Q_\hbar^B(e^{-2tV})\psi_\hbar\rangle$ and $\langle\psi_\hbar, Q_\hbar^B(e^{-tV})\psi_\hbar^{(0)}\rangle$ both converge to $1$ as $\hbar\to 0$. Directly from (\ref{decQV}), we  conclude that (\ref{limQ}) holds.
With a strictly analogous procedure, exploiting the second identity in 
(\ref{convJensen}) and
using again  Proposition \ref{prooff}, for every given $t>0$,  $||Q_\hbar^B(e^{-tp^2}) - e^{t\hbar^2\Delta}|| \to 0$  
as $\hbar \to 0^+$ in $\gB(L^2(\bR^n, dx))$  one also obtains,
\begin{align}
\lim_{\hbar\to 0}\langle\psi_\hbar^{(0)}, Q_\hbar^B((e^{t\hbar^2p^2}-1)^2)\psi_\hbar^{(0)}\rangle=0. \label{conD}
\end{align}
To conclude, we have
\begin{align*}
&||\Psi_\hbar^{(0)}||_{\mathbb{R}^{2n}\setminus (\bR^n \times  \mathcal{U}^2_\epsilon)}^2=\int_{\mathbb{R}^{2n}\setminus (\bR^n \times \mathcal{U}^2_\epsilon)} \sp\sp\sp \sp\sp\sp|\Psi_\hbar^{(0)}(x)|^2d\mu_\hbar \leq
\frac{1}{\epsilon^2}\int_{\mathbb{R}^{2n}\setminus (\bR^n \times \mathcal{U}^2_\epsilon)} \sp\sp\sp \sp\sp\sp (1-e^{-V(q)})^2|\Psi_\hbar^{(0)}(x)|^2d\mu_\hbar\\&\leq
\frac{1}{\epsilon^2}\int_{\mathbb{R}^{2n}} (1-e^{-V(q)})^2|\Psi_\hbar(x)|^2d\mu_\hbar   =
\frac{1}{\epsilon^2} \langle \psi_\hbar^{(0)}, Q^B_\hbar((e^{-tV}-1)^2) \psi_\hbar^{(0)} \rangle \to 0\:,
\end{align*}
for $\hbar \to 0^+$.
A similar procedure relying on (\ref{conD}) proves that
$
||\Psi_\hbar^{(0)}||_{\mathbb{R}^{2n}\setminus (\mathcal{U}^1_\epsilon\times  \bR^n)}^2 \to 0
$ if $\hbar \to 0^+$.
Just taking the union of the two considered sets 
$$\mathbb{R}^{2n}\setminus ({\cal U}^1_\epsilon\times  \bR^n) \bigcup \mathbb{R}^{2n}\setminus (\bR^n \times {\cal U}^2_\epsilon)= \mathbb{R}^{2n}\setminus {\cal U}_\epsilon,$$
and using the established properties, we eventually arrive at (\ref{limt}).\\
(2) The proof is essentially identical to that of (2) in proposition \ref{localization2}, just exploiting part (1) and everywhere 
replacing $\varphi_\hbar$ for $\psi^{(0)}_\hbar$, taking in particular $\Phi_\hbar = W\psi^{(0)}_\hbar$.
\end{proof}

\subsection{Classical limit of ground states of  Schr\"{o}dinger operators}
We now prove that the classical limit of ground states of Schr\"{o}dinger operators exists. We start with the simple case when no symmetry is present in the potential.
\begin{theorem}[Classical limit without symmetry]\label{MT12}  Let  $H_\hbar$ be as in \eqref{HypotV}, where $V$ satisfies (V1)-(V3) and such that $\min_{q\in \bR^n} V(q)= V(q_0)$ for a unique point $q_0 \in \bR^n$ and define  $\sigma_0:=(0,q_0)$.

If $\{\psi^{(0)}_\hbar\}_{\hbar>0}$ is a family of eigenvectors with minimal eigenvalues $\{E^{(0)}_\hbar\}_{\hbar>0}$ of $H_\hbar$, then
$$\langle \psi^{(0)}_\hbar, Q_\hbar(f) \psi^{(0)}_\hbar \rangle \to f(\sigma_0)\quad \mbox{as $\hbar \to 0^+$, for every $f\in C_0(\bR^{2n})$.}$$
\end{theorem}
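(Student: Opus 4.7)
The strategy is to reduce to Proposition \ref{localizationSCH}(2), so the only real task is verifying that the family $\{\mathcal{U}_\epsilon\}_{\epsilon>0}$ defined in \eqref{NBH} is a fundamental system of neighborhoods of $\sigma_0=(0,q_0)$. Since $\mathcal{U}_\epsilon = \mathcal{U}_\epsilon^1 \times \mathcal{U}_\epsilon^2$ is a product, it suffices to show that each factor is a fundamental system of neighborhoods of $0\in\bR^n$ and $q_0\in\bR^n$ respectively.

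For the momentum factor $\mathcal{U}_\epsilon^1 = e_1^{-1}((1-\epsilon,1+\epsilon))$ with $e_1(p)=e^{-tp^2}$, the argument is elementary: since $e_1(p)\le 1$ with equality only at $p=0$ and $e_1$ is continuous with $e_1(p)\to 0$ as $|p|\to\infty$, we have $\mathcal{U}_\epsilon^1 = \{p\in\bR^n : p^2 < -t^{-1}\log(1-\epsilon)\}$ for small $\epsilon>0$, which is an open ball around $0$ whose radius shrinks to $0$ as $\epsilon\to 0^+$. Hence these do form a fundamental system of neighborhoods of $0$.

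For the position factor $\mathcal{U}_\epsilon^2 = e_2^{-1}((e^{-t\min V}-\epsilon, e^{-t\min V}+\epsilon))$ with $e_2(q)=e^{-tV(q)}$, I would argue exactly as in the proof of Theorem \ref{MT2}. By (V1)--(V2), $e_2\in C_0(\bR^n)$ and attains its maximum value $e^{-t\min V}\neq 0$ at the unique point $q_0$. Suppose for contradiction that $\{\mathcal{U}_\epsilon^2\}_{\epsilon>0}$ is not a fundamental system at $q_0$: then there exists an open ball $B_r\subset \bR^n$ centered at $q_0$ and, for each $m>m_0$, a point $q_m\in \mathcal{U}_{1/m}^2\setminus B_r$, i.e.\ $|e_2(q_m)-e^{-t\min V}|<1/m$. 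Because $e_2\in C_0(\bR^n)$ and $e^{-t\min V}\neq 0$, the sequence $\{q_m\}$ must remain in a compact set; extracting a subsequence $q_{m_k}\to q'\in\overline{\bR^n\setminus B_r}$ and using continuity of $e_2$, we obtain $e_2(q')=e^{-t\min V}$, hence $V(q')=\min V$ with $q'\neq q_0$, contradicting the uniqueness of $q_0$.

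Combining both factors, $\{\mathcal{U}_\epsilon\}_{\epsilon>0}$ is a fundamental system of neighborhoods of $\sigma_0=(0,q_0)$, and therefore Proposition \ref{localizationSCH}(2) applies and yields the claim. The main conceptual input, namely the localization of the Husimi density $|\Psi_\hbar^{(0)}|^2$ on the classical minimum set, is already done; the present statement is essentially the ``unique minimum'' specialization where that localization collapses to evaluation at a single point. No additional obstacle is expected: the bound $\min h=\min V$ (Corollary \ref{REMEn}) is precisely what guarantees the eigenvalue convergence assumed in Proposition \ref{localizationSCH}, and the uniqueness hypothesis on $q_0$ ensures that the two topological arguments above go through without modification.
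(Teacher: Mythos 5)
Your proof is correct and follows essentially the same route as the paper: reduce to Proposition \ref{localizationSCH}(2) and verify that the neighborhoods $\{\mathcal{U}_\epsilon\}_{\epsilon>0}$ from \eqref{NBH} form a fundamental system at $\sigma_0=(0,q_0)$, with Corollary \ref{REMEn} supplying the eigenvalue convergence. The one small organizational improvement you make over the paper's terse instruction (``rephrase the proof of Thm.~\ref{MT2} using (\ref{Ct})'') is to exploit the product structure $\mathcal{U}_\epsilon=\mathcal{U}_\epsilon^1\times\mathcal{U}_\epsilon^2$ and check each factor separately, which disposes of the momentum factor by a one-line computation and confines the MT2-style compactness argument to the position factor alone, bypassing the need to invoke \eqref{Ct}; this is cleaner than re-running the argument on the full $2n$-dimensional preimage.
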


\begin{proof} The proof is obtained by straightforwardly rephrasing the proof of  theorem \ref{MT2}, by  replacing $\varphi_\hbar$ for $\psi^{(0)}_\hbar$, taking in particular $\Phi_\hbar := W\psi^{(0)}_\hbar$,   and by exploiting the localization properties established in proposition 
 \ref{localizationSCH},  and using   (\ref{Ct}) in particular.
 To fulfil the hypotheses of   theorem \ref{MT2}, observe that 
the eigenvalues $\lambda_\hbar := e ^{-tE^{(0)}_\hbar}$ converge to $\Lambda := e^{-t \min V} = \max e^{-t h}\neq 0$ and this value is reached at a unique point $\sigma_0$ as in theorem \ref{MT2}.
\end{proof}

Note that this theorem in particular proves the classical limit of the simple quantum harmonic oscillator. 

Similarly as in the previous section we now focus on more complex systems and consider the case when a symmetry is present.
Let us again take a group $G$ (a compact topological group or a discrete group) acting by symplectomorphism on $(\mathbb{R}^{2n}, \sum_{k=1}^n dp_k \wedge dq^k)$,
$G \ni g : \mathbb{R}^{2n} \ni (q,p) \mapsto g(q,p) \in \mathbb{R}^{2n}$.
We know from proposition \ref{equivariance2}, that $G$ admits a unitary representation $G\ni g \mpasto U_g \in 
\gB(L^2(\bR^n,dx))$
which acts equivariantly (\ref{equivariantofQ2}) on the quantization map $Q_\hbar^B$. 

Exploiting these notions we are now in the position to prove our main result concerning the classical limit of a sequence of ground-state eigenvectors of $
H_\hbar = \overline{\hbar^2 \Delta + V}$. In a complete analogous way as in the previous section we consider the strict deformation quantization of the Poisson manifold  $(\bR^{2n}, \sum_{k=1}^n dp_k \wedge dq^k)$ (in the sense of Definition \ref{def:deformationq}) associated with Berezin quantization maps $Q_\hbar^B$ acting either on the  space $C_0(\mathbb{R}^{2n})$ or $C_c^\infty(\bR^{2n})$, and prove the existence of the classical 
limit with respect to the observables $Q_\hbar^B(f)$. The following theorem contains the precise statement.

\begin{theorem}[Classical limit with symmetry]\label{thm:claslimNEWSchr}
Consider a group $G$ either finite or topological compact,  a selfadjoint Schr\"odinger operator on $L^2(\bR^n, dx)$
 $H_\hbar:=\overline{ -\hbar^2 \Delta + V},$
as in (\ref{Schroper}) where  $V: \bR^n \to \bR$ satisfies  (V1)-(V3),
and  assume the following hypotheses.
\begin{itemize} 
\item[(a)] $G$ acts, continuously in the topological-group case\footnote{The action $G\times \mathbb{R}^{2n} \ni
 (g,\sigma) \mapsto g\sigma \in \mathbb{R}^{2n}$ is continuous.}, 
on  $(\mathbb{R}^{2n}, \sum_{k=1}^n dp_k \wedge dq^k)$ in terms of symplectomorphism. 
\item[(b)] Defining $h(q,p) :=p^2 + V(q)$, the action of $G$ is leaves invariant $h^{-1}(\{\min h\})$ and is  transitive on it. 
\item[(c)] The unitary representation $U$ of $G$ on $L^2(\bR^n, dx)$
constructed according to proposition \ref{equivariance2} (whose action is 
equivariant on $Q_\hbar^B$) leaves $H_\hbar$ invariant as in (\ref{invH}) for every given $\hbar>0$.
\end{itemize}
  Then the following facts are valid  for every chosen $\sigma_0\in h^{-1}(\{\min h\})$ and for 
a family\footnote{This family always exists as seen in corollary \ref{REMEn}.}  $\{\psi^{(0)}_\hbar\}_{\hbar>0}$ of (normalized) eigenvectors of
$H_\hbar$
   with (non-degenerate) minimal eigenvalues $\{E^{(0)}_\hbar\}_{\hbar>0}$ converging to $\min_{q\in bR^n} V(q)= \min_{(q,p)\in \bR^{2n}} h(q,p)$ as $\hbar\to 0$. 
\begin{itemize}
\item[(1)] If $G$ is topological and compact,
\begin{align}
\lim_{\hbar\to 0^+}\langle\psi^{(0)}_\hbar,Q_{\hbar}^B(f)\psi^{(0)}_\hbar\rangle=\int_{G} f(g\sigma_0)d\mu_G(g),  \quad  \mbox{for every $f\in C_0(\mathbb{R}^{2n})$;}\label{classical limit22}
\end{align}
where $\mu_{G}$ is the normalized Haar measure of $G$. 
\item[(2)] If $G$ is finite,
\begin{align}
\lim_{\hbar\to 0^+}\langle\psi^{(0)}_\hbar,Q_{\hbar}^B(f)\psi^{(0)}_\hbar\rangle=\frac{1}{N_G} \sum_{g\in G} f(g \sigma_0),  \quad  \mbox{for every $f\in C_0(\mathbb{R}^{2n})$;}\label{classical limit2b2}
\end{align}
where $N_G$ is the number of elements of $G$.
\end{itemize}
The left- and right-hand sides of (\ref{classical limit22}) and (\ref{classical limit2b2}) are independent of the choice of $\sigma_0$.
\end{theorem}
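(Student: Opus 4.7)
The strategy is to transfer the argument of Theorem \ref{thm:claslimNW2} from the bounded observable $Q_\hbar^B(e)$ to the Schr\"odinger operator $H_\hbar$, using Proposition \ref{propDIM} to guarantee non-degeneracy of the ground eigenspace, hypothesis (c) to guarantee its $G$-invariance, and Proposition \ref{localizationSCH} to guarantee localization of the Husimi density $|\Psi_\hbar^{(0)}|^2$ near $h^{-1}(\{\min h\})$. The existence of the ground-state family and the convergence $E_\hbar^{(0)}\to \min V$ are supplied by Corollary \ref{REMEn}.

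First, by hypothesis (c) and Proposition \ref{propDIM}, $U_g$ commutes with $H_\hbar$ and maps the one-dimensional eigenspace $\mathbb{C}\psi_\hbar^{(0)}$ into itself, so $U_g\psi_\hbar^{(0)} = e^{i\alpha(g,\hbar)}\psi_\hbar^{(0)}$ for some phase $\alpha(g,\hbar)\in\mathbb{R}$. Using the intertwining relation $WU_g = p\,\zeta_g\, W$ from Proposition \ref{equivariance2} and the fact that $\zeta_g$ preserves $\mathrm{ran}(W)$ (the $G$-action being symplectic and hence Liouville-preserving), we deduce $\Psi_\hbar^{(0)}(g^{-1}\sigma) = e^{i\alpha(g,\hbar)}\Psi_\hbar^{(0)}(\sigma)$, so $|\Psi_\hbar^{(0)}|^2$ is $G$-invariant on $\mathbb{R}^{2n}$. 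Exactly as in the proof of Theorem \ref{thm:claslimNW2}, averaging the identity $\langle \psi_\hbar^{(0)}, Q_\hbar^B(f)\psi_\hbar^{(0)}\rangle = \int_{\mathbb{R}^{2n}}|\Psi_\hbar^{(0)}|^2 f\, d\mu_\hbar$ over $g\in G$ against the normalized Haar measure (or the normalized counting measure in the finite case) and invoking Fubini-Tonelli yields
\begin{equation*}
\langle \psi_\hbar^{(0)}, Q_\hbar^B(f)\psi_\hbar^{(0)}\rangle = \int_{\mathbb{R}^{2n}} |\Psi_\hbar^{(0)}(\sigma)|^2\, F(\sigma)\, d\mu_\hbar(\sigma),
\end{equation*}
with $F(\sigma):=\int_G f(g\sigma)\,d\mu_G(g)$ (respectively $F(\sigma):=N_G^{-1}\sum_{g\in G}f(g\sigma)$). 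This $F$ is continuous, bounded by $\|f\|_\infty$, and by the $G$-transitivity of hypothesis (b) it takes a constant value $F(\sigma_0)$ on $h^{-1}(\{\min h\})$, which in particular proves that the right-hand sides of \eqref{classical limit22} and \eqref{classical limit2b2} are independent of the choice of $\sigma_0$.

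To conclude, for any sufficiently small $\epsilon>0$ split the integral into pieces over $\mathcal{U}_\epsilon$ and $\mathbb{R}^{2n}\setminus\mathcal{U}_\epsilon$, where $\mathcal{U}_\epsilon$ is the neighborhood of $h^{-1}(\{\min h\})$ from \eqref{NBH}. The tail piece is bounded by $2\|F\|_\infty\, \|\Psi_\hbar^{(0)}\|_{L^2(\mathbb{R}^{2n}\setminus\mathcal{U}_\epsilon,d\mu_\hbar)}^2$, which vanishes as $\hbar\to 0^+$ by Proposition \ref{localizationSCH}. The main piece is bounded by $\sup_{\sigma\in\mathcal{U}_\epsilon}|F(\sigma) - F(\sigma_0)|$, so it suffices to show that this supremum tends to $0$ as $\epsilon\to 0^+$.

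The principal technical obstacle is precisely this last step, the analog of Lemma \ref{fclaim}. It rests on the fact that, under (V1)-(V2), the level set $V^{-1}(\{\min V\})$, and hence $h^{-1}(\{\min h\}) = \{0\}\times V^{-1}(\{\min V\})$, is compact, together with the continuity of $F$ and its constancy on this compact set. By a standard compactness argument, every open neighborhood of $h^{-1}(\{\min h\})$ contains $\mathcal{U}_\epsilon$ for $\epsilon$ sufficiently small: otherwise one could extract a sequence $\sigma_k\in \mathcal{U}_{1/k}$ outside a fixed open neighborhood $N$ of $h^{-1}(\{\min h\})$, and since the sets $\mathcal{U}_\epsilon^{2}$ are bounded for small $\epsilon$ (because $V$ is proper by (V2)) while $\mathcal{U}_\epsilon^{1}$ shrinks to $\{0\}$, a convergent subsequence would yield a limit in $h^{-1}(\{\min h\})\setminus N$, a contradiction. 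Combined with the uniform continuity of $F$ on a neighborhood of this compact set, this completes the proof.
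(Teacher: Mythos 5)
Your proposal is correct and follows essentially the same route as the paper: non-degeneracy (Prop.~\ref{propDIM}) plus hypothesis (c) give the phase relation $U_g\psi_\hbar^{(0)}=e^{i\alpha}\psi_\hbar^{(0)}$, the $Q_\hbar^B$-equivariance of Prop.~\ref{equivariance2} transfers this to $G$-invariance of the Husimi density, Haar averaging with Fubini--Tonelli yields the $F$-reformulation, and the final estimate splits over $\mathcal{U}_\epsilon$ and its complement using Prop.~\ref{localizationSCH}. The only small divergence is in the final compactness step (the paper's Lemma~\ref{fclaim2}): you obtain compactness of $h^{-1}(\{\min h\})$ directly from the properness of $V$ in (V2), whereas the paper derives it from the compactness of $G$ together with transitivity of its action on that level set; both observations are valid and lead to the same ``$\mathcal{U}_\epsilon$ eventually sits inside any neighborhood of $h^{-1}(\{\min h\})$'' conclusion via a shared sequential-compactness argument, so the difference is cosmetic rather than structural.
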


\begin{proof}  We only  consider the case of $G$ topological, since the finite case is similar and easier 
as already discussed in the proof of theorem \ref{thm:claslimNW2}. 
Let us define $\Psi_\hbar(\sigma) = W\psi_\hbar^{(0)}$. Since $U_g$ commutes with $H_\hbar$  for (b) and 
the eigenspace of $\psi_\hbar^{(0)}$ has dimension $1$, it must be $U_g\psi_\hbar^{(0)}= e^{ia_g} \psi_\hbar^{(0)}$ for some real 
$a_g$. (\ref{equivariantofQ23}) implies that $\Psi_\hbar(g^{-1}\sigma) =  e^{ia_g} \Psi_\hbar(\sigma)$ so that, using also (a),  the probability  measure 
$|\Psi_\hbar(\sigma)|^2 \frac{d\sigma}{(2\pi \hbar)^n}$ turns out to be $G$-invariant.  This result permits out to straightforwardly follow the proof  of theorem \ref{thm:claslimNW2}, finding
$$\langle\psi^{(0)}_\hbar,Q_{\hbar}^B(f)\psi^{(0)}_\hbar\rangle= \int_{\bR^{2n}}  |\Psi_\hbar(\sigma)|^2F(\sigma) \frac{d\sigma}{(2\pi \hbar)^n},$$
where $F$ was defined in (\ref{Ff}). Exactly as in the proof of theorem \ref{thm:claslimNW2}, 
$F$ turns out to be (i) bounded, (ii) continuous and (iii) constant 
on $h^{-1}(\{\min h\})$.
Going on as in the proof of theorem \ref{thm:claslimNW2}, if  $\sigma_0 \in h^{-1}(\{\min h\})$ is any point, we end up with the estimate
$$\left|\langle\varphi_\hbar,Q_{\hbar}^B(f)\varphi_\hbar\rangle - F(\sigma_0) \right| = 
\left| \int_{\bR^{2n}}  |\Phi_\hbar(\sigma)|^2(F(\sigma)  - F(\sigma_0)) \frac{d\sigma}{(2\pi \hbar)^n} \right|\:.$$
Rephrasing the proof of theorem \ref{thm:claslimNW2}, with ${\cal U}_\delta$  now defined as in (\ref{NBH}) with $e= e^{-th}= e^{-tp^2}e^{-tV}$ for some $t>0$ (notice that
$e^{-1}(\{\max e\})= h^{-1}(\{\min h\})$), we find
\beq\left|\langle\psi^{(0)}_\hbar,Q_{\hbar}^B(f)\psi^{(0)}_\hbar\rangle - F(\sigma_0) \right|
 \leq \sup_{\sigma \in {\cal U}_{1/m}}|F(\sigma)  - F(\sigma_0)| + 2||F||_\infty \int_{\bR^{2n}\setminus {\cal U}_{1/m}}
  |\Psi_\hbar(\sigma)|^2\frac{d\sigma}{(2\pi \hbar)^n}\:,\label{lastline223} \eeq
 for every given  $m\in \bN \setminus\{0\}$.
Analogously to lemma \ref{fclaim}, we now have
\begin{lemma}\label{fclaim2} Under the hypotheses of Thm \ref{thm:claslimNEWSchr},  (b) in particular,  and $F$ defined in (\ref{Ff}),
 for every $\epsilon>0$, there is  $m_\epsilon\in \bN $ such that $\sup_{\sigma \in {\cal U}_{1/m_\epsilon}}|F(\sigma)  - F(\sigma_0)| < \epsilon/2\:,$
where  ${\cal U}_\delta$ is defined in  (\ref{NBH}).
\end{lemma}
\begin{proof}
See Appendix \ref{appA}.
\end{proof}
\noindent 
Keeping that $m_\epsilon$ and exploiting (1) in proposition \ref{localizationSCH}, we can find $H_\epsilon>0$ such that
\begin{align}
2||F||_\infty \int_{\bR^{2n}\setminus {\cal U}_{1/m_\epsilon}}  |\Phi_\hbar(\sigma)|^2\frac{d\sigma}{(2\pi \hbar)^n} < \epsilon/2
, \nonumber 
\end{align}
for $0<\hbar< H_\epsilon$. Looking at (\ref{lastline223}), we conclude that  for every $\epsilon>0$, there is $H_\epsilon$ such that  $0<\hbar< H_\epsilon$ implies
$\left|\langle\psi^{(0)}_\hbar,Q_{\hbar}^B(f)\psi^{(0)}_\hbar\rangle - F(\sigma_0) \right| < \epsilon$,
concluding the proof.
\end{proof}
%
%

We stress that, in general,  $G$ 
does not  necessarily defines a {\em simultaneous}  symmetry group for $h:= p^2+ V$ and  $H_\hbar = \overline{-\hbar^2\Delta + V}$ if $\hbar>0$
and this requirement is however unnecessary for the validity of the theorem above.
 Nevertheless it happens  for example  in the special physically relevant cases of examples \ref{physicalexamples} and in these cases {\em both   theorem \ref{thm:claslimNW2} and theorem
\ref{thm:claslimNEWSchr}  are valid}.

\begin{proposition}\label{TGB}  Let the Hamiltonian in $L^2(\bR^n, dx)$, $n\geq 1$, be of the form $$H_\hbar = \overline{-\hbar^2 \Delta + V}$$
with $V$ satisfying (V1)-(V3).
Consider the natural action   of a group of (some) isometries of $\bR^n$, $G \ni (R,a) : \bR^n \ni x \mapsto a + Rx \in \bR^n$, with $R \in O(n)$ ($R \in \bZ_2$ if $n=1$) and $a\in \bR^n$,
in terms of symplectomorphysms on $\bR^{2n}$,
$$(R,a): \bR^{2n} \ni (q,p) \mapsto  (a+Rq, Rp) \in \bR^{2n}\:, \quad  (R,a) \in G\:.$$
If $V$ is $G$-invariant, then  the  unitary representation $G \ni g \mpasto U_g \in \gB(L^2(\bR^n, dx))$ constructed according to proposition \ref{equivariance2}
  leaves  $H_\hbar$ invariant for every given $\hbar>0$:
\beq\label{invH}
U_g H_\hbar U_g^{-1} = H_\hbar\:.
\eeq
\end{proposition}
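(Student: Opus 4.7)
My plan is to identify $U_g$ as the natural unitary induced on $L^2(\bR^n,dx)$ by the isometric action of $G$ on $\bR^n$ and then to derive the commutation $U_g H_\hbar U_g^{-1}=H_\hbar$ from the essential self-adjointness of $-\hbar^2\Delta+V$ on the smooth core $C_c^\infty(\bR^n)$. Concretely, for each $g=(R,a)\in G$ I set
\[
(\tilde U_g\psi)(x):=\psi(g^{-1}x)=\psi(R^{-1}(x-a)),\qquad \psi\in L^2(\bR^n,dx),
\]
which is unitary since $g$ preserves Lebesgue measure, and $G\ni g\mapsto \tilde U_g$ is visibly a group representation.

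The first step is to verify that $\tilde U_g$ agrees (up to phase factors that are immaterial for the commutation to be proven) with the representation built in Proposition \ref{equivariance2}, which is characterized by the equivariance relation \eqref{equivariantofQ2}. A direct calculation using \eqref{vecPSI} and the orthogonality of $R$ yields
\[
\tilde U_g\Psi_\hbar^{(q,p)}=e^{-iRp\cdot a/(2\hbar)}\,\Psi_\hbar^{g(q,p)}.
\]
Substituting this into the coherent-state integral \eqref{defQB}, the phase $e^{-iRp\cdot a/(2\hbar)}$ appears with its complex conjugate in the two factors $\langle\psi,\Psi_\hbar^{g\sigma}\rangle\langle\Psi_\hbar^{g\sigma},\phi\rangle$ and cancels; the change of variables $\sigma\mapsto g^{-1}\sigma$, licit by $G$-invariance of the Liouville measure, then produces $\tilde U_g Q_\hbar^B(f)\tilde U_g^{-1}=Q_\hbar^B(\zeta_g f)$. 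Hence $\tilde U_g$ realizes the equivariance of Proposition \ref{equivariance2}.

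Next, I check commutation on the dense invariant subspace $C_c^\infty(\bR^n)$, which $\tilde U_g$ preserves because $g$ is a smooth diffeomorphism. For $\psi\in C_c^\infty(\bR^n)$: the chain rule together with the orthogonality of $R$ gives $\Delta(\tilde U_g\psi)=\tilde U_g(\Delta\psi)$, while $(V\tilde U_g\psi)(x)=V(x)\psi(g^{-1}x)=V(g^{-1}x)\psi(g^{-1}x)=(\tilde U_g V\psi)(x)$ by the $G$-invariance of $V$. Therefore $\tilde U_g(-\hbar^2\Delta+V)\psi=(-\hbar^2\Delta+V)\tilde U_g\psi$ on this core.

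Finally, under \textbf{(V1)}--\textbf{(V2)} the symmetric operator $-\hbar^2\Delta+V$ is essentially self-adjoint on $C_c^\infty(\bR^n)$ with closure $H_\hbar$, so $H_\hbar$ is the unique self-adjoint extension of its restriction to that core. Since $\tilde U_g$ is unitary, $\tilde U_g H_\hbar \tilde U_g^{-1}$ is also a self-adjoint extension agreeing with $(-\hbar^2\Delta+V)|_{C_c^\infty(\bR^n)}$; by uniqueness $\tilde U_g H_\hbar \tilde U_g^{-1}=H_\hbar$, which is \eqref{invH}. I expect the most delicate step to be the coherent-state phase bookkeeping needed to align $\tilde U_g$ with the abstract $U_g$ of Proposition \ref{equivariance2}; once that identification is settled, the remaining argument reduces to routine essential-self-adjointness reasoning.
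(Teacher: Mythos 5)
Your overall strategy is the same as the paper's: realize $U_g$ concretely as the geometric pullback $\psi\mapsto\psi\circ g^{-1}$ on $L^2(\bR^n,dx)$, commute that with $-\hbar^2\Delta+V$ on a smooth core, and pass to the closure. The paper uses $\mathcal S(\bR^n)$ where you use $C_c^\infty(\bR^n)$, which is an immaterial choice of core, and the commutation and closure arguments are the same. The one genuine divergence is in how the identification of $U_g$ is justified: the paper reads off $(U_{(R,a)}\psi)(x)=\psi((R,a)^{-1}x)$ ``by direct inspection'' from $U_g=W^*u_gW$ and \eqref{alternate}, whereas you go the other way, defining $\tilde U_g$ first and then checking that it satisfies the equivariance relation \eqref{equivariantofQ2}.

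That last step of yours has a small but real gap. Satisfying \eqref{equivariantofQ2} does not by itself pin down a unitary representation; Proposition~\ref{equivariance2} states that $U$ is ``completely defined'' by the stronger intertwining requirement \eqref{equivariantofQ23}, not by \eqref{equivariantofQ2}. To go from ``$\tilde U_g$ satisfies \eqref{equivariantofQ2}'' to ``$\tilde U_g=c_gU_g$ for some phase $c_g$'' you need to observe that $U_g^{-1}\tilde U_g$ commutes with all of $Q_\hbar^B(C_0(\bR^{2n}))=\gB_\infty(L^2(\bR^n,dx))$ (Theorem~\ref{BerezinC0}(1)), which acts irreducibly, so Schur's lemma forces $U_g^{-1}\tilde U_g$ to be a scalar of modulus one. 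You implicitly lean on this when you say the phase is ``immaterial,'' but it should be spelled out, since without irreducibility the equivariance relation is simply too weak to identify the representation. Aside from that, note that your coherent-state bookkeeping already exhibits a $\sigma$-dependent phase $e^{-iRp\cdot a/(2\hbar)}$ when $a\neq 0$; this is not a flaw in your argument relative to the paper (the paper's ``direct inspection'' glosses over the same point, and only rotations matter for the examples), but it is worth being aware that the translation part of the claimed identity $W^*u_gW=\tilde U_g$ is more delicate than either presentation lets on.
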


\begin{proof} Per direct inspection from  $U_g :=W^*u_gW$ and (\ref{alternate})   we have that
$(U_{(R,a)} \psi)(x) = \psi((R,a)^{-1} x)$. (Notice that, in this case,  $U_{(R,a)}$ is unitary just because $(R,a)$ is an isometry of $\bR^n$ and thus leaves $dx$ invariant.)
The said action of $U_g$ leaves ${\cal S}(\bR^n)$ invariant.
 Furthermore  $U_g \Delta|_{{\cal S}(\bR^n)}U_g^{-1}= \Delta|_{{\cal S}(\bR^n)}$ (again because the action of $g$ is an isometry of $\bR^n$)
and  $U_g V|_{{\cal S}(\bR^n)}U_g^{-1}= V|_{{\cal S}(\bR^n)}$ by hypothesis,
so that
$U_g H_\hbar|_{{\cal S}(\bR^n)} U_g^{-1} = H_\hbar|_{{\cal S}(\bR^n)}$.
Taking the closure of both sides, we have (\ref{invH}).
\end{proof}

The result applies to the cases $V(x) := (x^2-1)^2$ and $G:=\bZ_2$ for $n=1$ and $G:= SO(n)$ if $n> 1$ in particular.

Taking Corollary \ref{corollarylink} into account, this discussion eventually produces a direct comparison of the classical limits referred to the eigenvectors $\varphi^{(0)}_\hbar$ 
and $\psi_\hbar^{(0)}$ respectively of 
$Q^B_\hbar(e^{-t(p^2+V)})$ and $e^{-t\overline{(-\hbar^2 \Delta + V)}}$
and referred to the maximal eigenvalues (i.e. the minimal if focusing on $\overline{-\hbar^2 \Delta + V}$).

\begin{corollary}\label{corcomp}
Assume that one of the two cases is valid.
\begin{itemize}
\item[(1)] The hypotheses  Theorem \ref{MT12} are valid (so that Thm \ref{MT2} holds 
with $\Lambda = \max_{(q,p)\in \bR^{2n}} e^{-t(p^2+V(q))}$),
\item[(2)] The hypotheses of Theorem \ref{thm:claslimNEWSchr} and Thm \ref{thm:claslimNW2}, 
with $\Lambda = \max_{(q,p)\in \bR^{2n}} e^{-t(p^2+V(q))}$,
are simultaneously valid
for the same group $G$  (this happens in particular for the case of proposition \ref{TGB}).
\end{itemize}
Then 
$$\lim_{\hbar \to 0^+} \left( \langle \psi^{(0)}_\hbar, Q_\hbar^B(f) \psi^{(0)}_\hbar\rangle
-\langle \varphi^{(0)}_\hbar, Q_\hbar^B(f)\varphi^{(0)}_\hbar\rangle \right) =0, \quad \mbox{for every $f\in C_0(\bR^{2n})$}.$$
\end{corollary}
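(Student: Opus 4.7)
The plan is to recognise that the corollary is a direct application of the two pairs of classical-limit theorems already established, with Corollary \ref{corollarylink} serving as the bridge that makes the hypotheses match. Concretely, for each family of eigenvectors I will compute the limit separately, show that both limits coincide with the same explicit quantity (either a point evaluation or a group average depending on the case), and subtract. No additional estimate is needed beyond what has been stated.

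In case (1), the hypothesis is that $V$ satisfies (V1)--(V3) and that $\min V = V(q_0)$ is attained at a unique $q_0\in\bR^n$. Setting $\sigma_0=(0,q_0)$, Theorem \ref{MT12} directly gives
\[
\lim_{\hbar\to 0^+}\langle\psi^{(0)}_\hbar,Q_\hbar^B(f)\psi^{(0)}_\hbar\rangle = f(\sigma_0), \qquad f\in C_0(\bR^{2n}).
\]
For the Berezin eigenvectors, I will invoke Corollary \ref{corollarylink}(1): since $e(q,p):=e^{-t(p^2+V(q))}$ attains its maximum $\Lambda=e^{-tV(q_0)}\neq 0$ only at $\sigma_0=(0,q_0)$, and since $\lambda_\hbar^{(0)}\to \Lambda$ by Proposition \ref{IMPORTANT2}, the hypotheses of Theorem \ref{MT2} are satisfied with this $\sigma_0$ and $\varphi_\hbar=\varphi_\hbar^{(0)}$. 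Theorem \ref{MT2} then yields
\[
\lim_{\hbar\to 0^+}\langle\varphi^{(0)}_\hbar,Q_\hbar^B(f)\varphi^{(0)}_\hbar\rangle = f(\sigma_0).
\]
Subtracting the two identical limits gives the claim.

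In case (2), both theorems apply simultaneously (the typical realisations being those of Proposition \ref{TGB}). Theorem \ref{thm:claslimNEWSchr} gives
\[
\lim_{\hbar\to 0^+}\langle\psi^{(0)}_\hbar,Q_\hbar^B(f)\psi^{(0)}_\hbar\rangle = \int_G f(g\sigma_0)\,d\mu_G(g)
\]
(or the analogous finite average if $G$ is finite), for any fixed $\sigma_0\in h^{-1}(\{\min h\})$. For the Berezin side, Corollary \ref{corollarylink}(2) verifies the hypotheses of Theorem \ref{thm:claslimNW2} with the same $\Lambda=e^{-t\min V}$, the same orbit $e^{-1}(\{\Lambda\})=h^{-1}(\{\min h\})$ on which $G$ acts transitively, and the same non-degeneracy of $\lambda_\hbar^{(0)}$ (from Proposition \ref{propDIM}). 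Hence Theorem \ref{thm:claslimNW2} yields exactly the same Haar (or discrete) average. The two limits are equal, so their difference tends to $0$.

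Since the bulk of the work has already been done in the preceding sections, there is essentially no obstacle here; the only subtlety worth flagging is ensuring that the same choice of $\sigma_0$ is used on both sides, which is harmless because both limits are independent of that choice (as explicitly noted in the statements of Theorems \ref{thm:claslimNW2} and \ref{thm:claslimNEWSchr}). The corollary can therefore be stated as a one-line consequence once the correspondence of hypotheses via Corollary \ref{corollarylink} is invoked.
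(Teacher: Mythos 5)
Your proposal is correct and follows exactly the route the paper intends: the corollary is stated without a separate proof precisely because it is the immediate consequence of applying Theorem \ref{MT12} (resp. Theorem \ref{thm:claslimNEWSchr}) to the Schr\"odinger eigenvectors and Theorem \ref{MT2} (resp. Theorem \ref{thm:claslimNW2}, with hypotheses furnished by Corollary \ref{corollarylink} and Propositions \ref{IMPORTANT2}, \ref{propDIM}) to the Berezin eigenvectors, then observing the two limits agree with the same point evaluation $f(\sigma_0)$ or the same Haar/discrete average. Your remark that the independence of the choice of $\sigma_0$ (explicitly guaranteed in the theorem statements) makes the subtraction unambiguous is the one subtle point worth flagging, and you flag it.
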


\section{Spontaneous symmetry breaking as emergent phenomenon}\label{SSBEMERGENCE}
In this section we introduce the notion of spontaneous symmetry breaking in al algebraic framework. In particular, we see that SSB applies to commutative as well as non-commutative $C^*$-algebras, and therefore to classical and quantum theories. In the event that they are encoded by a continuous bundle of $C^*$-algebras (cf. Definition \ref{def:continuous algebra bundle}) this allows one to study SSB as a possibly emergent phenomenon by switching of a semiclassical parameter (e.g. Planck's constant occuring in some index set $I$). All that is needed are the continuity properties of the $C^*$-bundle specified by the continuous cross-sections. This approach is therefore perfectly suitable for studying emergent phenomena arising in the classical limit of an underlying quantum theory (see also \cite{Ven2021}). 

Let us now remind the reader the general context where the notion of spontaneous symmetry breaking takes place.
A $C^*$-{\bf dynamical system} $(\gA, \alpha)$ is a
 $C^*$-algebra $\gA$ equipped with a {\bf dynamical evolution}, i.e.,   a one-parameter group of $C^*$-algebra automorphisms $\alpha:= \{\alpha_t\}_{t \in \bR}$ that is {\em strongly continuous}  on $\gA$:   the map $\bR \ni t \mapsto \alpha_t(a) \in \gA$ is continuous for every $a\in \gA$.
\subsection{Dynamical symmetry groups and SSB}

Given a $C^*$-{\bf dynamical system} $(\gA, \alpha)$  and an {\bf $\alpha$-invariant state} $\omega$, 
i.e., $\omega(a) = \omega(\alpha_t(a))$ for every $a\in \gA$ and $t\in \bR$,
there is a  unique one-parameter group of unitaries $U := \{U_t\}_{t\in \bR}$ which implements $\alpha$ in the GNS representation, i.e.,  $\pi_\omega(\alpha_t(a))= U^{-1}_t \pi_\omega(a) U_t$, 
and leaves fixed the cyclic vector $U_t\Psi_\omega = \Psi_\omega$ (see e.g., \cite{Lan17, Mor}).  $t \mapsto U_t$ is {\em at least} strongly continuous on $\cal H_\omega$ as a consequence of the strong continuity of $t\mapsto \alpha_t$ on $\gA$ and the properties of the GNS construction. Indeed,  we have that $||U_t^{-1} \pi_\omega(a) \Psi_\omega - \pi_\omega(a) \Psi_\omega|| =
||U_t^{-1} \pi_\omega(a) U_t\Psi_\omega - \pi_\omega(a) \Psi_\omega||\leq  ||U_t^{-1} \pi_\omega(a) U_t - \pi_\omega(a)|| = ||\alpha_t(a) - a||$.
\begin{remark}
{\em The strong continuity of $t\mapsto \alpha_t$ of automorphims of $\gA$  is not always compatible  with {\em unbounded} selfadjoint generators of $U$ (and bounded generator is equivalent to say that $\{U_t\}_{t\in \bR}$ is continuous {\em also} in the operator norm of $\gB({\cal H}_\omega)$) as shown in Example 3.2.36 in \cite{BR1}.
 Even if we shall deal with unbounded selfadjoint generators, the case we are about to discuss is safe. That is  because  the relevant $C^*$ algebra is made of compact operators and the following general result applies.
\begin{proposition}\label{compact}
Let  $A \in \gA := \gB_\infty{(\cal H)}$ and $\{U_t\}_{t\in \bR}$ be a  one-parameter group of unitary operators in the Hilbert space ${\cal H}$ that is strongly continuous in $\gB({\cal H})$. The one-parameter group of $C^*$-algebra automorphisms induced by $U$ is strongly continuous in $\gA$:
$||U_t^{-1}A U_t - U_u^{-1} A U_u|| \to 0$ if $t\to u$ (also if the  selfajoint generator of $U$ is unbounded). 
\end{proposition}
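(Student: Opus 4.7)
The plan is to reduce to continuity at $u=0$ by the group structure: writing
\[
U_t^{-1}A U_t - U_u^{-1}A U_u = U_u^{-1}\!\left(U_{t-u}^{-1} A\, U_{t-u} - A\right)\!U_u,
\]
and using that $U_u$ is an isometry of $\gB(\mathcal H)$, I reduce the claim to showing $\|U_{-s} A U_s - A\|\to 0$ as $s\to 0$ for each fixed compact $A$.

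For this, I would employ the commutator identity $U_{-s}A U_s - A = U_{-s}(A U_s - U_s A)$, so that
\[
\|U_{-s} A U_s - A\| = \|AU_s - U_s A\| \leq \|A(U_s - I)\| + \|(U_s - I) A\|.
\]
Strong continuity of $\{U_t\}$ only tells us that $(U_s-I)x\to 0$ pointwise, which is not enough for operator-norm convergence. The compactness hypothesis is precisely what bridges this gap. Denoting by $B_1$ the closed unit ball of $\mathcal H$, the set $\overline{A(B_1)}$ is relatively compact because $A\in\gB_\infty(\mathcal H)$, so
\[
\|(U_s-I)A\| = \sup_{y\in A(B_1)} \|(U_s-I)y\|
\]
would be handled by a uniform-convergence-on-compacta statement. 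The term $\|A(U_s-I)\|$ I would rewrite as $\|(U_{-s}-I)A^*\|$ by taking adjoints, and then treat it identically, using that $A^*\in \gB_\infty(\mathcal H)$ too.

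The only nontrivial ingredient is the auxiliary lemma: if $\{T_s\}_{s\in\mathbb R}\subset\gB(\mathcal H)$ is uniformly bounded and $T_sx\to 0$ for every $x$, then $\sup_{y\in K}\|T_s y\|\to 0$ for every compact $K\subset \mathcal H$. I would establish it by a standard finite-cover argument: given $\varepsilon>0$, use total boundedness to cover $K$ by balls $B(y_1,\eta),\ldots,B(y_N,\eta)$, exploit the uniform bound $\|T_s\|\le M$ (here $M=2$ since $\|U_s-I\|\le 2$) to choose $\eta=\varepsilon/(2M)$, and then pick $s$ small so that $\|T_s y_i\|<\varepsilon/2$ for each of the finitely many centres. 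Applying this lemma to $T_s:=U_s-I$ on the compact sets $\overline{A(B_1)}$ and $\overline{A^*(B_1)}$ closes the argument.

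I do not expect any serious obstacle; the proof is entirely a matter of converting strong continuity plus compactness into norm continuity via a totally-bounded approximation. The only subtlety worth being careful about is that $\|U_s - I\|$ in general need not tend to $0$ (the generator can be unbounded, as the preceding remark emphasises), so the uniform boundedness $\|U_s-I\|\le 2$ rather than any vanishing is what enters the finite-cover estimate.
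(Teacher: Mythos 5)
Your proof is correct, and it takes a genuinely different route than the paper's. The paper approximates the compact operator $A$ in operator norm by finite-rank operators $A_N=\sum_{k=1}^N\langle\psi_k,\cdot\rangle\phi_k$, reduces via the triangle inequality to the rank-one case, and handles $\|U_{-t}A_{\psi,\phi}U_t-A_{\psi,\phi}\|$ directly from strong continuity of $U$ applied to the two vectors $\psi,\phi$. You instead use the characterization of compactness via total boundedness of $\overline{A(B_1)}$, the commutator identity $U_{-s}AU_s-A=U_{-s}(AU_s-U_sA)$ to split into $\|A(U_s-I)\|+\|(U_s-I)A\|$, the adjoint trick $\|A(U_s-I)\|=\|(U_{-s}-I)A^*\|$ to treat both terms symmetrically, and a uniform-convergence-on-compacta lemma (for uniformly bounded, strongly convergent operator families) to close the estimate. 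Both arguments are standard and complete; yours trades the singular-value/finite-rank approximation of $A$ for a finite-cover/total-boundedness argument, and is somewhat more self-contained in that one small auxiliary lemma replaces the double triangle-inequality reduction (first in $N$, then over the $N$ rank-one summands). One small point worth stating explicitly in a final write-up: passing from $\sup_{y\in A(B_1)}$ to $\sup_{y\in\overline{A(B_1)}}$ in the computation of $\|(U_s-I)A\|$ is harmless by continuity of $U_s-I$, and $A^*\in\gB_\infty(\mathcal{H})$ by Schauder's theorem, both of which you implicitly use.
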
}
\end{remark}

\begin{proof} See Appendix \ref{appA}. \end{proof}

A {\bf ground 
state} of a $C^*$-dynamical system $(\gA, \alpha)$ is an algebraic state  $\omega: \gA \to \bC$ such that 
\begin{itemize}
\item[(a)]
the state is $\alpha$-invariant, i.e, $\omega(\alpha_t(a)) = \omega(a)$ for all $t\in \bR$ and all $a\in \gA$,
 \item[(b)] the selfadjoint generator $H$ of the strongly-continuous 
 one-parameter  unitary group $U_t= e^{-itH}$
which implements $\alpha$ in a given  GNS representation $({\cal H}_\omega, \pi_\omega, \Psi_\omega)$ under the requirement $U_t\Psi_\omega = \Psi_\omega$, has spectrum $\sigma(H) \subset [0,+\infty)$. 
\end{itemize}
It is not difficult to prove that the set  $S^{ground}(\gA, \alpha)$ of ground states of $(\gA, \alpha)$ is convex and $*$-weak closed (see e.g. \cite{Lan17}), so that it is also compact for the Banach-Alaoglu theorem. The Krein-Milman theorem implies that 
all ground states can be constructed out of limit points of convex combinations of  {\em extremal ground states}  in the $*$-weak topology. The relevance of 
the extremal ground states relies upon this property of them: they are the building blocks for constructing all other ground states exactly as {\bf pure states}, i.e.,  extremal 
states in the convex $*$-weak  compact set of all algebraic states on $\gA$, 
 are the building blocks for constructing all algebraic states.
 However the elements of $S^{ground}(\gA, \alpha)$ are not necessarily pure states.
Nonetheless,  in many cases of physical interest,  extremal ground states are 
exactly the pure states  which are also ground states \cite{Lan17}.

When $(\gA, \alpha)$ is a $C^*$-dynamical system also endowed with a group $G$ acting on $\gA$ with a group representation $\gamma : G \ni g
 \to \gamma_g$ in terms of $C^*$-automorphisms $\gamma_g : \gA \to \gA$, we say that $G$ is a {\bf dynamical symmetry group} if
$\gamma_g \circ \alpha_t = \alpha_t \circ \gamma_g\quad \mbox{for all $g\in G$ and $t\in \bR$.}$

According  to \cite{Lan17},  
{\bf spontaneous symmetry breaking (SSB)}  occurs for a dynamical system $(\gA, \alpha)$ endowed with  a dynamical symmetry group $G$ if there are no $G$-invariant 
ground states which are extreme points in $S^{ground}(\gA, \alpha)$. Within the usual situation where the extremal points in 
$S^{ground}(\gA, \alpha)$ are the ground pure states of $\gA$, occurrence of SSB means that $G$-invariant ground states must be necessarily mixed states.
A more frequent situation is when there are extreme points in $S^{ground}(\gA, \alpha)$ which are not $G$-invariant. In this case one says that {\bf weak SSB} takes place.

For the sake of shortness we can only stick to the succinctly illustrated  relevant technical definitions,  an exhaustive discussion on the physical importance  of SSB in various contexts
and on the different also inequivalent definitions of SSB is presented in \cite{Lan17,VGRL18}.

\subsection{SSB of ground states  as an emergent phenomenon in Berezin quantization on $\bR^{2n}$ with Schr\"odinger Hamiltonians}

The above definitions applies in particular to the commutative case where $\gA := C_0(X)$ endowed with the $C^*$-norm $||\cdot||_\infty$, referred to a symplectic manifold
 $X$ and the associated Poisson structure $(C^\infty(X), \{\cdot, \cdot\})$.  In this case the states $\omega$ are nothing but the {\em regular}\footnote{All positive Borel measures on $\bR^n$ are automatically regular due to Theorem 2.18  in \cite{Rudin}.} {\em Borel probability measures} $\mu_\omega$ over $X$.  More precisely, if $\omega : \gA \to \bC$ is an algebraic  state, the $C_0(X)$ version of  Riesz's representation theorem of generally complex  measures on locally compact Hausdorff spaces \cite{Rudin}, taking continuity, positivity  and $||\omega||=1$, into account,   proves that $({\cal H}_\omega, \pi_\omega, \Psi_\omega)$ has this form
$${\cal H}_\omega = L^2(X, \mu_\omega)\:, \quad  (\pi_\omega(f)\psi)(\sigma) =f(\sigma) \psi(\sigma)\:,\quad 
 \Psi_\omega(\sigma) = 1\:,  \quad \mbox{for all $f\in C_0(X)$, $\psi\in 
{\cal H}_\omega$ and  $ \sigma \in X$.}$$
 With this representation, the pure states are {\em Dirac measures} concentrated at any point $\sigma \in X$.

A $C^*$-dynamical system structure is constructed when 
the dynamical evolution is furnished by the pullback action of the Hamiltonian flow  $\phi^{(h)}$,
 {\em provided it is complete}, generated by a (real) hamiltonian function $h\in C^\infty(X)$, i.e., 
$\alpha^{(h)}_t(f) := f \circ \phi_t^{(h)}$ for every $f\in C_0(X)$ and $t\in \bR$. 

  It is easy to demonstrate  that $(C_0(X), \alpha^{(h)})$ is a $C^*$-dynamical system (in particular $\alpha^{(h)}$ leaves $C_0(X)$ invariant and is strongly continuous\footnote{To prove the continuity at $t=0$, use in particular  the fact that, if $f\in C_0(X)$ and $\delta>0$, then $|f(\phi^{(h)}_t(\sigma))| < \epsilon$ for $|t| \leq \delta$ and $\sigma \not\in K_{\epsilon, \delta}$, where 
the latter set is compact and $K_{\epsilon,\delta} := \Pi_X \Phi([-\delta,\delta] \times K_\epsilon)$ with 
$\Phi: (t,\sigma) \mapsto (t, \phi^{(h)}_t(\sigma))$, $\Pi_X : \bR \times X \to X$ being the canonical projection, and the compact $K_\epsilon \subset X$ is such that $|f(\tau)| < \epsilon$ if $\tau \not \in K_\epsilon$.}).  
By direct inspection one sees that $\pi_\omega(C_c^\infty(X))$
is included in the domain of the selfadjoint generator of the unitary implementation of $\alpha^{(h)}$ in the GNS representation of a state $\omega$. This  generator acts  as $-i \{h, \pi_\omega(f)\}= -i \{h, f\}$ if $f\in C_c^\infty(X)$, where $\{\cdot,\cdot\}$ is the Poisson bracket associated to the symplectic form.

\begin{proposition}\label{propNh}
The ground states of $(C_0(X), \alpha^{(h)})$ are all of the regular Borel  probability measures on $X$ whose support is contained in the closed set $N_h := \{\sigma \in X \:|\: dh(\sigma) =0\}$.  
\end{proposition}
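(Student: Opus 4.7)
The plan is to translate the statement, via Riesz's representation theorem recalled in the text, into an equivalent assertion about the regular Borel probability measure $\mu_\omega$ on $X$ in the explicit GNS picture $\mathcal{H}_\omega=L^2(X,\mu_\omega)$, $\pi_\omega(f)$ = multiplication by $f$, $\Psi_\omega\equiv 1$. Under this identification, $\alpha^{(h)}$-invariance of $\omega$ is exactly flow-invariance of $\mu_\omega$ under $\phi^{(h)}_t$, and the unique self-adjoint generator $H$ of the implementer $U_t$ (fixed by $U_t\Psi_\omega=\Psi_\omega$) contains $C_c^\infty(X)$ in its domain and acts there as $Hf=-i\{h,f\}=-iX_hf$, with $X_h$ the Hamiltonian vector field. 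Non-degeneracy of the symplectic form gives $X_h(\sigma)=0$ iff $dh(\sigma)=0$, so $N_h$ is precisely the set of fixed points of the flow $\phi^{(h)}$.

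The easy direction is then almost immediate: if $\mathrm{supp}(\mu_\omega)\subset N_h$, every $\phi^{(h)}_t$ fixes $\mu_\omega$-a.e.\ point, so $\mu_\omega$ is trivially flow-invariant and $(U_t\psi)(\sigma)=\psi(\phi^{(h)}_{-t}(\sigma))=\psi(\sigma)$ for $\mu_\omega$-a.e.\ $\sigma$ and every $\psi\in L^2(X,\mu_\omega)$; hence $U_t=I$, $H=0$, and $\sigma(H)=\{0\}\subset[0,+\infty)$, so $\omega$ is a ground state. Moreover, every regular Borel probability measure supported in $N_h$ arises this way, since any such $\mu$ defines a state on $C_0(X)$ by integration.

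The nontrivial direction is the converse. Assuming $\omega$ is a ground state, flow-invariance of $\mu_\omega$ yields, by differentiating $\int g\circ\phi^{(h)}_t\,d\mu_\omega=\int g\,d\mu_\omega$ at $t=0$ for $g\in C_c^\infty(X)$, the divergence identity $\int X_h g\,d\mu_\omega=0$. Applied to $g=f^2$ for a real $f\in C_c^\infty(X)$ it gives $\langle f,Hf\rangle=-i\int f\,X_hf\,d\mu_\omega=-\tfrac{i}{2}\int X_h(f^2)\,d\mu_\omega=0$, and combined with $H\geq 0$ and the spectral resolution $H=\int_0^{+\infty}\lambda\,dE_\lambda$ the identity $0=\int_0^{+\infty}\lambda\,d\|E_\lambda f\|^2$ forces $f\in\ker H$. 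Hence $X_hf=0$ in $L^2(X,\mu_\omega)$ for every real and, by linearity, complex $f\in C_c^\infty(X)$. To conclude $\mathrm{supp}(\mu_\omega)\subset N_h$ I argue by contradiction: otherwise, since $X\setminus N_h$ is open, there exists $\sigma_0\in\mathrm{supp}(\mu_\omega)\cap(X\setminus N_h)$, and every open neighborhood of $\sigma_0$ has positive $\mu_\omega$-mass; in a local chart around $\sigma_0$ write $X_h=\sum_j a^j\partial_j$ with some $a^k(\sigma_0)\neq 0$, and use a bump cutoff of the coordinate $x^k$ to produce $f\in C_c^\infty(X)$ with $|X_hf|\geq c>0$ on an open neighborhood $W$ of $\sigma_0$ of positive $\mu_\omega$-measure, so that $\int|X_hf|^2\,d\mu_\omega\geq c^2\mu_\omega(W)>0$, contradicting $X_hf=0$ in $L^2(X,\mu_\omega)$.

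The main obstacle is this final localization: the GNS bookkeeping and the one-line spectral argument that kills $H$ on $C_c^\infty(X)$ are formulaic, but converting the a.e.\ vanishing $X_hf=0$ on a dense class of test functions into a genuine constraint on $\mathrm{supp}(\mu_\omega)$ is where one has to use, simultaneously, regularity of $\mu_\omega$, the local smooth structure near points of $X\setminus N_h$, and the non-degeneracy identification $N_h=\{X_h=0\}$ to exhibit an explicit witness function.
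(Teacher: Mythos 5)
Your proof is correct, and it reaches the same intermediate landmarks as the paper (explicit GNS picture on $L^2(X,\mu_\omega)$; $\alpha^{(h)}$-invariance of $\omega$ forces flow-invariance of $\mu_\omega$ and hence $\int_X\{h,a\}\,d\mu_\omega=0$ for $a\in C_c^\infty(X)$; the generator acts as $-i\{h,\cdot\}$ on $C_c^\infty(X)$), but it then takes a genuinely different route to the support constraint.  The paper exploits the positivity condition in the form $-i\int_X \overline{a}\{h,a\}\,d\mu\geq 0$ for a general complex $a=f+ig$, performs the Poisson-bracket bookkeeping to extract the bilinear identity $\int_X f\{h,g\}\,d\mu=0$ for all real $f,g\in C_c^\infty(X)$, specializes $g$ to local coordinate functions to obtain $\int_X f\,(dh)_k\,d\mu=0$ for each $k$, and then invokes the uniqueness clause of Riesz's theorem to conclude $\mu(O)=0$ for any open $O$ on which some $(dh)_k$ is bounded away from zero.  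You instead deduce $\langle f,Hf\rangle=0$ for each \emph{real} $f\in C_c^\infty(X)$ directly from flow-invariance (since $f\{h,f\}=\tfrac12\{h,f^2\}$), apply the elementary spectral-calculus observation that $H\geq0$ with $\langle f,Hf\rangle=0$ forces $Hf=0$, i.e.\ $X_hf=0$ in $L^2(X,\mu_\omega)$, and then close the argument with an explicit local witness: a bump-cutoff of a coordinate $x^k$ near a hypothetical point $\sigma_0\in\operatorname{supp}\mu_\omega\setminus N_h$ makes $|X_hf|$ bounded below on a $\mu_\omega$-positive neighbourhood, contradicting $X_hf=0$ a.e.  What each approach buys: your route replaces the paper's polarization-style manipulation of the positivity form by a one-line kernel argument, and replaces the Riesz-uniqueness step by a constructive test function; the paper's route avoids explicit use of the square root/spectral resolution of $H$ and stays entirely at the level of Poisson-bracket identities and Borel-measure regularity.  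Both are complete and of comparable length; the only minor caveat in yours is that the identification $N_h=\{X_h=0\}$ via non-degeneracy of the symplectic form, and the assertion that $C_c^\infty(X)\subset D(H)$ with $Hf=-i\{h,f\}$, should be cited as established (the paper flags the latter as a ``direct inspection'' before the proposition), which you do in passing.
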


\begin{proof} See Appendix \ref{appA}.
\end{proof}

If $\omega$ is a ground state of $(C_0(X), \alpha^{(h)})$, in view of the above discussion, 
  $\alpha^{(h)}$ is trivially implemented:  $U_t = I$ for every $t\in \bR$ and the positivity 
 condition on  the spectrum of the generator of $U_t$ is automatically fulfilled. In this case  the extremal elements of $S^{ground}(\gA, \alpha)$ are 
the Dirac measures concentrated at the points $\sigma\in X$ such that $dh(\sigma)=0$. In particular {\em extremal} ground states 
are {\em pure} states.

\begin{proposition}\label{WEAKSSB} Consider  the $C^*$-dynamical system $(C_0(\bR^{2n}), \alpha^{(h)})$ where $\alpha^{(h)}$ is generated by the Hamiltonian $h = p^2 + V$ with
\beq V(q) = (q^2-1)^2\:.\label{defVq} \eeq
Consider  the  natural action 
$g: (q,p) \mapsto (gq,gp)$ of $g\in G$  in terms of sympectomorphisms of $(\bR^{2n}, \sum_{k=1}^n dp_k \wedge dq^k)$
  as in proposition  \ref{TGB}
with   $G:= \bZ_2$ if $n=1$ or $G:= SO(n)$ if $n>1$.
Then $G$ is a dynamical symmetry group of $(C_0(\bR^{2n}), \alpha^{(h)})$  with action 
\beq \gamma_g f := f \circ g^{-1} \quad \mbox{for all $g\in G$ and $f\in C_0(\bR^{2n})$}\label{gammaf},\eeq
 and  weak SSB occurs.  
\end{proposition}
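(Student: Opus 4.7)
The plan is to verify in turn that $G$ is a dynamical symmetry group, identify the critical set $N_h$ via Proposition~\ref{propNh}, and then exhibit an extremal ground state that is not $G$-invariant. For the first step, observe that $h(q,p)=|p|^2+(|q|^2-1)^2$ depends only on $|p|^2$ and $|q|^2$, hence is invariant under the diagonal orthogonal action $(q,p)\mapsto(gq,gp)$ of $g\in G$. Since this action is by symplectomorphisms (Proposition~\ref{TGB}), invariance of $h$ together with uniqueness of the Hamiltonian vector field gives $\phi_t^{(h)}\circ g = g\circ\phi_t^{(h)}$ for every $t\in\bR$ (the flow is complete because $h$ is proper and bounded below). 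Pulling back on $C_0(\bR^{2n})$ yields $\gamma_g\circ\alpha^{(h)}_t = \alpha^{(h)}_t\circ\gamma_g$, as required.

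Next, a direct computation gives $\nabla_p h = 2p$ and $\nabla_q h = 4q(|q|^2-1)$, so
\[
N_h = \{(0,0)\}\,\cup\,\bigl(\{q\in\bR^n : |q|=1\}\times\{0\}\bigr).
\]
By Proposition~\ref{propNh} the ground states of $(C_0(\bR^{2n}),\alpha^{(h)})$ are precisely the regular Borel probability measures supported on $N_h$. The Dirac measures $\delta_\sigma$ for $\sigma\in N_h$ are pure states of $C_0(\bR^{2n})$ and hence extremal in the ambient state space, so they remain extremal in the smaller convex set $S^{ground}(C_0(\bR^{2n}),\alpha^{(h)})$.

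Finally, the induced $G$-action on states satisfies $\gamma_g^*\delta_\sigma = \delta_{g^{-1}\sigma}$, so $\delta_\sigma$ is $G$-invariant iff $\sigma$ is a $G$-fixed point. Choose $\sigma_0=(q_0,0)$ with $|q_0|=1$: for $n=1$, $G=\bZ_2$ sends $\sigma_0$ to $-\sigma_0\neq\sigma_0$; for $n\geq 2$, the rotation group $SO(n)$ acts transitively on $S^{n-1}$ and the stabilizer of $q_0$ is a proper subgroup, so $q_0$ is not $G$-fixed either. In both cases $\delta_{\sigma_0}$ is an extremal ground state that fails $G$-invariance, which is by definition weak SSB. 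I expect no substantive obstacle: once Propositions~\ref{TGB} and~\ref{propNh} are in hand the remainder is a direct calculation. Note that this is genuinely weak, not strong, SSB, since the origin is $G$-fixed and $\delta_{(0,0)}$ is therefore a $G$-invariant extremal ground state.
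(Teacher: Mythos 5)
Your proposal is correct and follows essentially the same route as the paper: verify that $G$ commutes with the Hamiltonian flow, invoke Proposition~\ref{propNh} to identify the ground states as probability measures supported on $N_h$, and observe that the Dirac measures at $(q_0,0)$ with $|q_0|=1$ are extremal ground states that are not $G$-invariant. Your explicit remark that $\delta_{(0,0)}$ is a $G$-invariant extremal ground state, so that only weak (not full) SSB occurs, is a worthwhile clarification that the paper leaves implicit.
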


\begin{proof} The Hamiltonian flow $\alpha^{(h)}$ is complete since the level sets of $h$ are compact and every solution of Hamilton equations  is contained in one such set as $h$ is dynamically conserved.
Since  the action of $G$ is given by symplectomorphisms and every $\gamma_g$ leaves $h$ invariant and thus it commutes with the Hamiltonian flow, $G$ is a dynamical symmetry group of  the $C^*$-dynamical system $(C_0(\bR^{2n}), \alpha^{(h)})$. From the discussion above, the extremal ground states are defined by the Dirac measures concentrated at  the set of zeros of $dh$.  In all cases the only $G$-invariant extremal ground state is located at $(q_0,p_0) = (0,0)$. There is however a plethora of non-$G$ invariant extremal ground states located at the points $(q, 0)$ with $|q|=1$ if $n>2$ and exactly two non-$G$ invariant extremal ground states $(\pm1, 0)$ if $n=1$.
\end{proof}

The accumulated results permits us to discuss the phenomenon of (weak) spontaneous symmetry breaking 
as an {\em emergent phenomenon} when passing from the quantum realm to the classical world by switching off $\hbar$ \cite{Lan17,VGRL18}.
We address the reader to Chapter 10 of \cite{Lan17} for an wide discussion on the physical relevance of this viewpoint and the various implications in understanding the quantum-classical transition.

We now consider  the commutative  $C^*$-algebra $\gA_0:= C_0(\bR^{2n})$ is just the $\hbar=0$ fiber of the continuous $C^*$-bundle associated to the Berezin deformation quantization maps  for $\hbar>0$, $Q_\hbar^B : C_0(\bR^{2n})\to  \gA_\hbar := \gB_\infty(L^2(\bR^n dx))$.
Every bounded observable $A \in \gA_\hbar$ can be written as $Q^B_\hbar(f_A)$ for a suitable and generally non-unique $f_A\in C_0(\bR^{2n})$ as a consequence of (4) in theorem
 \ref{teoQB1}.
The dynamical   evolution described by a $\hbar$-parametrized family of  one-parameter group of $C^*$-authomorphisms 
$\bR \ni t \mapsto \alpha^{\hbar}_t$
is provided   by  a corresponding 
$\hbar$-parametrized family 
of one-parameter unitary
groups $\bR \ni t \mapsto U^\hbar_t := e^{-it H_\hbar}$:
\beq
\alpha^\hbar_t(A) := U^\hbar_{-t} A U^\hbar_t, \quad \mbox (A \in \gA_\hbar\:). \label{alphahbar}
\eeq
Notice that the one-parameter group $t\mapsto \alpha_t^\hbar$ on $\gA_\hbar$ is strongly continuous  due to proposition \ref{compact} and thus $(\gA_\hbar, \alpha^\hbar)$
is a $C^*$-dynamical system.
We assume that $H_\hbar := \overline{-\hbar^2 \Delta +V}$ where $V$ satisfies (V1)-(V3). In this case $H_\hbar$ is affiliated to $\gA_\hbar$ viewed as a von Neumann algebra,
because the resolvent of $H_\hbar$ is compact  (Theorem XIII.67 in \cite{RS4}) and thus, in particular, $H_\hbar$ is an observable of the physical system represented by $\gA_\hbar$.
 
That is not the whole story because, if choosing as before $V$ as in (\ref{defVq}), it turns out that $G = \bZ_2$, for $n=1$, or $G:= SO(n)$, if $n>1$,  becomes a 
dynamical symmetry group as a consequence of Prop. \ref{TGB} when the unitary action of $G$ is given by the $Q_\hbar^B$-equivariant representation (\ref{equivariantofQ23}), so that $U_g Q_\hbar^B(f) U_g^* = Q^B_\hbar(\gamma_g f)$, where $\gamma_g$ is the same as in (\ref{gammaf}).
Differently from the classical ($\hbar=0$) case here no SSB occurs. This fact should be  physically evident since there is only one ``ground state'' (in the sense of a vector state with minimal energy)  which is $G$-invariant. However the algebraic notion of ground state given above seems to be more complex and it deserves a closer scrutiny. We have the following general result.

\begin{proposition}\label{NOSSB}
Consider the $C^*$-dynamical system $(\gB_\infty(L^2(\bR^n,dx)), \alpha^{\hbar})$, the latter defined in  (\ref{alphahbar}), and with dynamical symmetry group 
$G$ whose unitary and $Q_\hbar^B$-equivariant  action is defined in  (\ref{equivariantofQ2})-(\ref{equivariantofQ23}).  No SSB (or weak SSB) occurs for $\hbar>0$.
\end{proposition}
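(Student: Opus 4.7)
The plan is to classify all ground states of $(\gB_\infty(L^2(\bR^n,dx)), \alpha^\hbar)$ explicitly and show that the unique one is pure and $G$-invariant. First I recall that algebraic states on $\gB_\infty(L^2(\bR^n, dx))$ are in bijection with density matrices on $L^2(\bR^n, dx)$ via $\omega_\rho(A)=\mathrm{Tr}(\rho A)$, and pure states correspond to rank-one projections (hence to unit vectors up to phase). The condition of $\alpha^\hbar$-invariance $\omega_\rho(\alpha^\hbar_t(A))=\omega_\rho(A)$ for all $A\in \gB_\infty$ and $t\in\bR$ translates into $U^\hbar_t\rho(U^\hbar_t)^{-1}=\rho$, i.e. $\rho$ strongly commutes with $H_\hbar$. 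Since by hypotheses (V1)-(V3) the operator $H_\hbar$ has compact resolvent and a pure point spectrum $\{E_\hbar^{(j)}\}$ with finite-dimensional eigenspaces, any $\alpha^\hbar$-invariant density matrix must decompose as $\rho=\sum_j \rho_j$ with $\rho_j$ a density matrix supported on the eigenspace of $E_\hbar^{(j)}$.

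Next I impose the ground-state condition. If $\omega_\rho$ is $\alpha^\hbar$-invariant, an easy computation using the spectral decomposition above shows that a GNS triple can be built in which the implementing generator is essentially $H_\hbar - E$ for an appropriate constant $E$: more precisely, writing $\rho=\sum_j p_j |\psi_j\rangle\langle\psi_j|$ in an eigenbasis of $H_\hbar$ (with $H_\hbar\psi_j=E_j\psi_j$, $p_j\geq 0$, $\sum p_j=1$), the GNS Hilbert space $\cH_\omega$ decomposes so that the generator of the implementing unitary group carries the shifted eigenvalues $E_j - E_k$ for indices with $p_j>0$ and arbitrary $k$ (or more cleanly, one reduces to the standard form where positivity of the generator is equivalent to $\rho$ being supported in the bottom eigenspace). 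Positivity of the spectrum of the generator therefore forces $p_j=0$ for every $j$ with $E_j>E_\hbar^{(0)}$, so $\rho$ is supported on the ground-state eigenspace of $H_\hbar$.

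By Proposition \ref{propDIM}, this eigenspace is one-dimensional, spanned by the unit vector $\psi_\hbar^{(0)}$. Hence $\rho=|\psi_\hbar^{(0)}\rangle\langle\psi_\hbar^{(0)}|$ is uniquely determined, so there is exactly one ground state $\omega_\hbar$, and it is pure (hence extremal in $S^{\mathrm{ground}}(\gA_\hbar,\alpha^\hbar)$). Finally, because the representation $U$ of $G$ on $L^2(\bR^n, dx)$ constructed in Proposition \ref{equivariance2} leaves $H_\hbar$ invariant under the standing hypotheses (Proposition \ref{TGB}), each $U_g$ preserves the one-dimensional ground-state eigenspace and therefore acts on $\psi_\hbar^{(0)}$ by a phase $e^{i\theta_g}$. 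This phase drops out of $|\psi_\hbar^{(0)}\rangle\langle\psi_\hbar^{(0)}|$, so the unique ground state $\omega_\hbar$ is $G$-invariant. The unique extremal ground state being $G$-invariant rules out both SSB and weak SSB.

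The main obstacle is the step where one rigorously passes from ``$\rho$ commutes with $U^\hbar_t$'' to ``$\rho$ is supported on the ground-state eigenspace'' via the positivity-of-generator condition in the GNS representation; the compactness and pure-point spectrum of $H_\hbar$ and finite-dimensionality of eigenspaces (both consequences of (V1)-(V3)) are what make this argument clean.
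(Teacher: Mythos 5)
Your proof is correct and follows essentially the same route as the paper: normality of states on the compacts gives a density matrix $\rho$, invariance forces $\rho$ to commute with the PVM of $H_\hbar$ and hence to diagonalize in its eigenbasis, the positivity constraint on the GNS generator confines $\rho$ to the ground eigenspace, and Proposition~\ref{propDIM} plus $G$-invariance of $H_\hbar$ finish the argument. One small slip worth flagging: in describing the spectrum of the implementing generator you wrote the shifted eigenvalues as $E_j - E_k$ with $p_j>0$ and $k$ arbitrary, which would be unbounded below; the correct set is $E_k - E_j$ with $p_j>0$ and $k$ arbitrary (the paper's generator is $K=\oplus_{p_n\neq 0}(H_\hbar - E^{(n)}_\hbar I)$), and it is this form that makes positivity force $p_j\neq 0$ only for $j=0$ — a conclusion you do state correctly.
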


\begin{proof} As is well known (see e.g., Theorem 7.75 in \cite{Mor}), if ${\cal H}$ is any complex Hilbert space, the 
algebraic states  on $\gB_\infty({\cal H})$ are all normal and coincide with the  {\bf statistical operators}:  trace class, unit trace, positive operators  $\rho :{\cal H} \to {\cal H}$. Here $\omega_\rho(A) = Tr(\rho A)$ for every $A \in \gB_\infty({\cal H})$. 
Let us  assume the dynamical  invariance property $\omega_\rho (\alpha_t^\hbar(A)) = \omega_\rho (A)$ for every $t\in \bR$ and $A\in \gB_\infty(L^2(\bR^n,dx))$.
 Taking $A = \langle  \cdot, \psi^{(n)}_\hbar \rangle \psi^{(m)}_\hbar$ where $H_\hbar \psi^{(n)}_\hbar=E^{(n)}_\hbar \psi^{(n)}_\hbar$\footnote{According to (\ref{specH}), $n=0,1,\ldots $ labels the Hilbert basis of  eigenvectors of $H_\hbar$ with non-decreasing eigenvalues.} it is easy to prove that $\rho$ must commute with the PVM of $H_\hbar$ so that, in the strong-topology $\rho = \sum_{n = 0}^{+\infty} p_n \langle \cdot, \psi^{(n)}_\hbar \rangle\psi^{(n)}_\hbar$. Where $p_n \geq 0$ and $\sum_{n} p_n =1$.
The eigenvectors $\psi^{(n)}_\hbar$ may be a  rearrangement of the initial ones separately in each eigenspace of $H_\hbar$. 
The GNS representation of $\rho$ takes this form (up to unitary equivalence)  as the reader can prove by direct inspection $${\cal H}_\rho := \oplus_{p_n \neq 0}L^2(\bR^n,dx)\:; \quad \pi_\rho(A) :=
\oplus_{p_n \neq 0}  A\:; \quad  \Psi_\rho :=\oplus_{p_n \neq 0} \sqrt{p_n} \psi^{(n)}_\hbar \:.$$
It is not difficult to see that the strongly continuous one-parameter group of unitaries $V$ which leaves $\Psi_\rho$ invariant and implements $\alpha^\hbar$ is here
$V_{t} \left(\oplus_{p_n \neq 0} \phi_n \right)= \oplus_{p_n \neq 0} e^{-it(H_\hbar -E^{(n)}_\hbar)}\phi_n$. Its generator is $K = \oplus_{p_n \neq 0}  (H_\hbar - E^{(n)}_\hbar I)$. It is obvious that $\sigma(K) \subset [0,+\infty)$ only if $p_n \neq 0$ for $n=0$ so that $\rho = \langle\cdot, \psi_\hbar^{(0)} \rangle
\psi_\hbar^{(0)}$.  The unique algebraic ground state is therefore $\omega^{(0)}_\hbar(A) = \langle \psi_\hbar^{(0)}, A \psi_\hbar^{(0)}\rangle$ for $A\in \gB_\hbar$.  This state is also $G$-invariant since the eigenspace of $H_\hbar$ with minimal eigenvalue has dimension $1$ as established in proposition \ref{propDIM}  and the PVM of $H_\hbar$ commutes with the unitary representation of $G$ since  (\ref{invH}) is valid.
\end{proof}
\noindent  Coming back to the potential $V(q) = (q^2-1)^2$ in particular,  the existence of the classical limits established in theorem \ref{thm:claslimNEWSchr} now specializes to
\begin{itemize}
\item[(1)] if $n>1$,
\begin{align}
\lim_{\hbar\to 0^+}\langle\psi^{(0)}_\hbar,Q_{\hbar}^B(f)\psi^{(0)}_\hbar\rangle=\int_{SO(n)} f(g\sigma_0)d\mu_{SO(n)}(g),  \quad  \mbox{for every $f\in C_0(\mathbb{R}^{2n})$;}\label{classical limit22X}
\end{align}
where $\mu_{SO(n)}$ is the normalized Haar measure of $SO(n)$ and $\sigma_0 = (q_0, 0)$ with $|q_0|=1$,
\item[(2)] if $n=1$,
\begin{align}
\lim_{\hbar\to 0^+}\langle\psi^{(0)}_\hbar,Q_{\hbar}^B(f)\psi^{(0)}_\hbar\rangle=\frac{1}{2} (f(1,0) + f(-1,0)),  \quad  \mbox{for every $f\in C_0(\mathbb{R}^{2})$.}\label{classical limit2b2X}
\end{align}
\end{itemize}
It is  of utmost relevance  to notice that 
  both right-hand sides can be recast to  integrals with respect to $SO(n)/\bZ_2$-invariant  probability measures $\mu$ and $\nu$  on $\bR^{2n}$ with supports given by the whole orbit  $G\sigma_0$, where $G:= SO(n)$ or $\bZ_2$ respectively. The following more general result holds.

\begin{proposition}\label{propmeas} 
Let $G$ be  a topological compact or finite  group  with a   (continuous in the first case)  on $\mathbb{R}^{m}$. Then  there are two regular Borel probability measures on $\bR^{m}$, respectively 
$\mu$ and $\nu$,
such that 
$$\int_{G} f(g\sigma_0)d\mu_G(g)  = \int_{\bR^m} f d\mu \:;\qquad     \frac{1}{N_G} \sum_{g\in G} f(g \sigma_0) = \int_{\bR^m} f d\nu,\qquad \mbox{for all $f\in C_0(\bR^{m})$,}$$ 
where $\mu_G$ is the normalized Haar measure on $G$ in the first case and $N_G$ is the number of elements of $G$ in the second case.
These measures are invariant under the action of $G$ on $\bR^{m}$ and each of their  supports  is  the whole orbit $G\sigma_0$.
\end{proposition}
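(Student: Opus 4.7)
The plan is to construct $\mu$ (respectively $\nu$) directly by the Riesz representation theorem, and then read off invariance and support from the defining formula. For the compact case, define the linear functional
\[
\Lambda : C_0(\bR^m) \to \bC\:, \qquad \Lambda(f) := \int_G f(g\sigma_0)\,d\mu_G(g).
\]
Continuity of the action $G \ni g \mapsto g\sigma_0 \in \bR^m$ makes the integrand continuous and bounded, so $\Lambda$ is well-defined, linear, positive, and satisfies $|\Lambda(f)| \leq \|f\|_\infty$. By the Riesz representation theorem for $C_0(\bR^m)$ there exists a unique regular Borel measure $\mu$ on $\bR^m$ with $\Lambda(f) = \int_{\bR^m} f\,d\mu$ for every $f\in C_0(\bR^m)$. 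The finite case is handled by the same scheme taking $\nu := N_G^{-1}\sum_{g\in G} \delta_{g\sigma_0}$, which is patently a regular Borel probability measure.

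Next I verify that $\mu$ is a probability measure and is supported on $G\sigma_0$. Since $G$ is compact and the action is continuous, the orbit $K := G\sigma_0$ is compact. Pick any $f_0 \in C_c(\bR^m)$ with $0 \leq f_0 \leq 1$ and $f_0 \equiv 1$ on $K$; then $f_0(g\sigma_0) = 1$ for all $g\in G$, so $\int f_0 \,d\mu = \mu_G(G) = 1$, showing $\mu(\bR^m) \geq 1$. Conversely, if $f \in C_0(\bR^m)$ vanishes on $K$, then $\int f\,d\mu = \int_G f(g\sigma_0) d\mu_G(g) = 0$; by Urysohn and the regularity of $\mu$, this forces $\mu(\bR^m \setminus K) = 0$. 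Combined with $\mu(\bR^m)\geq 1$ and $\|\Lambda\|\leq 1$, one concludes $\mu(\bR^m) = 1$ and $\mathrm{supp}(\mu) \subseteq K$. For the reverse inclusion, given $g_0\in G$ and an open neighborhood $U \ni g_0\sigma_0$, choose $f \geq 0$ in $C_c(U)$ with $f(g_0\sigma_0) > 0$; by continuity of $g \mapsto f(g\sigma_0)$ the set $\{g \in G : f(g\sigma_0) > 0\}$ is open and nonempty, hence of positive Haar measure, so $\mu(U) \geq \int f\,d\mu > 0$. Therefore $g_0\sigma_0 \in \mathrm{supp}(\mu)$, and altogether $\mathrm{supp}(\mu) = K$. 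The finite case is immediate from $\nu = N_G^{-1}\sum_{g\in G}\delta_{g\sigma_0}$.

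Finally, $G$-invariance. For $h \in G$ and $f\in C_0(\bR^m)$, the left-invariance of $\mu_G$ yields
\[
\int_{\bR^m} f(h x)\,d\mu(x) = \int_G f(hg\sigma_0)\,d\mu_G(g) = \int_G f(g\sigma_0)\,d\mu_G(g) = \int_{\bR^m} f\,d\mu,
\]
so the pushforward $h_*\mu$ coincides with $\mu$ on $C_0(\bR^m)$ and hence (by Riesz uniqueness) as measures. The analogous computation for $\nu$ is trivial. The most delicate step, in my view, is not the invariance but confirming that $\mathrm{supp}(\mu)$ is the \emph{whole} orbit; this is where continuity of the action and the positivity of Haar measure on nonempty open subsets of $G$ are both essential, and where the argument would fail for a merely measurable action.
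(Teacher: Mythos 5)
Your proof is correct and uses the same main tool as the paper (the Riesz representation theorem applied to the positive linear functional $f\mapsto\int_G f(g\sigma_0)\,d\mu_G(g)$), and the arguments for $\mu(\bR^m)=1$, $\operatorname{supp}\mu\subseteq G\sigma_0$, and $G$-invariance coincide up to minor bookkeeping (you invoke Riesz for positive functionals directly, while the paper first produces a complex measure and then observes positivity). The one place where your route genuinely differs is the reverse inclusion $\operatorname{supp}\mu\supseteq G\sigma_0$: you argue pointwise, picking a bump function in a neighborhood of $g_0\sigma_0$, using continuity of $g\mapsto f(g\sigma_0)$ to get a nonempty open set in $G$, and then appealing to the fact that Haar measure on a compact group is strictly positive on nonempty open sets. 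The paper instead derives this from the $G$-invariance of $\mu$ (already proved) together with transitivity of the action on $G\sigma_0$: the support is nonempty, closed, and $G$-invariant inside $G\sigma_0$, hence equals it. Your argument is more self-contained and does not require having established invariance first; the paper's is shorter once invariance is in hand and avoids explicitly citing positivity of Haar measure on open sets. Both are sound, and your closing remark that continuity of the action and positivity of Haar measure on open sets are the essential ingredients is exactly the point; the paper's version effectively repackages those same facts through the invariance argument.
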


\begin{proof}
It is sufficient to prove the thesis for the former case, the latter being easier. Noticing that $G\sigma_0$ is compact, one has that linear map
$C_0(\bR^{m})\ni f \mapsto \int_G f(g\sigma_0) d\mu_G(g) \in \bC$
 is  $||\cdot||_\infty$ continuous on $ C_0(\bR^{m})$. Riesz' theorem for complex measures implies that it is the integral with respect to a (uniquely defined) regular Borel complex measure
$\mu$  on $\bR^{2n}$. This measure is actually positive, since the map is positive. Riesz' theorem for positive measures implies that  the support of the measure is included in the closed set $G\sigma_0$,
since the map vanishes when evaluated on  $f \in C_c(\bR^{m})$  whose  support is included in an open set with empty intersection with $G\sigma_0$.
The integral 
produces the value $1$
if $f(\sigma)=1$ on the compact orbit $G\sigma_0$, hence $\mu$ is a probability one.  The invariance of the integral under the pullback action of $G$ on its argument $f$ proves that $\mu$ is also $G$ invariant, again using the uniqueness property in Riesz' theorem for positive measures.
Since $\mu(G\sigma_0)=1$, its support cannot be empty. If $\sigma$ belongs to the support also $g\sigma$ does for every $g\in G$. As the action of $G$ is transitive on $G\sigma_0$ we conclude that $\mbox{supp}(\mu) = G\sigma_0$. 
\end{proof}

These measures, which by definition are  $SO(n)/\bZ_2$-invariant ground states of the dynamical system $(\gA_0, \alpha^{0}) = (C_0(\bR^{2n}), \alpha^{(h)})$
with $h(q,p)= p^2+(q^2-1)^2$,
 are therefore  not concentrated on  single points, but they are concentrated on the set of points where $h$ attains its minimum value: the orbit 
$\{(p=0, q)\:|\: |q|=1\}$ in the $SO(n)$ case  and the set $\{(p=0, q=\pm1)\}$ in the $\bZ_2$ case. Hence they are exactly a case of non-extremal $G$-invariant ground states responsible for the weak SSB discussed above.

All that proves that (weak) SSB of $SO(n)/\bZ_2$ occurs in the classical limit $\hbar\to 0^+$, when achieving the theory in $\gA_0= C_0(\bR^{2n})$ with classical hamiltonian $h(q,p)=p^2+V(q)$, where $V(q) = (q^2-1)^2$, from the quantum theory in $\gA_\hbar =\gB_\infty( L^2(\bR^n, dx))$ with Hamiltonian $H_\hbar = \overline{-\hbar^2 \Delta + V}$.
In this sense SSB shows up here  as an {\em emergent phenomenon} \cite{Lan17} when passing from the quantum to the classical realm.

\begin{remark}\label{remfin}
{\em We stress that, this emergent  phenomenon is quite general in the framework of Berezin quantization on $\gA_0:= C_0(\bR^{2n})$
when we use the Berizin map $Q_\hbar^B(f)$ (with $Q_0^B(f):=f$) to describe the observables in  $\gA_\hbar =\gB_\infty( L^2(\bR^n, dx))$.
By collecting Theorem \ref{thm:claslimNEWSchr}, 
Prop. \ref{propNh},
Prop. \ref{NOSSB}, and Prop. \ref{propmeas},  we see that, when 
dealing with a classical  Hamiltonian $h(q,p)= p^2+V(q)$ and its quantum companion $H_\hbar = \overline{-\hbar^2 \Delta + V}$,  where  $V$ satisfies  (V1)-(V3)\footnote{Conditions (V1)-(V3), with $V(q) \to +\infty$ if $|q| \to +\infty$ in particular, imply that the flow of $h=p^2+V$ is complete.}, emergent weak SSB shows up   for some group $G$  provided that the following requirements hold.
\begin{itemize} 
 \item[(a)] 
$G$ is compact or finite (its action on $\bR^{2n}$ is continuous in the former case) and leaves  invariant  both $h$ and $H_\hbar$. The action of $G$ in the quantum Hilbert space is   the $Q_\hbar^B$-equivariant action  induced from  the action of $G$  on $\bR^{2n}$ by  symplectomorhisms (Prop. \ref{equivariance2});
\item[(b)] $h^{-1}(\{\min h\})$ (equivalently, $V^{-1}(\{\min V\})$) includes more than one point;
\item[(c)]  the action of $G$ is transitive on $h^{-1}(\{\min h\})$.
\end{itemize}
According to  Prop. \ref{TGB}, condition (a) is in particular valid when $G$ is made of some isometries  of $\bR^n$,  $(R, a) : \bR^n \ni x \mapsto a + Rx \in \bR^n$ ($a \in \bR^n$, $R \in O(n)$)
which leave $V$ invariant and their action  in terms of symplectomorphisms on $\bR^{2n}$ is the standard one $(R, a) : \bR^{2n} \ni (q,p) \mapsto (a + Rq, Rp) \in \bR^{2n}$.}
\end{remark}

\section*{Acknowledgments} C.J.F. van de Ven  is a Marie Sk\l odowska-Curie fellow of the Istituto
 Nazionale di Alta Matematica and is funded by the INdAM Doctoral Programme in Mathematics and/or 
Applications co-funded by Marie Sk\l odowska-Curie Actions, INdAM-DP-COFUND-2015, grant number 713485. 
The authors are grateful to Nicol\`o Drago,  Sonia Mazzucchi and Cosmas Zachos for helpful discussions.
\appendix
\section{Proofs of some propositions}\label{appA}

\noindent{\bf Proof of Proposition \ref{QWgen}}.
(1) If $f \in  {\cal S}(\bR^{2n})$ then  $\widehat{f} \in  {\cal S}(\bR^{2n})$ as well and
$$|\langle \psi, Q_\hbar^W(f) \phi \rangle |\leq ||\psi||\: ||\phi||\: \frac{1}{(2\pi)^n} \int_{\bR^{2n}} |\widehat{f}(a,b)| dadb,$$
so that $Q_\hbar^W(f)$ exists due to the Riesz lemma and is bounded  and thus  it can be extended to the whole Hilbert space. 
The estimate
(\ref{estimate}) holds trivially. Finally observe that in (\ref{Wgen}) $\widehat{f} \in {\cal S}(\bR^{2n})$ so that the right-hand side is defined for generic vectors $\psi, \phi \in L^2(\bR^2,dx)$ since the map 
$a,b \mapsto  \langle e^{-ia \cdot X}\psi,  e^{ib\cdot P} \phi \rangle e^{-\frac{i\hbar}{2} a\cdot b} $ is bounded. A direct application of the Lebesgue dominated convergence theorem also  using  the fact that ${\cal S}(\bR^n)$ is dense in $L^2(\bR^n, dx)$ proves that 
if valid for generic vectors $\psi,\phi$ and where $Q_\hbar(f)$ is the unique bouded linear extension of the  operator initially defined on  ${\cal S}(\bR^n)$.\\
(2)  $f \in {\cal S}'(\bR^{2n})$  so that $\langle \psi, Q_\hbar^W(f) \phi \rangle$ equals
$$\frac{1}{(2\pi)^n} \int_{\bR^{2n}} \langle e^{-ia \cdot X}\psi,  e^{ib\cdot P} \phi \rangle e^{-\frac{i\hbar}{2} a\cdot b}  (2\pi)^{n/2}\delta(b) \widehat{f}(a) dadb=\frac{1}{(2\pi)^{n/2}} \int_{\bR^{2n}} \langle e^{-ia \cdot X}\psi,  \phi \rangle \widehat{f}(a) da\:.$$
We can now exploit the spectral decomposition \cite{Mor} of $e^{-ia \cdot X}$, 
$\langle  e^{-ia \cdot X}\psi, \phi \rangle =
\int_{\bR} e^{ia \cdot \lambda} d\mu_{\psi, \phi}(\lambda)$,
noticing that this function of $a$ belongs to ${\cal S}(\bR^n)$(it being the Fourier transform of a function of that space).
From the definition of Fourier transform of distributions and the definition of Fourier transform of (finite) complex measures, we conclude that 
$$\langle \psi, Q^W_\hbar(f) \phi \rangle =  \frac{1}{(2\pi)^{n/2}} \int_{\bR^{n}} \widehat{f}(a)\int_{\bR}   e^{ia \cdot \lambda} d\mu_{\psi, \phi}(\lambda) da
= \int_\bR f(\lambda) d\mu_{\psi, \phi}(\lambda)\:.$$
Hence
$\langle \psi, Q^W_\hbar(f) \phi \rangle =  \langle \psi, f(X) \phi \rangle$
and the final bound in the thesis is valid.\\
(3) The proof is strictly analogous to that of (2).\\
(4) With a procedure analogous to the one of the   previous cases using in particular the last expression in (\ref{decwayl}),  we easily prove that the  operator 
$Q_\hbar^W(f_1f_2)$, which is 
bounded and everywhere defined in view of the case (1),
is completely determined by its quadratic form (\ref{formfin}).  \hfill $\Box$\\

\noindent {\bf Proof of proposition \ref{prooff}}. 
(a) Let $f\in C_0(\bR^{n})$ be a function of the variable $q$.   If $\phi \in {\cal S}(\bR^n)$, so that all integration can be interchanged and using
$$\frac{1}{(2\pi\hbar)^n}\int_{\bR^{n}} \int_{\bR^n} e^{i\frac{p(x-x')}{\hbar}}  g(x') dx' dp = g(x')\:, \quad (g \in {\cal S}(\bR^n))\:;$$
we have
$$(Q^B_\hbar(f)\phi)(x) = \frac{1}{2^n(\pi\hbar)^{3n/2}}\int_{\bR^{2n}} f(q) e^{-\frac{(x-q)^2}{2\hbar}}\int_{\bR^n} e^{i\frac{p(x-x')}{\hbar}}   
e^{-\frac{(x'-q)^2}{2\hbar}}\phi(x')dx' dqdp$$ 
$$=  \frac{1}{(\pi\hbar)^{n/2}}\int_{\bR^{n}} f(q) e^{-\frac{(x-q)^2}{2\hbar}}
e^{-\frac{(x-q)^2}{2\hbar}}\phi(x) dq = \frac{1}{(\pi\hbar)^{n/2}}\int_{\bR^{n}} f(q) e^{-\frac{(x-q)^2}{\hbar}} \phi(x) dq\:.
$$
In summary,
$(Q^B_\hbar(f)\phi)(x) =\left( \frac{1}{(\pi\hbar)^{n/2}}\int_{\bR^{n}} f(q) e^{-\frac{(x-q)^2}{\hbar}}  dq\right) \phi(x)\:.$
An easy density argument of ${\cal S}(\bR^n)$ in $L^2(\bR^n, dx)$ extends the above result to the general case
of $\phi \in L^2(\bR^n, dx)$.  To conclude the proof it is sufficient to prove that, if $f\in C_0(\bR^n)$, then 
\beq \frac{1}{(\pi\hbar)^{n/2}}\int_{\bR^{n}} f(q) e^{-\frac{(x-q)^2}{\hbar}}  dq \to f(x) \quad \mbox{uniformly in $x\in \bR^n$ if $\hbar \to 0^+$,}\label{uniff}\eeq
since, considering the functions a multiplicative operators, 
$$\left|\left| \frac{1}{(\pi\hbar)^{n/2}}\int_{\bR^{n}} f(q) e^{-\frac{(\cdot -q)^2}{\hbar}}  dq -f \right|\right|_{\gB(L^2(\bR^n,dx))}
= \left|\left| \frac{1}{(\pi\hbar)^{n/2}}\int_{\bR^{n}} f(q) e^{-\frac{(\cdot -q)^2}{\hbar}}  dq -f \right|\right|_{\infty}\:.$$
To prove (\ref{uniff}), observe that since
$\int_{\bR^{n}} e^{-\frac{(x-q)^2}{\hbar}}  dq  =(\pi\hbar)^{n/2}$,
(\ref{uniff}) can be re-written
\beq I_\hbar(x) := \frac{1}{(\pi\hbar)^{n/2}}\int_{\bR^{n}} (f(q) - f(x) )e^{-\frac{(x-q)^2}{\hbar}}  dq \to 0 \quad \mbox{uniformly in $x\in \bR^n$ if $\hbar \to 0^+$.}\label{uniff2}\eeq
Let us prove that it is valid.  We start from the decomposition 
$$I_\hbar(x) =  \frac{1}{(\pi\hbar)^{n/2}}\int_{|x-q|< \delta}\sp\sp\sp  (f(q) - f(x) )e^{-\frac{(x-q)^2}{\hbar}}  dq  
+ \frac{1}{(\pi\hbar)^{n/2}}\int_{|x-q|\geq  \delta} \sp\sp\sp (f(q) - f(x) )e^{-\frac{(x-q)^2}{\hbar}}  dq  .$$
The crucial observation is that $f\in C_0(\bR^n)$ is necessarily uniformly continuous and thus, for every $\epsilon>0$, there is $\delta>0$ such that $|f(q) - f(x) | < \epsilon/2$ if  $|x-q|< \delta$. Hence,
$$I_\hbar(x) \leq  \epsilon/2
+ \frac{1}{(\pi\hbar)^{n/2}}\int_{|x-q|\geq  \delta} \sp\sp\sp (f(q) - f(x) )e^{-\frac{(x-q)^2}{\hbar}}  dq  \leq \epsilon/2 + 2||f||_\infty  \frac{\hbar^{n/2}}{(\pi)^{n/2}}\int_{|y|\geq  \delta/\hbar} \sp\sp\sp e^{-y^2}  dy 
\leq \epsilon/2 + C \hbar^{n/2}\:,$$
where $y= (q-x)/\sqrt{\hbar}$.
We can finally choose $H_\epsilon>0$ such that $C \hbar^{n/2} < \epsilon/2$ if $0<\hbar <H_\epsilon$, concluding the proof for $f\in C_0(\bR^{n})$.\\
(b) Let $f\in {\cal S}(\bR^n)$ be a function of the variable $p$. If $\phi \in {\cal S}(\bR^n)$ so that all integrals can be interchanged,  we have 
$$(Q^B_\hbar(f)\phi)(x) = \frac{1}{2^n(\pi\hbar)^{3n/2}}\int_{\bR^{2n}} f(p) e^{-\frac{(x-q)^2}{2\hbar}}\int_{\bR^n} e^{i\frac{p(x-x')}{\hbar}}   
e^{-\frac{(x'-q)^2}{2\hbar}}\phi(x')dx' dqdp$$ 
\beq =  \frac{1}{2^{n/2}(\pi\hbar)^{n}}\int_{\bR^{2n}}  \check{f}(x-x') e^{-\frac{(x-q)^2}{2\hbar}}e^{-\frac{(x'-q)^2}{2\hbar}}\phi(x')dx' dq 
 =  \frac{1}{2^{n/2}(\pi\hbar)^{n}}\int_{\bR^{n}}  \check{f}(x-x') G(x-x')\phi(x')dx' \label{conv},\eeq
where 
$G(x) :=  \int_{\bR^n}e^{-\frac{(x-z)^2}{2\hbar}}e^{-\frac{z^2}{2\hbar}}dz\:.$
Observe that, the convolution thorem in ${\cal S}(\bR^n)$ implies that
$\widehat{G}(p) =(2\pi \hbar)^{n/2} \hat{g}(p)^2$
where $\hat{g}$ is the Fourier transform of $g(z) := e^{-\frac{z^2}{2\hbar}}$
$\hat{g}(p) = e^{-\frac{p^2}{2\hbar}}\:.$
Using again the convolution theorem in (\ref{conv}), we conclude that
$$\widehat{(Q^B_\hbar(f)\phi)}(p) =  \frac{(2\pi \hbar)^{n/2}}{2^{n/2}(\pi\hbar)^{n}} (f * \hat{g}^2)(p) \hat{\phi}(p)
=  \left(\frac{1}{(\pi\hbar)^{n/2}}\int_{\bR^n} e^{-(p-b)^2/\hbar} f(b) db\right) \hat{\phi}(p)\:.$$  As $\hat{\phi} \in {\cal S}(\bR^n)$
 (since that space  is invariant under Fourier transform),
this identity can be extended to every $\hat{\phi} \in L^2(\bR^n, dp)$ by means of an elementary density argument.
Hence $Q^B_\hbar(f)$ acts as a multiplicative operator on the Fourier-Plancherel  transforms $\hat{\phi}$ of the wavefunctions  $\phi\in L^2(\bR^n, dx)$.
In other words, if $F_\hbar$ denotes the Fourier-Plancherel unitary operator,
$$\left((F_\hbar Q^B_\hbar(f) F_\hbar^{-1}) F_\hbar\phi\right)(p)   = \left(\frac{1}{(\pi\hbar)^{n/2}}\int_{\bR^n} e^{-(p-b)^2/\hbar} f(b) db\right)  (F_\hbar\phi)(p)\:.$$
Looking at the right-hand side, since $f \in {\cal S}(\bR^n) \subset C_0(\bR^n)$, exploiting the same argument as in (a) 
$||F_\hbar Q^B_\hbar(f) F_\hbar^{-1}- f|| \to 0 \quad \mbox{for $\hbar \to 0$.}$ That is equivalent to
$||Q^B_\hbar(f) - F_\hbar^{-1} f F_\hbar|| \to 0 \quad \mbox{for $\hbar \to 0$.}$
This is the thesis into an equivalent form.  We finally observe that, if $f(p)= e^{-tp^2}$ for a given  $t>0$, then 
$F_\hbar^{-1} f F_\hbar = e^{t\hbar^2 \Delta}\:,$
as is well known. concluding the proof.
\hfill $\Box$\\

\noindent {\bf Proof of Lemma \ref{fclaim}}.
Let us define $\Gamma := e^{-1}(\Lambda)$.  This set is $G$-invariant, i.e.,    $g(\Gamma) \subset \Gamma$ for every $g\in G$ because $e(g\sigma) = e(\sigma)$ for every $g\in G$ and $\sigma \in \bR^{2n}$. Since the action of $G$ on $\Gamma$ is transitive, we also have that $\Gamma = \{g\sigma_0\:|\: g\in G\}$ for every chosen $\sigma_0 \in \Gamma$.
This identity has the important consequence that $\Gamma$ is compact under our main hypotheses. Indeed,    if $G$ is finite,  then  $\Gamma$ is made of a finite number of points and thus it is compact {\em a fortiori}.
If $G$ is a topological compact group and its action is continuous, as requested in the main hypotheses, then  $\Gamma$ is the image of a compact set under the  continuous function $G \ni g \mapsto
g\sigma_0$  and thus $\Gamma$ is compact as well.\\
 If $\delta>0$ a $\delta$-{\em covering}  of $\Gamma$  is a set of the form
$C_\delta := \bigcup_{\sigma\in \Gamma} B_\delta(\sigma)\:,$
where $B_\delta(\sigma)$
is an open ball of radius $\delta$ centred at $\sigma$.\\
Since $\Gamma$ is compact, there is a closed ball $B$ centred at the origin of finite positive radius such that $\Gamma$ is completely contained in the interior of $B$. All other balls we shall consider in this proof will be assumed to be contained in the interior of $B$ as well. 
Since $|e(\sigma)| \to 0$ for $|\sigma| \to +\infty$ and $\Lambda \neq 0$, we can always fix the radius of $B$ such that $|\Lambda-e(\sigma)|> \eta $, for some $\eta>0$,  if $\sigma \not \in B$.\\
Our next step consists of  proving that, given a $\delta$-covering $C_\delta$ of $\Gamma$, with $\delta>0$
arbitrarily taken,  there exists $m_\delta\in \bN$ such that 
${\cal V}_{1/m_\delta} \subset C_\delta$. Indeed, suppose that it is not the case for some $\delta>0$. As a consequence, for every 
$m \in \bN$, it must be ${\cal V}_{1/m} \not \subset C_\delta$, 
so that there is $\sigma_m \in {\cal V}_{1/m}$ not included in $C_\delta$. 
Choosing $m>1/\eta$ and keeping in mind that $|e(\sigma_m)-\Lambda| <1/m$, we can also exclude that $\sigma_m$ stays outside $B$. In other words, for every $m\in \bN$ sufficiently large, we have a point  $\sigma'_m \in B \setminus C_\delta$ with $|e(\sigma'_m)-\Lambda| <1/m$. Since $B \setminus C_\delta$ is compact, we can extract a subsequence  $\sigma'_{m_k} \to \sigma_0' \in 
B \setminus C_\delta$ for $k\to +\infty$. As $1/m_k \to 0$ as $k\to+\infty$, continuity imposes that $e(\sigma_0') =\Lambda$ and thus $\sigma'_0 \in \Gamma$. This is not possible because $\Gamma \subset C_\delta$ that is disjoint from $B \setminus C_\delta$. We have proved that  every $C_\delta$ covering of $\Gamma$ contains ${\cal V}_{1/m_\delta}$ if $m_\delta \in \bN$ is sufficiently large.\\
Now take  $\sigma_0 \in \Gamma$. Noticing that $B$ is compact and $F$ is continuous thereon, we can use its uniform continuity. 
Given $\epsilon >0$, there is $\delta_\epsilon>0$ such that $|F(\sigma)-F(\sigma')| < \epsilon/2$ if $|\sigma -\sigma'| < \delta_\epsilon$. With this remark, consider a $C_{\delta_\epsilon}$ covering of $\Gamma$.  If $\tau \in C_{\delta_\epsilon}$ we have
$|F(\tau) - F(\sigma_0)|= |F(\tau) - F(\sigma_0^{\tau})|$
where $\sigma_0^\tau \in \Gamma$ is the centre of $B_{\delta_\epsilon}(\sigma_0^\tau)$ which contains $\tau$. The identity above is valid because $F$ is constant in $\Gamma$. Uniform continuity therefore implies that 
$|F(\tau) - F(\sigma_0)|  < \epsilon/2 \quad \mbox{if $\tau \in C_{\delta_\epsilon}$}\:.$
In summary,  given $\epsilon>0$, if $m_\epsilon\in \bN$ is sufficiently large to assure  that ${\cal V}_{1/m_{\epsilon}} \subset C_{\delta_\epsilon}$, we have the thesis
$\sup_{\sigma \in {\cal V}_{1/m_\epsilon}}|F(\sigma) - F(\sigma_0)| \leq \sup_{\sigma \in C_{\delta_\epsilon}}|F(\sigma) - F(\sigma_0)| < \epsilon/2$,
concluding the proof.
\hfill $\Box$\\

\noindent {\bf Proof of Proposition \ref{IMPORTANT}}.
As already observed in the main text, $H_\hbar$ is selfadjoint.  Since $H_\hbar$  is also  bounded below, it is also  the generator of a strongly-continuous one-parameter semigroup   in $L^2(\bR^n, dx)$ as immediately arises from spectral calculus (see, e.g., \cite{Mor}). 

The Weyl and Berezin quantization procedures are equivalent when applied to $e^{-t(p^2+ V)}$ since it is an element of the Schwartz space and 
$|| Q^W_\hbar(f) - Q^B_\hbar(f)|| \to 0 \quad \mbox{for $\hbar \to 0^+$ and a given  $f \in {\cal S}(\bR^n)$}\:.$
Since $e^{-t(p^2+V(x))}$ defines a function of ${\cal S}(\bR^{2n})$ for every given $t>0$,
 the proof consists of establishing the thesis with $Q^B_\hbar$ replaced for $Q^W_\hbar$. 
With this replacement, the thesis immediately arises collecting the theses of  the following pair of lemmata whose hypotheses are fulfilled if $V$ satisfies (V1)-(V3).

\begin{lemma}\label{lemmaA}  Let  $V \in   L^2_{loc}(\bR^n, dx)$ satisfy $\inf_{x\in \bR^n} V(x) = c >-\infty$ for $x\in \bR^n$.
If $e^{-tV} \in {\cal S}(\bR^n)$ for some $t>0$, then
$Q^{W}_\hbar(e^{-t(p^2+ V)})  \to e^{-t V}\quad \mbox{for $\hbar \to 0^+$}$
in strong sense.
\end{lemma}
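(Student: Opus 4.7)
\textbf{Proof plan for Lemma \ref{lemmaA}.}

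My plan is to exploit the product structure $e^{-t(p^2+V(q))} = f_1(q)f_2(p)$ with $f_1(q):=e^{-tV(q)}$ and $f_2(p):=e^{-tp^2}$. Both factors lie in ${\cal S}(\bR^n)$ (the first by the hypothesis $e^{-tV}\in{\cal S}(\bR^n)$, the second because $t>0$), so their product lies in ${\cal S}(\bR^{2n})$. Item (4) of Proposition \ref{QWgen} then yields, for every $\phi\in L^2(\bR^n,dx)$, the Bochner-integral representation
\begin{align*}
Q^W_\hbar(f_1 f_2)\phi = \frac{1}{(2\pi)^{n/2}} \int_{\bR^n} \widehat{f_2}(b)\, e^{ib\cdot P} f_1\!\left(X+\tfrac{\hbar}{2} b I\right)\phi\, db,
\end{align*}
where the integrand has $L^2$-norm controlled by $|\widehat{f_2}(b)|\,\|f_1\|_\infty\|\phi\|_{L^2}$, which is integrable since $\widehat{f_2}\in{\cal S}(\bR^n)$ and $\|f_1\|_\infty\leq e^{-tc}$ (using $V\geq c$).

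The key algebraic identity is $f_2(0)=1$, so by Fourier inversion $\int_{\bR^n}\widehat{f_2}(b)\,db=(2\pi)^{n/2}$; hence
\begin{align*}
e^{-tV(X)}\phi = \frac{1}{(2\pi)^{n/2}} \int_{\bR^n} \widehat{f_2}(b)\, f_1(X)\phi\, db.
\end{align*}
Subtracting the two representations and applying Minkowski's integral inequality gives
\begin{align*}
\|Q^W_\hbar(f_1 f_2)\phi - e^{-tV(X)}\phi\|_{L^2} \leq \frac{1}{(2\pi)^{n/2}} \int_{\bR^n} |\widehat{f_2}(b)|\, \|e^{ibP} f_1(X+\tfrac{\hbar}{2}b)\phi - f_1(X)\phi\|_{L^2}\, db.
\end{align*}

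For each fixed $b$, I decompose the inner difference as $e^{ibP}\bigl(f_1(X+\tfrac{\hbar}{2}b)-f_1(X)\bigr)\phi + (e^{ibP}-I)f_1(X)\phi$. Since $P=-i\hbar\nabla$, the unitary $e^{ibP}$ is translation by $\hbar b$, so the first piece vanishes in $L^2$ by dominated convergence (pointwise continuity of $f_1$ together with the envelope $2\|f_1\|_\infty|\phi|$) and the second vanishes by $L^2$-continuity of translation. The inner norm is uniformly dominated by $2\|f_1\|_\infty\|\phi\|_{L^2}$, independent of $\hbar$; since $|\widehat{f_2}|$ is integrable, a second application of the dominated convergence theorem drives the right-hand side to zero as $\hbar\to 0^+$. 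This holds for every $\phi\in L^2(\bR^n,dx)$, so the strong convergence is obtained directly on the whole Hilbert space with no density argument needed. The only genuine technical step is the pointwise-in-$b$ vanishing of the inner norm; the uniform bound $\|f_1\|_\infty\leq e^{-tc}$ afforded by the hypotheses on $V$ is exactly what makes that argument go through.
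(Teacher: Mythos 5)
Your proof is correct and reaches the same conclusion, but via a genuinely different route from the paper's. Both start from the product representation of item (4) in Proposition \ref{QWgen}, but the paper then derives the explicit kernel formula
$\bigl(Q_\hbar^W(e^{-t(p^2+ V)})\phi\bigr)(x)
= \int_{\bR^n} \frac{\widehat{f}_2(b)}{(2\pi)^{n/2}} e^{-tV\left(x-\frac{\hbar}{2}b\right)} \phi\left(x + \hbar b \right) db$
and proves convergence first for $\phi\in{\cal S}(\bR^n)$, exploiting the Schwartz decay of $\phi$ to build an $\hbar$-independent $L^2$ envelope $\phi_0$ for dominated convergence, and only at the end extends to arbitrary $\psi\in L^2$ through an $\epsilon/2$ density argument that uses the $\hbar$-independent bound (\ref{estimate}). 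You instead keep the formula at the level of a Bochner integral in the ``Fourier variable'' $b$, write $e^{-tV(X)}\phi$ as an integral with the same weight $\widehat{f_2}$ using $f_2(0)=1$, and then apply Minkowski's inequality so that the only analysis needed is (i) pointwise-in-$b$ vanishing of an $L^2$-norm difference, handled via continuity of $f_1$ and $L^2$-continuity of translations, and (ii) a dominated-convergence step in $b$. This buys you two things: you never need a Schwartz-class test vector (the uniform bound $\|f_1\|_\infty\leq e^{-tc}$ together with $\phi\in L^2$ suffices), so the density argument disappears entirely; and the two sources of $\hbar$-dependence — the shift $\tfrac{\hbar}{2}b$ inside $f_1$ and the translation $e^{ibP}$ — are cleanly separated instead of being folded into a single kernel. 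The only point worth being explicit about, since the paper only asserts (\ref{formfin}) as a quadratic-form identity, is that the map $b\mapsto \widehat{f_2}(b)\,e^{ibP}f_1(X+\tfrac{\hbar}{2}bI)\phi$ is strongly continuous with integrable norm, so the Bochner integral exists and reproduces the quadratic form; this is easy but should be stated.
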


\noindent {\em Proof of Lemma \ref{lemmaA}}.   
Notice that, with the said hypotheses,  $ e^{-t(p^2+ V(x))}$ defines a function of ${\cal S}(\bR^{2n})$ for every $t>0$.
Using (\ref{formfin}) with $f(x,p) = e^{-t(p^2+ V(x))} = e^{-tV(x)} e^{-tp^2} = f_1(x) f_2(p)\:,$ we find, choosing $\psi \in L^2(\bR^n,dx)$ and $\phi \in {\cal S}(\bR^n)$,
$$\langle \psi, Q_\hbar^W(e^{-t(p^2+ V)})\phi \rangle =
\frac{1}{(2\pi)^{n/2}} \int_{\bR^n} \widehat{f}_2(b) \left\langle \psi, e^{ib\cdot P} e^{-tV\left(x+ \frac{\hbar}{2}bI\right)} \phi
 \right\rangle\: db$$
$$=\int_{\bR^n} \frac{\widehat{f}_2(b)}{(2\pi)^{n/2}}  \left\langle \psi,  e^{-tV\left(x-\frac{\hbar}{2}bI\right)} e^{ib\cdot P}
 \phi \right\rangle\: db = \int_{\bR^n\times \bR^n} \frac{\widehat{f}_2(b)}{(2\pi)^{n/2}}  \overline{\psi(x)}  e^{-tV\left(x-\frac{\hbar}{2}b\right)} \phi\left(x+\hbar b \right) dx db$$
$$=  \int_{\bR^n}  \overline{\psi(x)}  \int_{\bR^n}  \frac{\widehat{f}_2(b)}{(2\pi)^{n/2}}   e^{-tV\left(x-\frac{\hbar}{2}b\right)} \phi\left(x+\hbar b \right) dx db\:, $$
where we have used the Fubini-Tonelli theorem 
since the function in the argument of the $\bR^n\times \bR^n$ integral is absolutely integrable. 
Indeed, $|\phi(x)| \leq C_k (1+ |x|)^{-k}$ for every $k\in \bN$ and some $C_k\geq 0$,   and we can take 
 $0\leq \hbar \leq \delta$ for some $\delta >0$. As a consequence we can find $\phi_0 \in L^2(\bR^n,dx)$ such that 
 $|\phi\left(x+ \hbar b \right)| \leq |\phi_0(x)|$ when $0\leq \hbar \leq \delta$.  Now observe that $\psi \phi_0 \in L^1(\bR^n, dx)$ since $\psi, \phi_0 \in L^2(\bR^n,dx)$, and $e^{-tV\left(x-\frac{\hbar}{2}b\right)} \leq e^{-tc}$. Summing up, since $\widehat{f}_2$ is $L^1(\bR^n, db)$ by construction,
 the argument of the $\bR^n\times \bR^n$ integral is bounded by the  $L^1(\bR^n\times \bR^n, dx\otimes db)$ map
$|e^{-tc}\widehat{f}_2 \psi \phi_0| $
and thus it is absolutely integrable as well. We have so far established that, for $\phi \in {\cal S}(\bR^n)$ and $\psi \in L^2(\bR^n, dx)$,
\beq \langle \psi, Q_\hbar^W(e^{-t(p^2+ V)})\phi \rangle =  \int_{\bR^n}  \overline{\psi(x)}  \int_{\bR^n}  \frac{\widehat{f}_2(b)}{(2\pi)^{n/2}}   e^{-tV\left(x-\frac{\hbar}{2}b\right)} \phi\left(x+\hbar b \right) dx db\:. \label{Qexp1}\eeq
Since $\psi \in L^2(\bR^n, dx)$ is arbitrary, we conclude that 
\beq \left(Q_\hbar^W(e^{-t(p^2+ V)})\phi\right)(x)
=  \int_{\bR^n}  \frac{\widehat{f}_2(b)}{(2\pi)^{n/2}}   e^{-tV\left(x-\frac{\hbar}{2}b\right)} \phi\left(x + \hbar b \right) db \:.\label{Qexp2}\eeq
Explicitly,
$\widehat{f}_2(b) = \frac{e^{-b^2/(4t^2)}}{(2t)^{n/2}}\:.$
At this juncture, with our choice of $\phi$ and $\hbar$, exploiting the same argument used for applying Fubini-Tonelli theorem, 
Lebesgue's dominated convergences implies that, for $\hbar \to 0$,
$$\left(Q_\hbar^W(e^{-t(p^2+ V)})\phi\right)(x)
\to   \int_{\bR^n}  \frac{\widehat{f}_2(b)}{(2\pi)^{n/2}}   e^{-tV\left(x\right)} \phi\left(x\right) db $$
$$ =   \int_{\bR^n} e^{ib\cdot 0} \frac{\widehat{f}_2(b)}{(2\pi)^{n/2}} db\:   e^{-tV\left(x\right)} \phi\left(x\right)   = f_2(0)  \: e^{-tV\left(x\right)} \phi\left(x\right) =  e^{-tV\left(x\right)} \phi\left(x\right) \:.$$
To conclude, consider $\psi \in L^2(\bR^n, dx)$. We have
$$||(Q^{W}_\hbar(e^{-t(p^2+ V)})  - e^{-t V}) \psi||  \leq 
||(Q^{W}_\hbar(e^{-t(p^2+ V)})  -e^{-t V}) (\psi- \phi)||+ ||(Q^{W}_\hbar(e^{-t(p^2+ V)})  - e^{-t V})\phi|| $$
$$\leq (||(Q^{W}_\hbar(e^{-t(p^2+ V)}) || + ||e^{-t V}||)  ||\psi- \phi|| + ||(Q^{W}_\hbar(e^{-t(p^2+ V)})  - e^{-t V})\phi||$$
$$ \leq (C + 1)  ||\psi- \phi|| +  ||(Q^{W}_\hbar(e^{-t(p^2+ V)})  - e^{-t V})\phi|| \:.$$
where $C$ arises form (\ref{estimate}).
In summary, 
$$||(Q^{W}_\hbar(e^{-t(p^2+ V)})  - e^{-t V}) \psi||  \leq (C + 1)  ||\psi- \phi|| +  ||(Q^{W}_\hbar(e^{-t(p^2+ V)})  - e^{-t V})\phi|| \:.$$
The found inequality permits us to prove the thesis of the lemma for the said $\psi$. Indeed, 
for every $\epsilon>0$ we can first take $\phi \in {\cal S}(\bR^n)$ such that $ (C + 1)  ||\psi- \phi|| < \epsilon/2$. With that choice of $\phi$,
we can next take $\delta>0$ such that $0\leq \hbar < \delta $ implies, for the first part of the proof, $ ||(Q^{W}_\hbar(e^{-t(p^2+ V)})  - e^{-t V})\phi|| < \epsilon/2$. Hence
$||(Q^{W}_\hbar(e^{-t(p^2+ V)})  - e^{-t V})\psi|| < \epsilon$ if $0\leq \hbar < \delta $ that is the thesis.
$\hfill \Box$

\begin{lemma}\label{lemmaB} If $V\in  L^2_{loc}(\bR^n, dx)$ satisfies $\inf_{x\in \bR^n}V(x) = c >-\infty$ for $x\in \bR^n$,  $\psi \in L^2(\bR^n,dx)$, and $T>0$, then
$\sup_{t\in[0,T]}||(e^{-t (\overline{-\hbar^2 \Delta + V})}  -e^{-tV})\psi|| \to 0\quad \mbox{for $\hbar \to 0$}\:,$
where $-\hbar^2 \Delta + V$ is defined on $C_c^\infty(\bR^n)$. 
\end{lemma}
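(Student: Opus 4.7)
The plan is to derive this from the Trotter--Kato theorem on strongly continuous semigroups: for self-adjoint operators bounded below by a common constant, strong resolvent convergence implies strong convergence of the associated semigroups, uniformly in $t$ on any compact subinterval of $[0,\infty)$. Let $M_V$ denote the maximal multiplication operator by $V$ on $L^2(\bR^n,dx)$, with domain $D(M_V)=\{\phi\in L^2:V\phi\in L^2\}$; both $H_\hbar=\overline{-\hbar^2\Delta+V}$ and $M_V$ are self-adjoint and bounded below by $c$, so the semigroups $\{e^{-tH_\hbar}\}_{t\geq 0}$ and $\{e^{-tM_V}\}_{t\geq 0}$ are strongly continuous with operator norm bounded by $e^{T|c|}$ on $[0,T]$. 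Moreover $e^{-tM_V}$ coincides with multiplication by $e^{-tV}$ via the functional calculus of $M_V$.

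First I would choose $D:=C_c^\infty(\bR^n)$ as a common core. It is a core for $H_\hbar$ by the very definition of the closure, and $D\subset D(M_V)$ because $V\in L^2_{loc}$ combined with compactness of $\mathrm{supp}\,\phi$ implies $V\phi\in L^2$ for any $\phi\in C_c^\infty$. On the core one has the pointwise convergence
\begin{equation*}
\|H_\hbar\phi-M_V\phi\|_{L^2}\;=\;\hbar^2\,\|\Delta\phi\|_{L^2}\;\longrightarrow\;0\qquad\text{as }\hbar\to 0^+.
\end{equation*}
Granting that $D$ is also a core for $M_V$ (addressed below), Reed--Simon Theorem~VIII.25(a) delivers strong resolvent convergence $H_\hbar\to M_V$ as $\hbar\to 0^+$, and Trotter--Kato then yields $e^{-tH_\hbar}\psi\to e^{-tV}\psi$ in $L^2(\bR^n,dx)$ for every $\psi\in L^2(\bR^n,dx)$, uniformly for $t\in[0,T]$. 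This is exactly the stated conclusion.

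The only non-trivial step is verifying that $C_c^\infty(\bR^n)$ is indeed a core for $M_V$ under the sole hypotheses $V\in L^2_{loc}$ and $V\geq c$, and this is the main obstacle because $V$ is not assumed locally bounded, so one cannot simply mollify a generic $f\in D(M_V)$ and retain $M_V$-control. My plan is a diagonal three-step approximation: given $f\in D(M_V)$, successively introduce (i) a spatial cutoff $f_k:=\chi_k f$ with $\chi_k\in C_c^\infty$ equal to $1$ on $B_k$, for which dominated convergence gives $f_k\to f$ and $Vf_k\to Vf$ in $L^2$; (ii) an amplitude truncation $f_{k,M}:=f_k\,\mathbf{1}_{\{|f_k|\le M\}}$, again DCT-convergent in the graph norm as $M\to\infty$; and (iii) a mollification $\rho_\epsilon*f_{k,M}\in C_c^\infty$, for which the uniform bound $\|\rho_\epsilon*f_{k,M}\|_\infty\le M$ lets $M\,|V|\,\mathbf{1}_{B_{k+1}}\in L^2$ dominate $V(\rho_\epsilon*f_{k,M})$ pointwise, so DCT applied once more gives graph-norm convergence as $\epsilon\to 0^+$. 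A standard diagonal extraction in $(k,M,\epsilon)$ then produces the required $C_c^\infty$ approximants in the graph norm of $M_V$, completing the core argument and hence the proof.
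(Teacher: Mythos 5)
Your argument is correct, and the overall strategy matches the paper's: reduce to a semigroup-convergence theorem applied with $C_c^\infty(\bR^n)$ as a common core, on which $\|H_\hbar\phi - V\phi\| = \hbar^2\|\Delta\phi\| \to 0$ (you chain Reed--Simon~VIII.25(a) with Trotter--Kato; the paper cites Davies, Corollary~3.18, which packages both steps). The genuine difference lies in the one nontrivial sub-step, namely that $C_c^\infty(\bR^n)$ is a core for the maximal multiplication operator $M_V$. You prove it by an explicit three-step graph-norm approximation (spatial cutoff, amplitude truncation, mollification), whereas the paper shows $V|_{C_c^\infty}$ is essentially self-adjoint by checking that its defect spaces vanish: if $(V|_{C_c^\infty})^*\psi = \pm i\psi$, then $(V\pm i)\psi = 0$ as a distribution and hence $\psi = 0$ since $V\pm i$ never vanishes. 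The paper's route is shorter and more abstract; yours is longer but fully constructive and avoids deficiency-index machinery. You also sidestep the paper's preliminary reduction to $\min V = 0$ (contraction semigroups) by carrying the uniform bound $e^{T|c|}$ directly, which is a minor simplification. One detail to pin down in your step (iii): passing $\epsilon \to 0^+$ through $V(\cdot)$ by dominated convergence requires a.e.\ (not merely $L^2$) convergence of $\rho_\epsilon * f_{k,M}$ to $f_{k,M}$; this holds either by the Lebesgue differentiation theorem applied to the bounded compactly supported $f_{k,M}$, or by a routine sub-subsequence argument from the $L^2$ convergence.
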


\noindent {\em Proof of Lemma \ref{lemmaB}}. Let us first assume $c=0$. If $\lambda >0$,
from 
Theorem X.28 \cite{RS2}, we have that $(\overline{-\lambda \Delta + V})$ is selfadjoint and positive and thus defines the generator of  a  semigroup.
We take advantage of Corollary 3.18 in \cite{Davies}.  Now the relevant  semigroups are
$T^\lambda_t := e^{-t (\overline{-\lambda \Delta + V})}$ and 
$T_t:= e^{-tV}$ where $\lambda = \hbar^2$. Evidently 
$||T_t^\lambda||$, $||T_t|| \leq 1$ since the generators are positive. Next consider $\cD := C_c^\infty(\bR^n)$.
It easy to prove that this dense subspace is a core of $V$ viewed as a selfadjoint multiplicative operator (it is sufficient to observe that $V|_\cD$ is symmetric and   the defect spaces of $V|_{\cD}$ are trivial\footnote{$(V|_\cD)^* \psi = \pm i\psi$ implies in particular that 
$\int_{\bR^n} (V(x) \pm i)g(x)\psi(x) dx = 0$ for every $g \in \cD = C_c^\infty(\bR^n)$ and thus  $(V(x) \pm i)\psi(x) =0$ a.e.. Since  $(V(x) \pm i)\neq 0$, it must be $\psi=0$ a.e..}). With our definitions, if $f \in \cD$, then $f \in D(\overline{-\lambda \Delta + V})$ and
$$\lim_{\lambda \to 0^+} \overline{-\lambda \Delta + V} f =
\lim_{\lambda \to 0^+}( -\lambda \Delta f + V f) = Vf,$$ holds trivially. All that means that the hypotheses of of Corollary 3.18 in \cite{Davies} are true and thus the thesis is valid, that is nothing but the thesis of our lemma when $c=0$.
If $c \neq 0$, let us define ${V}_0 := V -c$. The thesis is true if replacing $V$ for ${V}_0$ using the above proof. The thesis is also valid for $V$ just because
$e^{-t (\overline{-\hbar^2 \Delta + V})}   = e^{-tc}e^{-t (\overline{-\hbar^2 \Delta + {V}_0})}  $
so that 
$$\sup_{t\in[0,T]}||(e^{-t (\overline{-\hbar^2 \Delta + V})}  -e^{-tV})\psi|| = 
\sup_{t\in[0,T]}|e^{-ct}|\:|(e^{-t (\overline{-\hbar^2 \Delta + {V}_0})}  -e^{-t{\cal V}_0})\psi|| \to 0,$$
because $|e^{-ct}| < k <+\infty$ for $t\in [0,T]$ nomatter the sign of $c$.
$\hfill \Box$\\
$\Box$\\

\noindent {\bf Proof of proposition \ref{IMPORTANT2}}. Let us first assume $c=0$.
According  to the discussion before proposition \ref{IMPORTANT2},
it is sufficient to prove (1) since (2) immediately arises form the fact that  the norm of $Q_\hbar^B(e^{-t(p^2 + V)})$ coincides to $\lambda_\hbar^{(0)}$. 
Furthermore, since $||Q_\hbar^B(e^{-t(p^2 + V)})|| \to ||e^{-t(p^2 + V)}||_\infty = 1$ for $\hbar \to 0^+$ (that is nothing but the Rieffel condition which is valid 
for theorem \ref{BerezinC0} since $e^{-t(p^2 + V)} \in C_0(\bR^n)$), to prove the whole proposition it is enough to demonstrate that
$E_\hbar^{(0)} \to 0^+$ as $\hbar \to 0^+$. Let us prove this fact. The spectral decomposition of the operator $H_\hbar \geq 0$ 
with pure point spectrum
implies that
$E_\hbar^{(0)} = \inf_{\psi \in D(H_\hbar)\:, ||\psi||=1} \langle \psi, H_\hbar \psi \rangle\:.
$
Since $C_c^\infty(\bR^{2n})$ is a core for $H_\hbar$, it is not difficult to prove that
$E_\hbar^{(0)} = \inf_{\psi \in C_c^\infty(\bR^{2n})\:, ||\psi||=1} \langle \psi, H_\hbar \psi \rangle\:.
$
Let us prove that 
\beq
 \inf_{\psi \in C_c^\infty(\bR^{2n})\:, ||\psi||=1} \langle \psi, H_\hbar \psi \rangle \to 0 \quad \mbox{if $\hbar \to 0^+$}
\label{liminf},\eeq
to conclude the whole proof.  From the definition of $H_\hbar$ it immediately arises that, for $\psi \in C_c^\infty(\bR^n)$
and $\hbar \geq  \hbar'>0$,
$ \langle \psi, H_\hbar \psi \rangle = -\hbar^2 \langle \psi, \Delta \psi \rangle  + \langle \psi, V \psi \rangle \geq
 -\hbar'^2 \langle \psi, \Delta \psi \rangle  + \langle \psi, V \psi \rangle  =   \langle \psi, H_{\hbar'} \psi \rangle \geq 0\:.$
As a consequence 
$$\exists \lim_{\hbar \to 0^+} \inf_{\psi \in C_c^\infty(\bR^{2n})\:, ||\psi||=1} \langle \psi, H_\hbar \psi \rangle 
=\inf_{\hbar>0} \inf_{\psi \in C_c^\infty(\bR^{2n})\:, ||\psi||=1} \langle \psi, H_\hbar \psi \rangle \geq 0\:.  $$
Therefore (\ref{liminf}) is true if, for every $\epsilon >0$, there is $\psi_\epsilon \in  C_c^\infty(\bR^{2n})$ with $||\psi_\epsilon||=1$ and such that 
\begin{align}
\langle \psi_\epsilon, H_\hbar \psi_\epsilon \rangle =  -\hbar^2 \langle \psi_\epsilon, \Delta  \psi_\epsilon \rangle  + \langle  \psi_\epsilon, V \psi_\epsilon\rangle < \epsilon, \label{conditionif}
\end{align}
if $\hbar>0$ is sufficiently small.  It is not difficult to find this $\psi_\epsilon$. Let $x_0 \in \bR^n$ be such that $V(x_0)=0 $ according to (V2), and, since $V$ is continuous,  
 let $B\ni x_0$ be an open ball of finite radius
such that $|V(x)| < \epsilon/2$  if $x\in B$. Finally define $\psi_\epsilon \in  C_c^\infty(\bR^{2n})$ with $||\psi_\epsilon||=1$ and $supp(\psi_\epsilon) \subset B$.
By construction,
$0 \leq  -\hbar^2 \langle \psi_\epsilon, \Delta  \psi_\epsilon \rangle  + \langle  \psi_\epsilon, V \psi_\epsilon\rangle <  -\hbar^2 \langle \psi_\epsilon, \Delta  \psi_\epsilon \rangle
+ \epsilon/2 \:.$
To conclude it is sufficient to choose $\hbar >0$ such that, for the said $\psi_\epsilon$, we have
$ -\hbar^2 \langle \psi_\epsilon, \Delta  \psi_\epsilon \rangle < \epsilon/2\:.$
Therefore, \eqref{conditionif} holds and we have established (\ref{liminf}). This proves the case $c=0$.
The case $c\neq 0$ immediately arises from the case $c=0$ when defining ${V}_0 := V -c$, by notincing that the thesis is therefore valid for $V$ replaced by ${V}_0$ and observing that 
$e^{-tH_\hbar} = e^{-tc} e^{-t(\overline{-\hbar \Delta +{V}_0})}$ and $Q_\hbar^B(e^{-t(p^2+V)})= e^{-ct}  Q_\hbar^B(e^{-t(p^2+{V}_0)})$.
\hfill $\Box$\\

\noindent {\bf Proof of Proposition \ref{propDIM}}. 
Concerning $e^{-tH_\hbar}$ the thesis is just the statement of Theorem XIII.47 in \cite{RS4} since (V1)-(V3)   imply the validity of the hypotheses of that theorem.  Regarding $Q_\hbar^B(e^{-t(p^2+V)})$, the thesis arises from Theorem XIII.43 of \cite{RS4} if we prove that $Q_\hbar^B(e^{-t(p^2+V)})\in L^2(\bR^n, dx)$ is (a) positive preserving and  (b) ergodic.
To this end we take advantage of (\ref{Delta2n}).
In particular we have that, if henceforth
$e(x,p) := e^{-t(p^2+V)}$, we then have that
$${\cal E}(x,p)  :=\left(e^{-\hbar \Delta_{2n}/4} e\right)(x,p)  = 
{\cal E}_1(x) {\cal E}_2(p)\:, \quad \mbox{where}$$
$$
{\cal E}_1(x) := \int_{\bR^n}  \frac{e^{-(x-y)^2/\hbar} e^{-tV(y)}}{(\pi \hbar)^{n/2}} dy\:, \quad
{\cal E}_2(p) := \int_{\bR^n}  \frac{e^{-(p-r)^2/\hbar} e^{-tr^2} }{(\pi \hbar)^{n/2}}dr\:.$$
Evidently ${\cal E}_1(x), {\cal E}_2(p)>0$ for every $x,p \in \bR^{n}$ since the integrands of the integrals above are continuous and strictly positive.  Furthermore 
${\cal E} \in {\cal S}(\bR^{2n})$ since it is the product of functions in $ {\cal S}(\bR^{n})$ in the variable $x$ and $p$ respectively.
We can exploit (\ref{Qexp2}) obtaining
$$ \left(Q_\hbar^B(e^{-t(p^2+ V)})\phi\right)(x)
=  \int_{\bR^n}  \frac{\widehat{{\cal E}}_2(b)}{(2\pi)^{n/2}}   {\cal E}_1\left(x-\frac{\hbar}{2}b\right) \phi\left(x + \hbar b \right) db\:.$$
Observe that $\widehat{{\cal E}}_2(b) >0$ for every $b \in \bR^n$ since, up to positive factors,  it is the product of two Gaussian functions, because 
 ${\cal E}_2$ is the convolution of such pair of functions by definition.  We conclude that, if $\phi(x) \geq 0$ a.e.
with $\phi\in L^2(\bR^n,dx)$, then 
$ \left(Q_\hbar^B(e^{-t(p^2+ V(x))})\phi\right)(x) \geq 0$ a.e..  In summary, $Q_\hbar^B(e^{-t(p^2+ V)})$ is positive preserving provided  that  $||\phi|| \neq 0$ for $\phi(x) \geq 0$ a.e., implies $Q_\hbar^B(e^{-t(p^2+ V(x))})\phi \neq 0$.
We prove this property simultaneously to the  ergodicity property just by establishing that, for $\psi(x), \phi(x) \geq 0$ a.e. such that $ \psi, \phi \in L^2(\bR^n, dx)\setminus\{ 0\}$,  then $\langle \psi, Q_\hbar^B(e^{-t(p^2+ V(x))})\phi \rangle \neq 0$.
Taking advantage of (\ref{Qexp1}), we find for $\phi, \psi \in L^2(\bR^n,dx)$ with $\psi(x), \phi(x) \geq 0$ a.e.,  and $||\psi||, ||\phi|| \neq 0$,
$$ \langle \psi, Q_\hbar^B(e^{-t(p^2+ V)})\phi \rangle =\int_{\bR^{2n}} \frac{ \widehat{\cal E}_2(b) }{(2\pi)^{n/2}}  {\cal E}_1\left(x-\frac{\hbar}{2}b\right)  \psi(x) \phi\left(x+\hbar b \right) dx db\:. $$
If, in addition to $\psi(x), \phi(x) \geq 0$ a.e.,  it holds $||\psi||,||\psi|| \neq 0$,  then  there must exist $k_1,k_2 >0$ such that, defining  the measurable sets
$E_1:= (\phi^2)^{-1}(k_1,+\infty)$ and  $E_2:= (\psi^2)^{-1}(k_2,+\infty)$, it holds $m(E_1), m(E_2) >0 $, where $m$  is the Lebesgue measure on $\bR^n$.  Let us change coordinates on $\bR^n \times \bR^n$ from $x,b$ to $x, z:= x+\hbar b$. With this transformation, 
$$ \langle \psi, Q_\hbar^B(e^{-t(p^2+ V)})\phi \rangle  =\frac{1}{(2\pi \hbar^2)^{n/2}} \int_{\bR^{2n}}   \widehat{\cal E}_2\left(\frac{z-x}{\hbar}\right)   {\cal E}_1\left(\frac{3x-z}{2}\right)  \psi(x) \phi\left(z \right) dx dz$$
$$\geq \frac{\sqrt{k_1k_2}}{(2\pi \hbar^2)^{n/2}} \int_{E_1\times E_2}    \widehat{\cal E}_2\left(\frac{z-x}{\hbar}\right)    {\cal E}_1\left(\frac{3x-z}{2}\right) dx dz>0 $$
where the last integral is well defined since, up to a non-singular linear change of coordinates,  the integrand is the product of  two ${\cal S}(\bR^n)$ functions which is therefore a 
 ${\cal S}(\bR^{2n})$. The  integral is eventually strictly positive because the  integrand is everywhere strictly positive and $E_1\times E_2$ has strictly positive measure\footnote{If $m(F)>0$ and $g: F \to \bR$ is measurable and strictly positive, then $F = \cup_{n\in \bN} g^{-1}(1/n,+\infty)$. It must be $m(g^{-1}(1/n_0,+\infty))= m_0 >0$ for some $n_0$ otherwise $m(F)=0$ for internal regularity. Hence $\int_E g dx \geq  \int_{g^{-1}(1/n_0,+\infty)} g dx \geq  m_0/n_0>0$.}: $m_{\bR^{2n}}(E_1\times E_2)=
m(E_1)m(E_2) >0$.
\hfill $\Box$\\

\noindent {\bf Proof of Lemma \ref{lemmaSCH1}}.
Take $\varepsilon>0$. Given $\phi\in C^\infty_c(\mathbb{R}^{2n})$, and using that $H_\hbar$ and in particular $V$ are positive, for any $\hbar>0$ we observe
\begin{align*}
\langle\phi,(H_\hbar + \varepsilon I)\phi\rangle=-\hbar^2\langle\phi,\Delta\phi\rangle+\langle\phi,(V+ \varepsilon I)\phi\rangle\geq  \langle\phi,(V+ \varepsilon I)\phi\rangle\geq 0.
\end{align*}
Since $\phi \in D(A)$ implies $\phi \in D(\sqrt{A})$ for a selfadjoint operator $A\geq 0$, we conclude that, if $\phi\in C_c^\infty(\bR^b)$
then
\beq
|| \sqrt{H_\hbar +\varepsilon I} \phi ||^2 \geq ||\sqrt{V+ \varepsilon}\phi||^2\:. \label{phin}
\eeq
Since $C_c^\infty(\bR^n)$ is a core for $H_\hbar$,  it is also a core for $H_\hbar + \varepsilon I$. In particular, if $\psi \in D(H)$, there is a sequence $C_c^\infty(\bR^n) \ni \phi_n \to \psi$ such that $(H_\hbar + \varepsilon I) \phi_n \to (H_\hbar + \varepsilon I)  \psi$. Applying the bounded operator 
$(H_\hbar + \varepsilon I)^{-1/2}$ on both sides we also have 
\beq
\sqrt{H_\hbar + \varepsilon I} \phi_n \to \sqrt{H_\hbar + \varepsilon I} \psi\:. 
\eeq
Since 
$|| \sqrt{H_\hbar +\varepsilon I} (\phi_n - \phi_m) ||^2 \geq ||\sqrt{V+ \varepsilon}(\phi_n - \phi_m)||^2$,
we conclude that the sequence of the $\sqrt{V+ \varepsilon}\phi_n$ is Cauchy and thus it must converge in the Hilbert space when $\phi_n \to \psi$. As  
$\sqrt{V+ \varepsilon}$  is closed (it being selfadjoint), we have that 
$\sqrt{V+ \varepsilon}\phi_n \to \sqrt{V+ \varepsilon}\psi$ so that
$D(H_\hbar + \varepsilon I) \subset D(\sqrt{V+ \varepsilon})$.
From (\ref{phin}),
\beq
|| \sqrt{H_\hbar +\varepsilon I} \psi ||^2 \geq ||\sqrt{V+ \varepsilon}\psi||^2.
\eeq
Specializing the result to the normalized ground state eigenvector $\psi_\hbar^{(0)}$ of $H_\hbar$, we have
$$  \lambda_\hbar^{(0)} + \varepsilon= || \sqrt{H_\hbar +\varepsilon I} \psi_\hbar^{(0)} ||^2  \geq ||\sqrt{V+ \varepsilon}\psi_\hbar^{(0)}||^2   = \int_{[0 +\infty)} s^2  d\mu^\varepsilon_{\psi_{\hbar}^{(0)}}(s)\:,$$ 
where $\mu^\varepsilon_{\psi_{\hbar}^{(0)}}$ is the spectral probability  measure $\mu^\varepsilon_{\psi_{\hbar}^{(0)}}(F) := \langle \psi_{\hbar}^{(0)}, P_F^{\sqrt{(V+ \varepsilon I)}}\psi_\hbar^{(0)} \rangle$ defined on $[0, +\infty)$. 
The function $e^{-tx}$ is convex on $[0, +\infty)$.
We can therefore apply {\em Jensen's inequality} for probability measures, obtaining 
$$e^{-t(\lambda_\hbar^{(0)} +\varepsilon)} \leq e^{-t \int_{[0, +\infty)} s^2 d\mu^\varepsilon_{\psi_{\hbar}^{(0)}}(s)} \leq   \int_{[0, +\infty)} e^{-ts^2} d\mu^\varepsilon_{\psi_{\hbar}^{(0)}}(s) = \langle \psi_\hbar^{(0)}, e^{-t (V+\varepsilon I)}\psi_\hbar^{(0)}\rangle.$$
As a consequence 
$e^{-t(\lambda_\hbar^{(0)} +\varepsilon)} \leq  e^{-t\varepsilon} \langle \psi_\hbar^{(0)}, e^{-t V}\psi_\hbar^{(0)}\rangle \leq 1\:,$
for every $\varepsilon>0$, so that
$e^{-t\lambda_\hbar^{(0)}} \leq  \langle \psi_\hbar^{(0)}, e^{-t V}\psi_\hbar^{(0)}\rangle \leq 1$.
Using the fact that $\lim_{\hbar\to 0}\lambda_\hbar^{(0)}=0$, we obtain the former identity in (\ref{convJensen}) for every given $t>0$,
$
 \lim_{\hbar\to 0}\langle\psi_\hbar^{(0)},e^{-tV}\psi_\hbar^{(0)}\rangle=1\:. $
With a strictly analogous procedure we also find the latter identity in (\ref{convJensen})  for every given $t>0$.
$\hfill \Box$\\

\noindent {\bf Proof of Lemma \ref{fclaim2}}. The proof is very similar to the one of  lemma \ref{fclaim}.
If $e(q,p):= e^{-t(p^2+ V)}$ for some $t>0$, let us define $\Gamma := e^{-1}(\{\max e\}) = h^{-1}(\{\min h\})$.  This set satisfies 
$\Gamma = \{g\sigma_0\:|\: g\in G\}$ for every chosen $\sigma_0 \in \Gamma$ due to $G$-invariance of $h$ and transitivity of $G$
on $\Gamma$ (hypothesis (b)), and is compact as said in the proof of lemma \ref{fclaim}.
 If $\delta>0$ a $\delta$-{\em covering}  of $\Gamma$ is defined as in the proof of lemma \ref{fclaim}.
Since $\Gamma$ is compact, there is a closed ball $B$ centred at the origin of finite positive radius such that $\Gamma$ is completely contained in the interior of $B$. All other balls we shall consider in this proof will be assumed to be contained in the interior of $B$ as well. 
Since $|e(\sigma)| \to 0$ for $|\sigma| \to +\infty$ and $\max e \neq 0$, we can always fix the radius of $B$ such that $|\max e-e(\sigma)|> \eta $, for some $\eta>0$,  if $\sigma \not \in B$.
The  next step consists of  proving that, given a $\delta$-covering $C_\delta$ of $\Gamma$, with $\delta>0$
arbitrarily taken,  there exists $m_\delta\in \bN$ such that 
${\cal U}_{1/m_\delta} \subset C_\delta$, where ${\cal U}_\delta$ is defined as in (\ref{NBH}), so that (\ref{Ct}) is in particular valid.
The proof is the same as in the proof of  lemma \ref{fclaim} with $\Lambda$ replaced for $\max e$ and  using  (\ref{Ct}) where necessary.
Now take  $\sigma_0 \in \Gamma$. Noticing that $B$ is compact and $F$ is continuous thereon, we can use its uniform continuity. 
Given $\epsilon >0$, there is $\delta_\epsilon>0$ such that $|F(\sigma)-F(\sigma')| < \epsilon/2$ if $|\sigma -\sigma'| < \delta_\epsilon$. With this remark, consider a $C_{\delta_\epsilon}$ covering of $\Gamma$.  If $\tau \in C_{\delta_\epsilon}$ we have
$|F(\tau) - F(\sigma_0)|= |F(\tau) - F(\sigma_0^{\tau})|$
where $\sigma_0^\tau \in \Gamma$ is the centre of $B_{\delta_\epsilon}(\sigma_0^\tau)$ which contains $\tau$. The identity above is valid because $F$ is constant in $\Gamma$. Uniform continuity therefore implies that 
$|F(\tau) - F(\sigma_0)|  < \epsilon/2 \quad \mbox{if $\tau \in C_{\delta_\epsilon}$}\:.$
In summary,  given $\epsilon>0$, if $m_\epsilon\in \bN$ is sufficiently large to assure  that ${\cal U}_{1/m_{\epsilon}} \subset C_{\delta_\epsilon}$, we have the thesis
$\sup_{\sigma \in {\cal U}_{1/m_\epsilon}}|F(\sigma) - F(\sigma_0)| \leq \sup_{\sigma \in C_{\delta_\epsilon}}|F(\sigma) - F(\sigma_0)| < \epsilon/2$,
concluding the proof.
\hfill $\Box$\\

\noindent {\bf Proof of Proposition  \ref{compact}}. The group composition law permits us to prove the thesis just for $u=0$ without loss of generality.  A compact operator $A$ is the norm-operator limit of  $A_N = \sum_{k=1}^N \langle \psi_k \:, \cdot \rangle \phi_k$ for suitable $\psi_k$ and $\phi_k$.  
Writing $A= A_N + R_N$, where $||R_N|| \to 0$ as $N\to +\infty$,  the following chain of inequalities hold
$||U_t A U_{-t} -A|| \leq ||U_{-t} A_N U_{t} -A_N|| + ||U_{-t}  R_N U_{t} -R_N|| \leq ||U_{-t}  A_N U_{t} -A_N|| + ||U_{-t}  R_N U_{t}|| + ||R_N|| 
= ||U_{-t}  A_N U_{t} -A_N|| + 2||R_N|| $. As
 $||R_N|| \to 0$ as $N\to +\infty$ independently of $t$,
to conclude it is sufficient to prove that, for finite $N$,  $||U_{-t} A_N U_{t} -A_N|| \to 0$ for $t\to 0$.
In turn,  exploiting the triangular inequality $N$ times, the thesis is valid provided it holds for $N=1$, i.e., for $A_{\psi,\phi}=  \langle \psi \:, \cdot \rangle \phi$.
Per direct inspection $|| \langle \psi \:, \cdot \rangle \phi|| \leq ||\psi||\:||\phi||$ so that, defining $\psi_t := U_t\psi$ and $\phi_t:= U_t\phi$, we have 
$||U_{-t}  A_{\psi, \phi} U_{t} -A_{\psi,\phi}|| = ||\langle \psi_{-t} \:, \cdot \rangle \phi_{-t} - 
\langle \psi \:, \cdot \rangle \phi ||  \leq ||\langle \psi_{-t} \:, \cdot \rangle( \phi_{-t} - \phi)|| + 
||\langle \psi_{-t}-\psi \:, \cdot \rangle \phi || \leq ||\psi|| \:||\phi_{-t}-\phi|| + ||\psi-\psi_{-t}||\: ||\phi|| \to 0$ for $t\to 0$.
\hfill $\Box$.\\

\noindent {\bf Proof of Proposition  \ref{propNh}}.  If the regular Borel probability measure $\mu$ on $X$ has support in $N_h$, then it is invariant under the flow of $h$ and in its GNS representation the action of that flow is trivial.  Therefore  its selfadjoint generator is the zero operator which fulfills the thesis. Suppose {\em vice versa} that a  regular Borel probability measure $\mu$ on $X$  defines a state invariant under the Hamiltonian flow, then $\mu$ itself must be $\phi^{(h)}$-invariant.  Passing to the GNS representation, the condition of $\phi^{(h)}$ invariance  implies
$\langle \Psi_\mu, \{h, \pi_\omega(a) \} \Psi_\mu\rangle = 0$ for $a \in C^\infty_c(X)$, and the condition of 
 positive self-adjoint generator implies 
$-i\langle \Psi_\mu, \pi_\mu(a) \{h, \pi_\omega(a) \} \Psi_\mu\rangle \geq 0$ for $a \in C^\infty_c(X)$. In other words,
$\int_{X} \{h, a \} d\mu =0$ and $-i\int_{X} \overline{a}\{h, a\} d\mu  \geq 0$ must be valid for $a \in C^\infty_c(X)$.
Decomposing $a= f+ig$ with $f$ and $g$ real valued and using the former condition, the latter yields 
$\int_{X} f\{h, g \} d\mu  - \int_{X} g\{h, f \} d\mu \geq 0$ for all real valued $f,g \in C_c^\infty(X)$. Replacing $f$ with $-f$, we conclude that actually the identity holds $\int_{X} f\{h, g \} d\mu  - \int_{X} g\{h, f \} d\mu = 0$. Noticing that $g\{h, f \} = \{h, gf \}- f\{h,g \}$ and using again $\int_{X} \{h, a \} d\mu =0$, we conclude that $\int_{X} f\{h, g \} d\mu =0$ must be valid for every real valued $f,g \in C_c^\infty(X)$. Taking, e.g., $g\in C_c^\infty(X)$ such that
$g(q,p)=q^1$ in an open set including the support of $f$, we have that  $\int_{X} f \frac{\partial h}{\partial q^1} d\mu =0$. In general 
$\int_{X} f  (dh)_k d\mu =0$ for every $k=1,\ldots, 2n$ and $f\in C_c^\infty(X)$ real valued.  If, for a given $k$, $(dh)_k(\sigma_0) >0$ (the case $<0$ is analogous), then $(dh)_k(\sigma_0) >c>0$ in an open neighbourhood $O \ni \sigma_0$. Taking $f \in C_c^\infty(X)$ supported in $O$ with $f\geq 0$, we have $0=\int_{X} f  (dh)_k d\mu > c \int_O f d\mu \geq 0$, so that $ \int_O f d\mu = 0$. 
Since the functions of $C_c^\infty(O)$ are uniformly dense in $C_c(O)$, the result extends to $f \in C_c(O)$.
Arbitrariness of $f$ and the uniqueness part of Riesz' theorem for positive measures  implies that $\mu(O)=0$. In summary, the points $\sigma$ with $(dh)_k(\sigma) \neq 0$ stay outside the support of $\mu$. That is the thesis.
\hfill $\Box$.

\end{document}